\tikzstyle arrowstyle=[scale=1]
\tikzstyle directed=[postaction={decorate,decoration={markings,
    mark=at position .5 with {\arrow[arrowstyle]{stealth}}}}]
\tikzstyle reverse directed=[postaction={decorate,decoration={markings,
    mark=at position .5 with {\arrowreversed[arrowstyle]{stealth};}}}]
\definecolor{redb}{rgb}{0.700, 0.000, 0.300}
\DeclareMathAlphabet\mathbfcal{OMS}{cmsy}{b}{n}
\newtheorem{theorem}{Theorem}[section]
\newtheorem{prop}{Proposition}[section]
\newtheorem{lemma}[theorem]{Lemma}
\DeclareMathOperator*{\Res}{\mbox{Res}}
\DeclareMathOperator*{\ResT}{\mbox{\emph{Res}}}
\DeclareMathOperator*{\Lor}{\mathcal{L}}
\newcommand{\xdownarrow}[1]{%
  {\left\downarrow\vbox to #1{}\right.\kern-\nulldelimiterspace}
}
\title{Positive Geometries and Differential Forms with Non-Logarithmic Singularities I}
\author[1]{Paolo Benincasa,}\emailAdd{pablowellinhouse@anche.no}
\author[2]{and Matteo Parisi}\emailAdd{parisi@maths.ox.ac.uk}
\affiliation[1]{Niels Bohr International Academy, Niels Bohr Institute,\\ University of Copenhagen, \\ Blegdamsvej 17, 2100 København, Denmark }
\affiliation[2]{Mathematical Institute, University of Oxford,\\ Andrew Wiles Building, Radcliffe Observatory Quarter,\\ Woodstock Road, Oxford, OX2 6GG, U.K.}
\abstract{Positive geometries encode the physics of scattering amplitudes in flat space-time and the wavefunction of the universe in cosmology for a large class of models. Their unique canonical forms, providing such quantum mechanical observables, are characterised by having only logarithmic singularities along all the boundaries of the positive geometry. However, physical observables have logarithmic singularities just for a subset of theories. Thus, it becomes crucial to understand whether a similar paradigm can underlie their structure in more general cases. In this paper we start a systematic investigation of a geometric-combinatorial characterisation of differential forms with non-logarithmic singularities, focusing on projective polytopes and related meromorphic forms with multiple poles. We introduce the notions of {\it covariant forms} and {\it covariant pairings}. Covariant forms have poles only along the boundaries of the given polytope; moreover, their leading Laurent coefficients along any of the boundaries are still covariant forms on the specific boundary. Whereas meromorphic forms in covariant pairing with a polytope are associated to a specific (signed) triangulation, in which poles on spurious boundaries do not cancel completely, but their order is lowered. These meromorphic forms can be fully characterised if the polytope they are associated to is viewed as the restriction of a higher dimensional one onto a hyperplane. The canonical form of the latter can be mapped into a covariant form or a form in covariant pairing via a {\it covariant restriction}. We show how the geometry of the higher dimensional polytope determines the structure of these differential forms. Finally, we discuss how these notions are related to Jeffrey-Kirwan residues and cosmological polytopes.}
\begin{document}

\maketitle

%%%%%%%%%%%%%%%%%%%%%%%%%%%%%%%%%%%%%%%%
%%%%%%%%%%%%%%%%%%%%%%%%%%%%%%%%%%%%%%%%
%%%%%%%%%%%%%%%%%%%%%%%%%%%%%%%%%%%%%%%%%

\section{Introduction}\label{sec:Intro}

The study of positive geometries has been increasingly acquiring relevance in physics as they turned out to be the underlying mathematical structure for quantum mechanical observables in a quite large class of theories in particle physics and cosmology. 

The original interpretation of scattering amplitudes in gauge theories as volumes of certain polytopes \cite{Hodges:2009hk, ArkaniHamed:2010gg} suggested that geometrical and combinatorial ideas could play a more fundamental role in understanding the structure of scattering amplitudes and the physics they encode. This became clearer when first the geometry and combinatorics of the \emph{positive Grassmannian} \cite{Postnikov:2006kva} was introduced to describe the integrand of the perturbative scattering amplitudes in planar $\mathcal{N}\,=\,4$ supersymmetric Yang-Mills theory at all loop order \cite{ArkaniHamed:2012nw}, and then the very same amplitudes turned out to be encoded in the canonical differential form associated to the  \emph{amplituhedron} \cite{Arkani-Hamed:2013jha}, a geometrical structure which generalises both (certain types of) polytopes and the positive Grassmannian.

A further indication was provided by the fact that positive geometries did not get confined to the realm of the maximally supersymmetric gauge theory, but they also emerged in the context of scalar scattering in the form of the ABHY associahedron \cite{Arkani-Hamed:2017mur, Frost:2018djd} for the bi-adjoint cubic interactions, and Stokes polytopes \cite{Banerjee:2018tun, Salvatori:2019phs} for planar quartic interactions. Even more surprisingly, they appeared in cosmology, where the canonical form of the so-called cosmological polytopes encodes the wavefunction of the universe \cite{Arkani-Hamed:2017fdk}, which is the relevant quantum mechanical observable, for a large class of toy models. Finally, it was recently introduced an extension of canonical forms for general polytopes, named stringy canonical form, which depends on a certain deformation parameter (which resembles the $\alpha'$ parameter in string theory) and, when applied to the ABHY associahedron return the Koba-Nielsen integral known in string theory \cite{Arkani-Hamed:2019mrd}.

From the physics perspective, the excitement about such geometric-combinatorial picture on scattering amplitude and the wavefunction of the unverse is indeed not due to having acquired new computational tools to play with. These quantum mechanical observables carry the imprint of the fundamental rules for the physics in flat and cosmological space-times respectively, and, in particular, how causal time evolution is encoded into them is far from being understood as well as it is not understood what fundamentally fixes their properties. Positive geometries offer a new perspective on these basic questions: they are mathematical structures with their own first principle intrinsic definition and no a priori reference to any physics notion, and the principles and properties we ascribe to scattering amplitudes and the wavefunction of the universe can be seen as emergent from these mathematical principles.

An example of such emergence phenomenon was observed in the context of the cosmological polytopes. The wavefunction of the universe is a non Lorentz invariant quantity defined on a space-like surface, and {\it contains} the flat space scattering amplitudes \cite{Maldacena:2011nz}: the vertex structure of a specific facet of the cosmological polytopes provides a geometrical-combinatorial origin for the cutting rules determining the unitarity of the scattering amplitudes, while the structure of its dual does it for Lorentz invariance \cite{Benincasa:2018ssx}.

The common denominator among all the positive geometries is the fact that they can be characterised by a canonical form, which has logarithmic singularities on all its boundaries, and it is precisely such a canonical form which returns the quantum mechanical observables in flat space-times and in cosmology. However, meromorphic functions with logarithmic singularities represent a special corner: in general both scattering amplitudes and the wavefunction of the universe have a much more complicated structure. There is a plethora of examples of theories whose (integrand of the) scattering amplitudes or wavefunction of the universe possess non-logarithmic singularities: from less supersymmetric gauge theories \cite{Benincasa:2015zna, Benincasa:2016awv}, to gravity \cite{Herrmann:2016qea, Heslop:2016plj} and pretty much any theory in cosmology. Hence, in order for the geometrical-combinatorial principles behind the positive geometries to have any chance to play any fundamental role in the understanding of physical processes in both flat space-time and cosmology, it is necesary either to make a connection between positive geometries and functions with non-logarithmic singularities, or to find new ideas that generalise the positive geometries to include function with non-logarithmic singularities.

While a systematic characterisation of positive geometries and canonical form has already started independently of any physical interpretation \cite{Arkani-Hamed:2017tmz}, to our knowledge a link between positive geometries and functions with non-logarithmic singularities has not been explored. There is a beautiful exception in the context of the cosmological polytopes \cite{Benincasa:2019vqr}. The definition of cosmological polytopes as generated from a space of triangles embedded in projective space by intersecting them in the midpoints of their sides and taking the convex hull of their vertices, can be generalised by including a collection of segments in the building blocks allowing to get intersected in their only midpoint; a specific limit of the canonical form of the polytopes constructed in this way  returns a differential form with higher order poles whose coefficient represents the correct wavefunction of the universe for certain scalar states in cosmology. As a striking feature, all the information encoded into such a differential form with higher order poles could be extracted from the canonical form of the (generalised) cosmological polytope: as the flat-space limit is encoded in the leading Laurent coefficient of the diffrential form, it can be extracted from the polytope as the canonical form of a higher codimension face, whose codimension provides the multiplicity of the relevant pole in the differential form \cite{Benincasa:2019vqr}.

These results constitutes a first example of association of a differential form with non-logarithmic singularities with a projective polytope, and brings the question of whether a similar construction might exist also for scattering amplitudes in less supersymmetric gauge theories and in gravity, which admit a description in terms of the Grassmannian.

In this paper we start a systematic exploration of a geometrical-combinatorial characterisation of differential forms with non-logarithmic singularities, focusing on meromorphic forms with multiple poles on one side and projective polytopes on the other. In Section \ref{sec:PG_CF} we review the general concepts of positive geometries and canonical forms, and projective polytopes in particular, as well as the {\it Jeffrey-Kirwan residue} \cite{JEFFREY1995291} and the cosmological polytopes which will be used as important examples in the rest of the paper. In particular, the Jeffrey-Kirwan residue can be used to compute the canonical form of any projective polytope and provides a way to capture all (regular) triangulations at once \cite{Ferro:2018vpf}. In Section \ref{sec:DFHP} we define the association of classes of meromorphic differential forms to projective polytope, introducing the notions of {\it covariant forms}, as meromorphic forms with a certain $GL(1)$-scaling and multiple poles along the boundaries of the associated polytope such that its leading Laurent coefficient is still a meromorphic form with the same properties, and {\it covariant pairings} as a pairing between a given polytope and a meromorphic form with multiple poles with a certain $GL(1)$ scaling and poles along the boundaries of the elements of a certain signed triangulation of the paired polytope such that the multiplicity of the poles related to a subset of faces which sign-triangulate the empty set is lowered but still non-zero. The geometry and combinatorics of the projective polytope {\it partially} characterise and determine these differential forms, as for each projective polytope these associations are not unique. We complete these characterisation by constructing a projective polytope as a restriction of a higher dimensional one onto a hyperplane, and associating it the meromorphic differential form with multiple pole via the {\it covariant restriction} of the canonical form of the higher dimensional projective polytope. {\it i.e.} as the leading Laurent coefficient along the hyperplane the higher dimensional projective polytope is restricted onto. We provide a geometrical interpretation of the multiplicity of the poles of the meromorphic form generated in this way and relates its leading Laurent coefficients to the faces of the higher dimensional polytope. We also provide a number of explicit examples. Sections \ref{subsec:jkex} and \ref{subsec:CPCF} are respectively devoted to the discussion of the relation between the notions we introduced and the Jeffrey Kirwan method, and their realisation in the context of the cosmological polytopes. Finally, Section \ref{sec:Concl} contains our conclusions and future directions.

%%%%%%%%%%%%%%%%%%%%%%%%%%%%%%%%%%%%%%%%%
%%%%%%%%%%%%%%%%%%%%%%%%%%%%%%%%%%%%%%%%
%%%%%%%%%%%%%%%%%%%%%%%%%%%%%%%%%%%%%%%%

\section{Positive Geometries and Canonical Forms}\label{sec:PG_CF}

In this section we briefly review the definition as well as the salient features of positive geometries and the associated canonical forms. This allows us to set the notation and make our discussion self-contained. For a detailed treatment of the subject, see \cite{Arkani-Hamed:2017tmz} and references therein. We will explicitly discuss the projective polytopes, and a special subclass of them, the so called-cosmological polytopes \cite{Arkani-Hamed:2017fdk, Benincasa:2019vqr}.

%%%%%%%%%%%%%%%%%%%%%%%%%%%%%%%%%%%%%%%%
%%%%%%%%%%%%%%%%%%%%%%%%%%%%%%%%%%%%%%%%

\subsection{Generalities on Positive Geometries and Canonical Forms}\label{subsec:PG_CF_Gen}

Let us consider a pair $\left(X,\,X_{\ge0}\right)$, where:
\begin{enumerate}[label=(\roman*)]
    \item $X$ is an {\it (irreducible) complex projective variety}  of complex dimension $D$, {\it i.e.} the set of solutions of homogeneous polynomial equations in the complex projective space $\mathbb{P}^N(\mathbb{C})$                                $\left\{x\,\in\,\mathbb{P}^N(\mathbb{C})\,\Big|\,p(x)\,=\,0\right\}$, whose coefficients are taken to be real by assumption;
    \item $X_{\ge0}\,\subset\,X(\mathbb{R})$ is a (non-empty) closed {\it semi-algebraic} set of real dimension $D$, {\it i.e.} a finite union of subsets in $X(\mathbb{R})$, which is the set of solutions in $\mathbb{P}(\mathbb{R})$         of the very same homogeneous polynomial equations $\{x\,\in\,\mathbb{P}^N(\mathbb{R})\,\Big|\,p(x)\,=\,0\}$ defining $X$ cut out by homogeneous real polynomial inequalities                                    
          $\{x\,\in\,\mathbb{P}^N(\mathbb{R})\,|\,q(x)\,>\,0\}$. The interior $X_{>0}$ of $X_{\ge0}\,\subset\,X(\mathbb{R})$ is assumed to be a $D$-dimensional open oriented real submanifold of $X(\mathbb{R})$, and $\overline{X_{>0}}\,=\,X_{\ge0}$;
    \item its boundary components are given by the pairs $(C^{\mbox{\tiny $(j)$}},\,C^{\mbox{\tiny $(j)$}}_{\ge0})$ ($j\,=\,1,\ldots,\,\tilde{\nu}$), with $C^{\mbox{\tiny $(j)$}}$ ($j\,=\,1,\ldots,\,\tilde{\nu}$) being the 
          irreducible components of the set $\partial X$ of the homogeneous polynomial equations which are satisfied in $X$ if they are satisfied in any arbitrary point of $\partial X_{\ge0}\,:=\:X_{\ge0}\setminus X_{>0}$, and $C^{\mbox{\tiny $(i)$}}_{\ge0}$ being the closure of the interior of $C^{\mbox{\tiny $(i)$}}\,\bigcap\,\partial X_{\ge0}$ inside $C^{\mbox{\tiny $(i)$}}(\mathbb{R})$.
\end{enumerate}
Then a {\it positive geometry} is defined as such a pair with the following features:
\begin{enumerate}[label=(\alph*)]
    \item if $D\,>\,0$, then all the boundary components $(C^{\mbox{\tiny $(j)$}},\,C^{\mbox{\tiny $(j)$}}_{\ge0})$ ($j\,=\,1,\,\ldots,\,\tilde{\nu}$) of the positive geometry $\left(X,\,X_{\ge0}\right)$ is a codimension-one positive geometry;
    \item if $D\,=\,0$, there is a unique positive geometry $(X,\,X)$, with $X$ being a point and $X_{\ge0}\,=\,X$.
\end{enumerate} 

\noindent
Any positive geometry $(X,\,X_{\ge0})$ is in $1-1$ correspondence\footnote{In principle, the canonical forms are defined up to an overall constant $a\,\in\,\mathbb{R}$, so that the highest codimension singularity turns out to be $\pm a$ depending on the orientation. As we will see shortly afterwards, such a constant can be fixed {\it by convention} with a requirement on the {\it leading singularities} or, which is the same, on the canonical form for $D\,=\,0$. Once this freedom is fixed, the canonical form is defined univocally.} with a {\it canonical form} $\omega(X,\,X_{\ge0})$, {\it i.e.} a non-zero meromorphic $D$-form on $X$, such that its residue along any of the boundary components $(C^{\mbox{\tiny $(j)$}},\,C^{\mbox{\tiny $(j)$}}_{\ge0})$ is the canonical form of the positive geometry constituted by the boundary component $(C^{\mbox{\tiny $(j)$}},\,C^{\mbox{\tiny $(j)$}}_{\ge0})$ itself:
\begin{equation}\label{eq:CFdef}
 \Res_{\mbox{\tiny $C^{\mbox{\tiny $(j)$}}$}}\left\{\omega(X,\,X_{\ge0})\right\}\:=\:\omega(C^{\mbox{\tiny $(j)$}},\,C^{\mbox{\tiny $(j)$}}_{\ge0}).
\end{equation}
Let us parametrise $X$ with a set of local holomorphic coordinates $(y_j,\,h_j)$ such that the locus $h_j\,=\,0$ locally identifies $C^{\mbox{\tiny $(j)$}}$, while $y_j$ collectively indicates the remaining local coordinates. Then the canonical form $\omega(X,\,X_{\ge0})$ shows a simple pole in $h_j\,=\,0$ such that
\begin{equation}\label{eq:CFX}
 \omega(X,\,X_{\ge0})\:=\:\omega(y_j)\wedge\,\frac{dh_j}{h_j}+\tilde{\omega},
\end{equation}
with $\tilde{\omega}$ being the part of the canonical form which does not have a pole in $h_j\,=\,0$ and thus does not contribute to its residue. Hence, the residue of the canonical form with respect to such a simple pole is nothing but the codimension one differential form $\omega(y_j)$ which depends only on the collective local coordinates $c_j$ and it constitutes the canonical form of the (codimension-one) boundary component $(C^{\mbox{\tiny $(j)$}},\,C^{\mbox{\tiny $(j)$}}_{\ge0})$:
\begin{equation}\label{eq:CFC}
 \Res_{\mbox{\tiny $C^{\mbox{\tiny $(j)$}}$}}\left\{\omega(X,\,X_{\ge0})\right\} \:=\: \Res_{\mbox{\tiny $h_j\,=\,0$}} \left\{\omega(X,\,X_{\ge0})\right\}\:=\:
 \omega(y_j)\,=\, \omega(C^{\mbox{\tiny $(j)$}},\,C^{\mbox{\tiny $(j)$}}_{\ge0}),
\end{equation}
where the equalities are valid locally. Applying the Res operator \eqref{eq:CFdef} on $\omega(X,\,X_{\ge0})$ iteratively $D$ times along different boundary components one must obtain $\pm1$, depending on the orientation. Such highest codimension singularities are the {\it leading singularities}. 

For $D\,=\,0$, when $X$ is a single point and $X_{\ge0}\,=\,X$, the associated canonical form on $X$ is the $0$-form $\pm1$ depending on the orientation of $X_{\ge0}$. Notice that the leading singularities are associated to points, whose canonical form is precisely $\pm1$.

The canonical form $\omega(X,\,X_{\ge0})$ provides a characterisation of the positive geometry $(X,\,X_{\ge0})$, associating the boundary components $\{(C^{\mbox{\tiny $(j)$}},\,C^{\mbox{\tiny $(j)$}}_{\ge0})\}$ of $(X,\,X_{\ge0})$ to its singularities.

%%%%%%%%%%%%%%%%%%%%%%%%%%%%%%%%%%%%%%%%
%%%%%%%%%%%%%%%%%%%%%%%%%%%%%%%%%%%%%%%%

\subsection{Projective Polytopes}\label{subsec:PrPol}

We now specialize to a specific class of positive geometries, the {\it projective polytopes}. Given a set of vectors $Z_k\,\in\,\mathbb{R}^{N+1}$ ($k\,=\,1,\ldots,\,\nu$), then a projective polytope is defined as the pair $(\mathbb{P}^{N},\,\mathcal{P})$, where $\mathcal{P}\,\subset\,\mathbb{P}^{N}(\mathbb{R})$ is the convex hull identified by
\begin{equation}\label{eq:Polyt}
 \mathcal{P}(\mathcal{Y},\,Z)\: :=\:\left\{\mathcal{Y}\,=\,\sum_{k=1}^{\nu}c_k Z_k\,\in\,\mathbb{P}^{N}(\mathbb{R})\,\Big|\,c_k\,\ge\,0,\,\forall k\,=\,1,\ldots,\,\nu\right\},
\end{equation}
with the $Z_k$'s being its vertices, and $\mathcal{Y}$ which can vanish if and only if $c_k\,=\,0$ for all $k\,=\,1,\ldots,\,\nu$. Notice that any projective polytope is invariant under the transformation $\mathcal{Y}\,\longrightarrow\,\lambda\,\mathcal{Y}$ ($\lambda\,\in\,\mathbb{R}_{+}$) -- or, equivalently, $Z_k\,\longrightarrow\,\lambda Z_k,\,\forall\,k\,=\,1,\ldots,\,\nu$.
\\

Alternatively, $\mathcal{P}$ can be defined via a set of homogeneous polynomial inequalities $q_j(\mathcal{Y})\,\ge\,0$ ($\mathcal{Y}\,\in\,\mathbb{P}^N(\mathbb{R}),\;j\,=\,1,\ldots,\tilde{\nu}$), with every polynomial $q_j(\mathcal{Y})$ being {\it linear}, {\it i.e.} via $q_j(\mathcal{Y})\,\equiv\,\mathcal{Y}^I\mathcal{W}_I^{\mbox{\tiny $(j)$}}\,\ge\,0$, where the {\it dual vectors} $\mathcal{W}_I^{\mbox{\tiny $(j)$}}$ are co-vectors in $\mathbb{R}^{N+1}$ and correspond to the facets of the polytope. Given a certain facet identified by $\mathcal{W}_I^{\mbox{\tiny $(j)$}}$, a vertex $Z_k$ is on it if and only if it satisfies the relation $\mathcal{W}_I^{\mbox{\tiny $(j)$}}Z^I_k\,=\,0$. Let $Z_{\mbox{\tiny $a_{j+1}$}},\,\ldots,\,Z_{\mbox{\tiny $a_{j+N}$}}$ a subset of vertices of $\mathcal{P}$ on the facet $\mathcal{W}_I^{\mbox{\tiny $(j)$}}$ forming a basis in $\mathbb{R}^N$, then
\begin{equation}\label{eq:Wfac}
 \mathcal{W}_I^{\mbox{\tiny $(j)$}}\:=\:(-1)^{(j-1)(N-1)}\varepsilon_{\mbox{\tiny $IK_1\ldots K_N$}}Z_{\mbox{\tiny $a_{j+1}$}}^{\mbox{\tiny $K_1$}}\ldots Z_{\mbox{\tiny $a_{j+N}$}}^{\mbox{\tiny $K_N$}},
\end{equation}
$\varepsilon_{IK_1\ldots K_N}$ being the totally anti-symmetric $(N+1)$-dimensional Levi-Civita symbol.
\\

Given a projective polytope $(\mathbb{P}^N,\,\mathcal{P})$, with $\mathcal{P}$ defined via a set of homogeneous linear polynomial inequalities $q_j(\mathcal{Y})\,\ge\,0$ ($j\,=\,1,\ldots\,\tilde{\nu}$), its associated canonical form $\omega(\mathcal{Y},\,\mathcal{P})$ is given by a meromorphic form with singularities only where the homogeneous linear polynomials $q_j(\mathcal{Y})$, $j\,=\,1,\ldots\,\tilde{\nu}$, vanish, and whose numerator $\mathfrak{n}(\mathcal{Y})$ is a polynomial of degree $\tilde{\nu}-N-1$, such that the residue of $\omega(\mathcal{Y},\,\mathcal{P})$ at any of poles $q_j(\mathcal{Y})\,=\,0$ is the canonical form of a codimension-one boundary component:
\begin{equation}\label{eq:CFP}
 \omega(\mathcal{Y},\,\mathcal{P})\:=\:\frac{\mathfrak{n}(\mathcal{Y})\langle\mathcal{Y}d^N\mathcal{Y}\rangle}{\prod_{j=1}^{\tilde{\nu}}q_j(\mathcal{Y})},\hspace{1cm}
 \mbox{deg}\{\mathfrak{n}\}\:=\:\tilde{\nu}-N-1,
\end{equation}
where $\mbox{deg}\{\mathfrak{n}\}$ is the degree of the homogeneous polynomial $\mathfrak{n}(\mathcal{Y})$ and $\langle\mathcal{Y}d^N\mathcal{Y}\rangle$ is the \emph{standard measure} in $\mathbb{P}^N$, which is defined as
\begin{equation}\label{eq:TopF}
 \langle\mathcal{Y}d^N\mathcal{Y}\rangle\: := \: \varepsilon_{\mbox{\tiny $I_1I_2\ldots I_{N+1}$}}\mathcal{Y}^{I_1}d\mathcal{Y}^{I_2}\wedge\ldots\wedge\,d\mathcal{Y}^{I_{N+1}}.
\end{equation}
Importantly, the degree of the homogeneous polynomial $\mathfrak{n}(\mathcal{Y})$ makes the canonical form $\omega(\mathcal{Y},\,\mathcal{P})$ invariant under the $GL(1)$ transformation $\mathcal{Y}\,\longrightarrow\,\lambda\,\mathcal{Y}$, $\lambda\,\in\,\mathbb{R}_+$. Geometrically, it is fixed by the locus of the intersections of the faces of $\mathcal{P}$ outside $\mathcal{P}$ \cite{Arkani-Hamed:2014dca}.

The canonical form \eqref{eq:CFP} can be explicitly written in terms of the dual vectors $\mathcal{W}_I^{\mbox{\tiny $(j)$}}$ as well as in terms of the vertices $Z^I_k$ via \eqref{eq:Wfac}:
\begin{equation}\label{eq:CFPWZ}
 \omega(\mathcal{Y},\,\mathcal{P})\:=\:\frac{\mathfrak{n}(\mathcal{Y})\langle\mathcal{Y}d^N\mathcal{Y}\rangle}{\prod_{j=1}^{\tilde{\nu}}\left(\mathcal{Y}\cdot\mathcal{W}^{\mbox{\tiny $(j)$}}\right)}\:=\:
                                       \frac{\mathfrak{n}(\mathcal{Y})\langle\mathcal{Y}d^N\mathcal{Y}\rangle}{\prod_{j=1}^{\tilde{\nu}}\langle\mathcal{Y}Z_{a_{j+1}}\ldots Z_{a_{j+N}}\rangle}
\end{equation}
where $\mathcal{Y}\cdot\mathcal{W}^{\mbox{\tiny $(j)$}}\, :=\,\mathcal{Y}^I\mathcal{W}_I^{\mbox{\tiny $(j)$}}$ and $\langle\ldots\rangle$ identifies the contraction via the Levi-Civita symbol, {\it i.e.} the determinant of the matrix built out the vectors appearing inside the angular brackets.
\\

Given a projective polytope $(\mathbb{P}^N,\,\mathcal{P})$, with $\mathcal{P}$ defined via a set of vertices $Z_k$ ($k\,=\,1,\ldots\,\nu$), its associated canonical form can be expressed in terms of the so-called {\it canonical function} and the standard measure in $\mathbb{P}^N$, with the canonical function which has as a contour integral representation \cite{Arkani-Hamed:2017tmz}
\begin{equation}\label{eq:CFPci}
 \begin{split}
 &\hspace{2.5cm}\omega(\mathcal{Y},\,\mathcal{P})\:=\:\Omega(\mathcal{Y},\,\mathcal{P})\langle\mathcal{Y}d^{N}\mathcal{Y}\rangle,\\
 &\phantom{\ldots}\\
 &\Omega(\mathcal{Y},\,\mathcal{P})\:=\:\frac{1}{N!(2\pi i)^{\nu-N-1}}\int_{\mathbb{R}^{\nu}}\prod_{k=1}^{\nu}\frac{dc_k}{c_k-i\varepsilon_k}
   \delta^{\mbox{\tiny $(N+1)$}}\left(\mathcal{Y}-\sum_{k=1}^{\nu}c_k Z_k\right)
 \end{split}
\end{equation}
in the limit for $\varepsilon_k\,\longrightarrow\,0$, $\forall\,k\,=\,1,\ldots,\nu$. There are several contours along which the above integral can be performed and all of them provide different triangulations of the polytope.
\\

Integration contours capturing all the (regular)\footnote{Regular triangulations are a special class of triangulations which can be obtained in the following way. Consider a real-valued function $Z_i \mapsto \alpha(Z_i)$ on the vertices of $\mathcal{P}$. 
Then consider the points $(Z_i, \alpha(Z_i))$ and take their convex hull. Take the lower faces (those whose outwards normal vector have last component negative) and project them back down to $\mathcal{P}$. This gives a subdivision of $\mathcal{P}$, which is called \emph{regular}. In case its elements are all simplices it is a regular triangulation. See \cite{de2010triangulations} for a extensive review on the topic.} triangulations of a given polytope can be defined via a method \cite{Ferro:2018vpf} which relies on the \emph{Jeffrey-Kirwan Residue} \cite{JEFFREY1995291,2004InMat.158..453S}. We will give a brief review and refer to \cite{Ferro:2018vpf} and the upcoming work \cite{moh:2020} for further details. In Section \eqref{subsec:jkex} we will present an explicit example and explain its connection with this work. 
The computation of the canonical form of a polytope $\mathcal{P}$ can be recasted as a residue computation of a (covariant) $\nu-N-1$ differential form defined on $\mathbb{P}^{\nu-N-1}$ as:
\begin{equation} \label{def:formfiber}
    \tilde{\omega}_{\mathcal{Y}}(C,\mathcal{P}):=\bigwedge_{k=1}^{\nu} \frac{dc_k}{c_k} \,
   \delta^{\mbox{\tiny $(N+1)$}}\left(\mathcal{Y}-\sum_{k=1}^{\nu}c_k Z_k\right).
\end{equation}

Let $\tilde{\omega}$ be a top differential form in $\mathbb{P}^r$ which has poles on each of the hyperplanes $\lbrace \mathcal{H}_k \rbrace_{k=1}^{\nu}$ and let us denote their dual vectors as $\lbrace \mathfrak{B}_k \rbrace_{k=1}^{\nu}$, with $\nu \geq r$. For each collection $\lbrace \mathcal{H}_k \rbrace_{k \in I}$ of $r$ of such hyperplanes, with\footnote{Given $n \in \mathbb{N}$, throughout the text, we will denote as $[n]$ the set $\lbrace 1,\ldots, n \rbrace$. Moreover, ${[n]\choose r}$ will be the set of $r$-element subsets of $[n]$. } $I \in {[\nu] \choose r}$, let us define the \emph{cone} $\mathfrak{C}_I$ as the subset in $\mathbb{P}^r$ spanned by positive linear combinations of the corresponding dual vectors $\lbrace \mathfrak{B}_k \rbrace_{k \in I}$. Let us now fix a reference point $\xi \in \mathbb{P}^r$, then we define the Jeffrey-Kirwan residue as:
\begin{equation}\label{def:jk}
    \mathrm{JK}_{\xi} \tilde{\omega} := \sum_{\mathfrak{C}_I \ni \xi} \mbox{Res}_{\mathfrak{C}_I} \tilde{\omega}, 
\end{equation}
where the sum is over all cones $\mathfrak{C}_I$ containing the point $\xi$. Moreover, $\mbox{Res}_{\mathfrak{C}_I}$ is the multivariate residue of $\tilde{\omega}$ computed around the poles corresponding to the hyperplanes $\lbrace \mathcal{H}_k \rbrace_{k \in I}$ in the order $(\mathcal{H}_{k_1}, \ldots,\mathcal{H}_{k_r})$ such that the corresponding dual vectors $(\mathfrak{B}_{k_1},\ldots,\mathfrak{B}_{k_r})$ are positively oriented\footnote{I.e. in an affine chart they will have positive determinant. We recall indeed that if we compute multivariate residues iteratively, then the sign of the result depends on the order of the iterations.}.

The Jeffrey-Kirwan residue has remarkable properties. Let us consider two points $\xi, \xi' \in \mathbb{P}^r$ such that the set of cones which contain each of them is the same, then from Def. \eqref{def:jk} we have that:
\begin{equation}
     \mathrm{JK}_{\xi} \tilde{\omega}= \mathrm{JK}_{\xi'} \tilde{\omega}.
\end{equation}
Points $\xi,\xi'$ of this type are said to be in the same \emph{chamber} $\mathfrak{c}$. Chambers can be equivalently characterised as the disconnected components of the set $\mathbb{P}^r$ to which we remove all the codimension one boundaries of all cones $\mathfrak{C}$.

Considering the differential form $ \tilde{\omega}_{\mathcal{Y}}(C,\mathcal{P})$ in \eqref{def:formfiber}, one can show that it has poles on a set of hyperplanes $\mathcal{H}_1,\ldots,\mathcal{H}_{\nu}$. Therefore, we can apply Jeffrey-Kirwan residue to it, obtaining the following result:
\begin{theorem}[\cite{Ferro:2018vpf}]\label{th:jk}
 Given a projective polytope $(\mathbb{P}^N, \mathcal{P})$ with vertices $Z_k, k=1,\ldots,\nu$, its canonical function $\Omega(\mathcal{Y}, \mathcal{P})$ can be obtained by applying the Jeffrey-Kirwan residue to the (covariant) top form $ \tilde{\omega}_{\mathcal{Y}}(C,\mathcal{P})$ on $\mathbb{P}^{\nu-N-1}$ defined as in \eqref{def:formfiber}:
 \begin{equation}\label{eq:jkmasterformula}
      \Omega(\mathcal{Y}, \mathcal{P})=\mathrm{JK}_{\mathfrak{c}} \, \tilde{\omega}_{\mathcal{Y}}(C,\mathcal{P}), 
 \end{equation}
 where $\mathfrak{c}$ is a chamber in $\mathbb{P}^{\nu-N-1}$.
 Moreover, the result is independent form the chosen chamber: there is a bijection between chambers and representations of $ \Omega(\mathcal{Y}, \mathcal{P})$ associated to (regular) triangulations of the polytope $\mathcal{P}$.
\end{theorem}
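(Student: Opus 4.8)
The plan is to establish Theorem \ref{th:jk} in three movements: first identify the hyperplane arrangement on which $\tilde{\omega}_{\mathcal{Y}}(C,\mathcal{P})$ actually has poles, then show that the Jeffrey-Kirwan residue reproduces the contour integral \eqref{eq:CFPci} for a suitable choice of $\varepsilon_k$'s, and finally argue the chamber-independence together with the bijection to regular triangulations.

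First I would make explicit the geometry of the fiber. Solving the delta function $\delta^{(N+1)}(\mathcal{Y}-\sum_k c_k Z_k)$ constrains the $c_k$ to an affine subspace of $\mathbb{R}^\nu$ of dimension $\nu-N-1$ (generically), and after projectivizing we get $\mathbb{P}^{\nu-N-1}$ as claimed. On this fiber each factor $dc_k/c_k$ becomes, after eliminating $N+1$ of the $c$'s, a linear form in the remaining homogeneous coordinates; its vanishing locus is a hyperplane $\mathcal{H}_k$ with a dual vector $\mathfrak{B}_k$ that one reads off from the linear-algebra elimination (a maximal minor of the $Z$-matrix). So $\tilde{\omega}_{\mathcal{Y}}$ is a top form on $\mathbb{P}^{\nu-N-1}$ with poles exactly on $\{\mathcal{H}_k\}_{k=1}^\nu$, and I would record the precise correspondence between subsets $I\in\binom{[\nu]}{N+1}$ of vertices that form a simplex containing $\mathcal{Y}$ and the cones $\mathfrak{C}_{[\nu]\setminus I}$ in $\mathbb{P}^{\nu-N-1}$ — this duality between ``simplices in $\mathcal{P}$ through $\mathcal{Y}$'' and ``cones in the fiber through $\xi$'' is the combinatorial heart of the statement and should be isolated as a lemma.

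Next I would compute $\mathrm{JK}_{\xi}\tilde{\omega}_{\mathcal{Y}}$ directly from Definition \eqref{def:jk}. Each term $\mathrm{Res}_{\mathfrak{C}_I}\tilde{\omega}_{\mathcal{Y}}$ is a multivariate residue at the intersection of $\nu-N-1$ of the hyperplanes $\mathcal{H}_k$; pushing this back through the delta-function constraint, it equals the iterated residue of $\prod_k dc_k/c_k$ on the fiber over the corresponding simplex, which by the standard simplex computation yields $1/(N!\,\langle Z_{a_1}\cdots Z_{a_{N+1}}\rangle)$ times the indicator that $\mathcal{Y}$ lies in that simplex, i.e. a term of a triangulation. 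Summing over the cones containing $\xi$ — equivalently, over the simplices containing $\mathcal{Y}$ that belong to the triangulation selected by $\xi$ — reproduces exactly one of the known triangulated expressions for $\Omega(\mathcal{Y},\mathcal{P})$, matching the $\varepsilon$-prescription evaluation of \eqref{eq:CFPci} with $\varepsilon_k$ chosen so that the deformed contour picks out that chamber. The orientation footnote (positively-oriented dual vectors) is precisely what makes the signs in the residue match the signs in the canonical form, so care with the ordering convention is needed here but is routine.

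Finally, chamber-independence: if $\xi,\xi'$ lie in the same chamber then by construction the set of cones containing them coincides, so the two JK-residues are termwise equal; this is immediate from the definition and was already noted in the text. If $\xi,\xi'$ lie in different chambers, the equality $\mathrm{JK}_\xi\tilde{\omega}_{\mathcal{Y}}=\mathrm{JK}_{\xi'}\tilde{\omega}_{\mathcal{Y}}$ instead follows because both equal $\Omega(\mathcal{Y},\mathcal{P})$, which is intrinsic to $\mathcal{P}$ and independent of any auxiliary data — alternatively one can prove it arrangement-theoretically by showing that crossing a wall between adjacent chambers adds and subtracts a pair of residues whose sum vanishes (the two simplices glued along a spurious internal facet cancel), giving a ``wall-crossing'' invariance. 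For the bijection with regular triangulations I would invoke the dictionary from \cite{Ferro:2018vpf}: a generic $\xi$ picks, for each $\mathcal{Y}$, the set of simplices whose dual cones contain $\xi$, and as $\mathcal{Y}$ varies these simplices assemble into a regular triangulation of $\mathcal{P}$ with height function determined by $\xi$; conversely every regular triangulation arises from some chamber. The main obstacle I anticipate is the bookkeeping in the second step — tracking how the multivariate residue on the projectivized fiber, the sign from the orientation of $(\mathfrak{B}_{k_1},\ldots,\mathfrak{B}_{k_r})$, and the Jacobian of the delta-function elimination conspire to give exactly $\omega(\mathcal{Y},\mathcal{P})$ with the correct sign on each simplex, rather than up to an uncontrolled sign.
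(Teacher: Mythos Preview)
The paper does not contain a proof of Theorem~\ref{th:jk}: the result is quoted from \cite{Ferro:2018vpf} and the surrounding text explicitly says ``We will give a brief review and refer to \cite{Ferro:2018vpf} and the upcoming work \cite{moh:2020} for further details.'' After the statement the paper only adds a one-sentence summary and moves on. So there is no proof in this paper to compare your proposal against.

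That said, your sketch is a reasonable outline of the standard argument and is broadly in the spirit of \cite{Ferro:2018vpf}. The identification of the fiber $\tilde{Z}^{-1}(\mathcal{Y})$ with $\mathbb{P}^{\nu-N-1}$, the reading of each $c_k=0$ as a hyperplane $\mathcal{H}_k$ with dual vector $\mathfrak{B}_k$ (a column of a kernel matrix $Z^{\perp}$), and the interpretation of the Jeffrey--Kirwan sum as selecting the simplices of a triangulation are all correct ingredients. The one place where your language is slightly loose is the bijection with \emph{regular} triangulations: the precise mechanism is Gale duality --- the vectors $\mathfrak{B}_k$ are the Gale dual of the $Z_k$, and the chamber decomposition you describe is exactly the secondary fan of $\mathcal{P}$, whose maximal cones are in bijection with regular triangulations. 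Your ``height function determined by $\xi$'' is the right intuition, but for a complete proof you would want to invoke the secondary-fan/Gale-dual dictionary explicitly rather than leave it as an appeal to \cite{Ferro:2018vpf}. The sign/orientation bookkeeping you flag as the main obstacle is indeed the most delicate step, but it is routine once the Gale-dual framework is in place.
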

% Let us spell out more in details how formula \eqref{eq:jkmasterformula} encodes representations of the canonical form of the polytope $\mathcal{P}$ associated to all its (regular) triangulations.
% Given a cone $\mathfrak{C}_I$ in $\mathbb{P}^{\nu-N-1}$, where $I \in {[\nu] \choose \nu-N-1}$, one has: 
% \begin{equation}
%     \mbox{Res}_{\mathfrak{C}_{I}} \tilde{\omega}_{\mathcal{Y}}(C,\mathcal{P})=\Omega(\mathcal{Y},\Delta_{[\nu]\setminus I}),
% \end{equation}
% where $\Delta_{[\nu]\setminus I}$ is the simplex in $\mathbb{P}^{N}$ whose $N+1$ vertices are $\lbrace Z_k \rbrace_{k \in [\nu]\setminus I}$.
% Let us now fix a chamber $\mathfrak{c} \subset \mathbb{P}^{\nu-N-1}$. Moreover, let us denote as $\lbrace \mathfrak{C}_{I_1},\ldots,\mathfrak{C}_{I_s} \rbrace$ the collection of all cones which contain $\mathfrak{c}$, where $I_a \in {[\nu] \choose \nu-N-1}$, then one has:
% \begin{equation}
%   \mathrm{JK}_{\mathfrak{c}} \, \tilde{\omega}_{\mathcal{Y}}(C,\mathcal{P})=\sum_{a=1}^s \mbox{Res}_{\mathfrak{C}_{I_a}} \tilde{\omega}_{\mathcal{Y}}(C,\mathcal{P})= \sum_{a=1}^s \Omega(\mathcal{Y},\Delta_{[\nu]\setminus {I_a}})=\Omega(\mathcal{Y},\mathcal{P}).
% \end{equation}
% The chamber $\mathfrak{c}$ therefore corresponds to the (regular) triangulation of $\mathcal{P}$ given by the collection of the simplices $\lbrace \Delta_{[\nu]\setminus {I_a}} \rbrace_{a=1}^s$. 
In summary, the configuration of chambers beautifully encodes all (regular) triangulations of the polytope, and the Jeffrey-Kirwan translates this into an algebraic method to compute the canonical function of the polytope associated to each of these triangulations.
%In section \ref{subsec:jkex}, we will explain the relation between the Jeffrey-Kirwan residue and \emph{covariant restrictions}, defined in section \ref{subsec:PCP}.
\\

Finally, given a projective polytope $(\mathbb{P}^N,\,\mathcal{P})$, its dual polytope $(\mathbb{P}^N,\,\tilde{\mathcal{P})}$ is defined as the convex hull $\tilde{\mathcal{P}}\,\subset\,\mathbb{P}^N(\mathbb{R})$ identified by the vertices $\mathcal{W}_I^{\mbox{\tiny $(j)$}}$, $j\,=\,1,\,\ldots,\tilde{\nu}$, in the linear dual $\mathbb{P}^N$ of $\mathbb{P}^N$:
\begin{equation}\label{eq:DPolyt}
 \tilde{\mathcal{P}}(\mathcal{Y},\,Z)\: :=\:\left\{\mathcal{Y}\,=\,\sum_{j=1}^{\tilde{\nu}}c_j\mathcal{W}^{\mbox{\tiny $(j)$}}\,\in\,\mathbb{P}^{N}(\mathbb{R})\,\Big|\,c_j\,\ge\,0,\,\forall j\,=\,1,\ldots,\,\tilde{\nu}\right\},
\end{equation}
Notice that the vertices and the facets of the $\tilde{\mathcal{P}}$ respectively correspond to the facets and vertices of $\mathcal{P}$: $\tilde{\mathcal{P}}$ can be defined via a set of inequalities $q_k(\mathcal{Y})\, :=\,\mathcal{Y}_IZ^I_k\,\ge\,0$, ($k\,=\,1,\,\ldots,\,\nu$), with the $Z_k$'s identifying the facets of $\tilde{\mathcal{P}}$. Given a $Z_k$, the vertices $\mathcal{W}_I^{\mbox{\tiny $(j)$}}$ of $\tilde{\mathcal{P}}$ on it are the ones satisfying the relation $Z_k^I\mathcal{W}_I^{\mbox{\tiny $(j_a)$}}\,=\,0$. Hence, considering the canonical form \eqref{eq:CFPWZ} written in terms of the dual vectors $\mathcal{W}$ and interpreting them as the vertices of $\mathcal{P}$, the canonical function $\Omega(\mathcal{Y},\,\mathcal{P})$ is the volume of $\tilde{\mathcal{P}}$.

%%%%%%%%%%%%%%%%%%%%%%%%%%%%%%%%%%%%%%%%

\subsubsection{Disjoint Unions and Triangulations}\label{subsubsec:DuTr}

Let $(\mathbb{P}^N,\,\mathcal{P}^{\mbox{\tiny $(1)$}})$ and $(\mathbb{P}^N,\,\mathcal{P}^{\mbox{\tiny $(2)$}})$ be two projective polytopes such that $\mathcal{P}^{\mbox{\tiny $(1)$}}\,\cap\,\mathcal{P}^{\mbox{\tiny $(2)$}}\,=\,\varnothing$. Then, their disjoint union $(\mathbb{P}^N,\,\mathcal{P}^{\mbox{\tiny $(1)$}}\,\cup\,\mathcal{P}^{\mbox{\tiny $(2)$}})$ is still a positive geometry, whose boundary components are either boundary components of one $(\mathbb{P}^N,\,\mathcal{P}^{\mbox{\tiny $(j)$}})$, $j\,=\,1,\,2$,  or the disjoint union of the boundary components of $\mathcal{P}^{\mbox{\tiny $(1)$}}$ and $\mathcal{P}^{\mbox{\tiny $(2)$}}$. Furthermore, the canonical form associated with $(\mathbb{P}^N,\,\mathcal{P}^{\mbox{\tiny $(1)$}}\,\cup\,\mathcal{P}^{\mbox{\tiny $(2)$}})$ is given by the sum of the canonical forms of $(\mathbb{P}^N,\,\mathcal{P}^{\mbox{\tiny $(1)$}})$ and $(\mathbb{P}^N,\,\mathcal{P}^{\mbox{\tiny $(2)$}})$:
\begin{equation}\label{eq:DUcf}
    \omega\left(\mathcal{Y},\,\mathcal{P}^{\mbox{\tiny $(1)$}}\cup\mathcal{P}^{\mbox{\tiny $(2)$}}\right)\:=\:\omega(\mathcal{Y},\,\mathcal{P}^{\mbox{\tiny $(1)$}}) + \omega(\mathcal{Y},\,\mathcal{P}^{\mbox{\tiny $(2)$}}).
\end{equation}

Let $(\mathbb{P}^N,\,\mathcal{P}^{\mbox{\tiny $(1)$}})$ and $(\mathbb{P}^N,\,\mathcal{P}^{\mbox{\tiny $(2)$}})$ be two projective polytopes such that 
$\mathcal{P}_{\mbox{\tiny $>0$}}^{\mbox{\tiny $(1)$}}\,\cap\,\mathcal{P}_{\mbox{\tiny $>0$}}^{\mbox{\tiny $(2)$}}\,=\,\varnothing$ and $\mathcal{P}^{\mbox{\tiny $(1)$}}\,\cap\,\mathcal{P}^{\mbox{\tiny $(2)$}}\,=\,\partial\mathcal{P}^{\mbox{\tiny $(12)$}}$ with $(\mathbb{P}^{N-1},\,\partial\mathcal{P}^{\mbox{\tiny $(12)$}})$ having opposite orientation as a boundary component of $(\mathbb{P}^N,\,\mathcal{P}^{\mbox{\tiny $(1)$}})$ or $(\mathbb{P}^N,\,\mathcal{P}^{\mbox{\tiny $(2)$}})$, {\it i.e.} the two polytopes have their interiors disjoints and share a facet with opposite orientation. If $\mathcal{P}\,:=\,\mathcal{P}^{\mbox{\tiny $(1)$}}\,\cup\,\mathcal{P}^{\mbox{\tiny $(2)$}}$, then $(\mathbb{P}^N,\,\mathcal{P})$ is still a polytope, whose boundary components are either boundary components of one $(\mathbb{P}^N,\,\mathcal{P}^{\mbox{\tiny $(j)$}})$, except $(\mathbb{P}^{N-1},\,\partial\mathcal{P}^{\mbox{\tiny $(12)$}})$, or the union of the boundary components of $\mathcal{P}^{\mbox{\tiny $(1)$}}$ and $\mathcal{P}^{\mbox{\tiny $(2)$}}$. The canonical form associated to such an union is still given by the sum of the canonical forms of each polytope as in \eqref{eq:DUcf}, and it is such that the sum of the residues of each individual canonical form $\omega(\mathcal{Y},\,\mathcal{P}^{\mbox{\tiny $(j)$}})$ along the boundary $(\mathbb{P}^{N-1},\,\partial\mathcal{P}^{\mbox{\tiny $(12)$}})$ is zero. The polytopes $(\mathbb{P}^N,\,\mathcal{P}^{\mbox{\tiny $(1)$}})$ and $(\mathbb{P}^N,\,\mathcal{P}^{\mbox{\tiny $(2)$}})$ provide a triangulation of $(\mathbb{P}^N,\,\mathcal{P})$. More generally, if $(\mathbb{P}^N,\,\mathcal{P})$ is a polytope and $\{(\mathbb{P}^N,\,\mathcal{P}^{\mbox{\tiny $(j)$}})\}_{j=1}^{n}$ is a collection of polytopes, then the latter provide a \emph{triangulation} of the former if
\begin{enumerate}[label=(\roman*)]
    \item  $(\mathbb{P}^N,\,\mathcal{P}_{\mbox{\tiny $>0$}}^{\mbox{\tiny $(j)$}})\,\subset\,(\mathbb{P}^N,\,\mathcal{P}_{\mbox{\tiny $>0$}}),\;\forall\,j\:=\:1,\ldots,n$, with compatible orientations;
    \item given $(\mathbb{P}^N,\,\mathcal{P}_{\mbox{\tiny $>0$}}^{\mbox{\tiny $(j)$}})$ and $(\mathbb{P}^N,\,\mathcal{P}_{\mbox{\tiny $>0$}}^{\mbox{\tiny $(l)$}})$, then 
          $\mathcal{P}_{\mbox{\tiny $>0$}}^{\mbox{\tiny $(j)$}}\,\cap\,\mathcal{P}_{\mbox{\tiny $>0$}}^{\mbox{\tiny $(l)$}}\,=\,\varnothing$ $\quad\forall\: j,\,l\,=\,1,\ldots,\tilde{\nu}$ ($j\,\neq\,l$);
    \item $\displaystyle (\mathbb{P}^N,\,\bigcup_{j=1}^{n}\mathcal{P}^{\mbox{\tiny $(j)$}})\:=\:(\mathbb{P}^N,\,\mathcal{P})$;
\end{enumerate}
and, then, the canonical form of $(\mathbb{P}^N,\,\mathcal{P})$ is given by the sum of the canonical forms of the collection $\{(\mathbb{P}^N,\,\mathcal{P}^{\mbox{\tiny $(j)$}})\}_{j=1}^{n}$:
\begin{equation}\label{eq:TRcf}
    \omega(\mathcal{Y},\,\mathcal{P})\:=\:\sum_{j=1}^{n}\omega(\mathcal{Y},\,\mathcal{P}^{\mbox{\tiny $(j)$}}).
\end{equation}

It is possible to further generalise the notion of triangulation. 
Let $\lbrace (\mathbb{P}^N,\,\mathcal{P}^{\mbox{\tiny $(j)$}}) \rbrace _{j=1}^{n+1}$ a collection of polytopes. For any given point $\mathcal{Y}\in\mathbb{P}^N$, let $n^{\mbox{\tiny $(+)$}}_{\mbox{\tiny $\mathcal{Y}$}}$ and $n^{\mbox{\tiny $(-)$}}_{\mbox{\tiny $\mathcal{Y}$}}$ be respectively the number of $\mathcal{P}^{\mbox{\tiny $(j)$}}$ containing $\mathcal{Y}$ ($\mathcal{Y}\notin\partial\mathcal{P}^{\mbox{\tiny $(j)$}}$) with positive/negative orientation of $\mathcal{P}^{\mbox{\tiny $(j)$}}$ at $\mathcal{Y}$. If
\begin{equation}
    \forall\;\mathcal{Y}\in\bigcup_{j=1}^{n+1}\mathcal{P}^{\mbox{\tiny $(j)$}}\,\&\,\mathcal{Y}\notin\partial{P}^{\mbox{\tiny $(j)$}}\,(\forall\,j=1,\ldots,n+1)\: : 
    \:n^{\mbox{\tiny $(+)$}}_{\mbox{\tiny $\mathcal{Y}$}}\:=\:n^{\mbox{\tiny $(-)$}}_{\mbox{\tiny $\mathcal{Y}$}},
\end{equation}
then the collection $\lbrace (\mathbb{P}^N,\,\mathcal{P}^{\mbox{\tiny $(j)$}}) \rbrace_{j=1}^{n+1}$ {\it interior triangulates} the empty set. Consequently, given a collection $\lbrace (\mathbb{P}^N,\,\mathcal{P}^{\mbox{\tiny $(j)$}}) \rbrace_{j=1}^{n+1}$ of polytopes which {\it  interior trangulate} the empty set, then $\lbrace (\mathbb{P}^N,\,\mathcal{P}^{\mbox{\tiny $(n+1)$}}_{-}) \rbrace$ is interior triangulated by  $\lbrace (\mathbb{P}^N,\,\mathcal{P}^{\mbox{\tiny $(j)$}})\rbrace_{j=1}^{n}$\footnote{Here $\mathcal{P}^{\mbox{\tiny $(j)$}}_{-}$ denotes  $\mathcal{P}^{\mbox{\tiny $(j)$}}$ but with reversed orientation.}. If any point $\mathcal{Y}\in\mathbb{P}^N$ is contained in exactly one of the element of the collection, then the interior triangulation reduces to the previous notion of triangulation.

Given a collection $\lbrace (\mathbb{P}^N,\,\mathcal{P}^{\mbox{\tiny $(j)$}}) \rbrace_{j=1}^{n+1}$ of polytopes, it is a {\it canonical-form triangulation} of the empty set if
\begin{equation}\label{eq:CFtrD}
 \sum_{j=1}^{n+1}\omega(\mathcal{Y},\mathcal{P}^{\mbox{\tiny $(j)$}})\:=\:0.
\end{equation}
Consequently, given a collection $\left\{(\mathbb{P}^N,\,\mathcal{P}^{\mbox{\tiny $(j)$}})\right\}_{j=1}^{n+1}$ of projective polytopes which sign triangulates the empty set, we say $(\mathbb{P}^N,\,\mathcal{P}_{-}^{\mbox{\tiny $(n+1)$}})$ is \emph{canonical-form triangulated} by $\left\{(\mathbb{P}^N,\,\mathcal{P}^{\mbox{\tiny $(j)$}})\right\}_{j=1}^{n}$ with
\begin{equation}\label{eq:CFtrP}
    \omega(\mathcal{Y},\,\mathcal{P}_{-}^{\mbox{\tiny $(n+1)$}})\:=\:\sum_{j=1}^{n}\omega(\mathcal{Y},\mathcal{P}^{\mbox{\tiny $(j)$}}),
\end{equation}
where $\mathcal{P}_{-}$ denotes $\mathcal{P}$ with reversed orientation.

These are two different notions of \emph{signed triangulations} \cite{Arkani-Hamed:2017tmz}. In the rest of the paper we will use this latter term indistinctly for both of them.
%%%%%%%%%%%%%%%%%%%%%%%%%%%%%%%%%%%%%%%%
%%%%%%%%%%%%%%%%%%%%%%%%%%%%%%%%%%%%%%%%

\subsection{Cosmological Polytopes}\label{subsec:CP}

Let us further specialise to a special class of projective polytopes, the cosmological polytopes \cite{Arkani-Hamed:2017fdk, Benincasa:2019vqr}.

Let $(\mathbb{P}^2,\, \triangle)$ be a triangle and let $(\mathbb{P}^{3n_t-1},\,\{\triangle^{\mbox{\tiny $(j)$}}\}_{j=1}^{n_t})$ a collection of $n_t$ triangles whose vertices are linearly independent as vectors of $\mathbb{R}^{3n_t}$. The cosmological polytopes are defined as those polytopes obtained from such a collection of triangles by {\it intersecting} them in the midpoints of at most two out of their three facets. If $\{(Z_1^{\mbox{\tiny $(j)$}},\,Z_2^{\mbox{\tiny $(j)$}},\,Z_3^{\mbox{\tiny $(j)$}})\}_{j=1}^{n_e}$ are the vertices for $\{\triangle^{\mbox{\tiny $(j)$}}\}_{j=1}^{n_e}$, then the cosmological polytope is a projective polytope $(\mathbb{P}^{3n_t-r-1},\,\mathcal{P})$ with $\mathcal{P}$ being the convex hull
\begin{equation}\label{eq:CPch}
    \resizebox{0.95\hsize}{!}{$\displaystyle%
    \mathcal{P}(\mathcal{Y},\,Z)\, := \,
    \left\{
        \mathcal{Y}\,=\,\sum_{j=1}^{n_e}\sum_{k=1}^{3}c_k^{\mbox{\tiny $(j)$}}Z_k^{\mbox{\tiny $(j)$}}\:\in\:\mathbb{P}^{3n_t-r-1}\,\bigg|\,
        \begin{array}{l}
             c_k^{\mbox{\tiny $(j)$}}\,>\,0,\hspace{3.75cm}\forall\,k\,=\,1,2,3,\,\forall\,j\,\in\,[1,\,n_t]\\
             \{Z_{k-1}^{\mbox{\tiny $(j)$}}+Z_{k}^{\mbox{\tiny $(j)$}}\,\sim\,Z_{k-1}^{\mbox{\tiny $(j')$}}+Z_{k}^{\mbox{\tiny $(j')$}}\}_r, \:k\,=\,1,2,\:\,j\,\neq\,j'\in\,[1,\,n_t]
        \end{array}
    \right\}
    $},
\end{equation}
where $\{Z_{k-1}^{\mbox{\tiny $(j)$}}+Z_{k}^{\mbox{\tiny $(j)$}}\,\sim\,Z_{k-1}^{\mbox{\tiny $(j')$}}+Z_{k}^{\mbox{\tiny $(j')$}}\}_r$ indicates a set of $r\,\in\,[n_t-1,\,2(n_t-1)]$ relations between pairs of vertices of different triangles (see Figure \ref{fig:CP}).

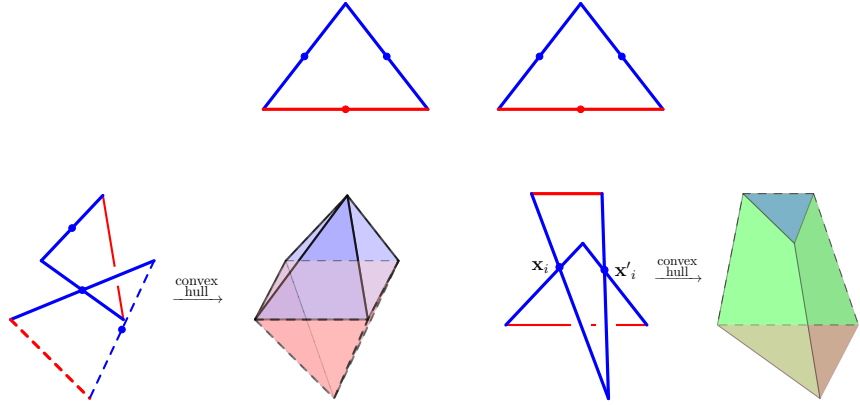
\begin{figure}[h]
 \centering
 \begin{tikzpicture}[line join = round, line cap = round, ball/.style = {circle, draw, align=center, anchor=north, inner sep=0}, 
                     axis/.style={very thick, ->, >=stealth'}, pile/.style={thick, ->, >=stealth', shorten <=2pt, shorten>=2pt}, every node/.style={color=black}, scale={1.25}]
   \begin{scope}[shift={(0,2.5)}, scale={.5}]
   \coordinate (A) at (0,0);
   \coordinate (B) at (-1.75,-2.25);
   \coordinate (C) at (+1.75,-2.25);
   \coordinate (m1) at ($(A)!0.5!(B)$);
   \coordinate (m2) at ($(A)!0.5!(C)$);
   \coordinate (m3) at ($(B)!0.5!(C)$);
   \tikzset{point/.style={insert path={ node[scale=2.5*sqrt(\pgflinewidth)]{.} }}} 

   \draw[color=blue,fill=blue] (m1) circle (2pt);
   \draw[color=blue,fill=blue] (m2) circle (2pt);
   \draw[color=red,fill=red] (m3) circle (2pt);

   \draw[-, very thick, color=blue] (B) -- (A);
   \draw[-, very thick, color=blue] (A) -- (C);  
   \draw[-, very thick, color=red] (B) -- (C);    
  \end{scope}
  \begin{scope}[shift={(2.5,2.5)}, scale={.5}]
   \coordinate (A) at (0,0);
   \coordinate (B) at (-1.75,-2.25);
   \coordinate (C) at (+1.75,-2.25);
   \coordinate (m1) at ($(A)!0.5!(B)$);
   \coordinate (m2) at ($(A)!0.5!(C)$);
   \coordinate (m3) at ($(B)!0.5!(C)$);
   \tikzset{point/.style={insert path={ node[scale=2.5*sqrt(\pgflinewidth)]{.} }}} 

   \draw[color=blue,fill=blue] (m1) circle (2pt);
   \draw[color=blue,fill=blue] (m2) circle (2pt);
   \draw[color=red,fill=red] (m3) circle (2pt);

   \draw[-, very thick, color=blue] (B) -- (A);
   \draw[-, very thick, color=blue] (A) -- (C);  
   \draw[-, very thick, color=red] (B) -- (C);    
  \end{scope}
  \begin{scope}[scale={.4}, shift={(-7,2)}, transform shape]
   \pgfmathsetmacro{\factor}{1/sqrt(2)};  
   \coordinate  (B2) at (1.5,-3,-1.5*\factor);
   \coordinate  (A1) at (-1.5,-3,-1.5*\factor);
   \coordinate  (B1) at (1.5,-3.75,1.5*\factor);
   \coordinate  (A2) at (-1.5,-3.75,1.5*\factor);  
   \coordinate  (C1) at (0.75,-.65,.75*\factor);
   \coordinate  (C2) at (0.4,-6.05,.75*\factor);
   \coordinate (Int) at (intersection of A2--B2 and B1--C1);
   \coordinate (Int2) at (intersection of A1--B1 and A2--B2);

   \tikzstyle{interrupt}=[
    postaction={
        decorate,
        decoration={markings,
                    mark= at position 0.5 
                          with
                          {
                            \node[rectangle, color=white, fill=white, below=-.1 of Int] {};
                          }}}
   ]
   
   \draw[interrupt,thick,color=red] (B1) -- (C1);
   \draw[-,very thick,color=blue] (A1) -- (B1);
   \draw[-,very thick,color=blue] (A2) -- (B2);
   \draw[-,very thick,color=blue] (A1) -- (C1);
   \draw[-, dashed, very thick, color=red] (A2) -- (C2);
   \draw[-, dashed, thick, color=blue] (B2) -- (C2);

   \coordinate (x2) at ($(A1)!0.5!(B1)$);
   \draw[fill,color=blue] (x2) circle (2.5pt);   
   \coordinate (x1) at ($(C1)!0.5!(A1)$);
   \draw[fill,color=blue] (x1) circle (2.5pt);
   \coordinate (x3) at ($(B2)!0.5!(C2)$);
   \draw[fill,color=blue] (x3) circle (2.5pt);

   \node[right=2.25cm of x2] (cht) {$\displaystyle\xrightarrow{\substack{\mbox{convex} \\ \mbox{hull}}}$};
  \end{scope}
  \begin{scope}[scale={.4}, shift={(-.5,2)}, transform shape]
   \pgfmathsetmacro{\factor}{1/sqrt(2)};  
   \coordinate  (B2) at (1.5,-3,-1.5*\factor);
   \coordinate  (A1) at (-1.5,-3,-1.5*\factor);
   \coordinate  (B1) at (1.5,-3.75,1.5*\factor);
   \coordinate  (A2) at (-1.5,-3.75,1.5*\factor);  
   \coordinate  (C1) at (0.75,-.65,.75*\factor);
   \coordinate  (C2) at (0.4,-6.05,.75*\factor);

   \draw[-,dashed,fill=blue!30, opacity=.7] (A1) -- (B2) -- (C1) -- cycle;
   \draw[-,thick,fill=blue!20, opacity=.7] (A1) -- (A2) -- (C1) -- cycle;
   \draw[-,thick,fill=blue!20, opacity=.7] (B1) -- (B2) -- (C1) -- cycle;
   \draw[-,thick,fill=blue!35, opacity=.7] (A2) -- (B1) -- (C1) -- cycle;

   \draw[-,dashed,fill=red!30, opacity=.3] (A1) -- (B2) -- (C2) -- cycle;
   \draw[-,dashed, thick, fill=red!50, opacity=.5] (B2) -- (B1) -- (C2) -- cycle;
   \draw[-,dashed,fill=red!40, opacity=.3] (A1) -- (A2) -- (C2) -- cycle;
   \draw[-,dashed, thick, fill=red!45, opacity=.5] (A2) -- (B1) -- (C2) -- cycle;
  \end{scope}
  \begin{scope}[scale={.5}, shift={(4.5,.75)}, transform shape]
   \pgfmathsetmacro{\factor}{1/sqrt(2)};
   \coordinate  (c1b) at (0.75,0,-.75*\factor);
   \coordinate  (b1b) at (-.75,0,-.75*\factor);
   \coordinate  (a2b) at (0.75,-.65,.75*\factor);
   
   \coordinate  (c2b) at (1.5,-3,-1.5*\factor);
   \coordinate  (b2b) at (-1.5,-3,-1.5*\factor);
   \coordinate  (a1b) at (1.5,-3.75,1.5*\factor); 

  \coordinate (Int1) at (intersection of b2b--c2b and b1b--a1b);
   \coordinate (Int2) at (intersection of b2b--c2b and c1b--a1b);
   \coordinate (Int3) at (intersection of b2b--a2b and b1b--a1b);
   \coordinate (Int4) at (intersection of a2b--c2b and c1b--a1b);
   \tikzstyle{interrupt}=[
    postaction={
        decorate,
        decoration={markings,
                    mark= at position 0.5 
                          with
                          {
                            \node[rectangle, color=white, fill=white] at (Int1) {};
                            \node[rectangle, color=white, fill=white] at (Int2) {};                            
                          }}}
   ]

   \node at (c1b) (c1c) {};
   \node at (b1b) (b1c) {};
   \node at (a2b) (a2c) {};
   \node at (c2b) (c2c) {};
   \node at (b2b) (b2c) {};
   \node at (a1b) (a1c) {};

   \draw[interrupt,thick,color=red] (b2b) -- (c2b);
   \draw[-,very thick,color=red] (b1b) -- (c1b);
   \draw[-,very thick,color=blue] (b1b) -- (a1b);
   \draw[-,very thick,color=blue] (a1b) -- (c1b);   
   \draw[-,very thick,color=blue] (b2b) -- (a2b);
   \draw[-,very thick,color=blue] (a2b) -- (c2b);

   \node[ball,text width=.15cm,fill,color=blue, above=-.06cm of Int3, label=left:{\large ${\bf x}_i$}] (Inta) {};
   \node[ball,text width=.15cm,fill,color=blue, above=-.06cm of Int4, label=right:{\large ${\bf x'}_i$}] (Intb) {};

   \node[right=.875cm of Intb, scale=.75] (chl) {$\displaystyle\xrightarrow{\substack{\mbox{convex} \\ \mbox{hull}}}$};
  \end{scope}
  \begin{scope}[scale={.5}, shift={(9,.75)}, transform shape]
   \pgfmathsetmacro{\factor}{1/sqrt(2)};
   \coordinate  (c1b) at (0.75,0,-.75*\factor);
   \coordinate  (b1b) at (-.75,0,-.75*\factor);
   \coordinate  (a2b) at (0.75,-.65,.75*\factor);
   
   \coordinate  (c2b) at (1.5,-3,-1.5*\factor);
   \coordinate  (b2b) at (-1.5,-3,-1.5*\factor);
   \coordinate  (a1b) at (1.5,-3.75,1.5*\factor);

   \draw[-,dashed,fill=green!50,opacity=.6] (c1b) -- (b1b) -- (b2b) -- (c2b) -- cycle;
   \draw[draw=none,fill=red!60, opacity=.45] (c2b) -- (b2b) -- (a1b) -- cycle;
   \draw[-,fill=blue!,opacity=.3] (c1b) -- (b1b) -- (a2b) -- cycle; 
   \draw[-,fill=green!50,opacity=.4] (b1b) -- (a2b) -- (a1b) -- (b2b) -- cycle;
   \draw[-,fill=green!45!black,opacity=.2] (c1b) -- (a2b) -- (a1b) -- (c2b) -- cycle;  
  \end{scope}
 \end{tikzpicture}
\caption{Cosmological polytopes constructed from $(\mathbb{P}^5,\,\{\triangle^{\mbox{\tiny $(j)$}}\}_{j=1}^2)$. The (red) blue facets in the triangles indicate the ones which can(not) be {\it intersected}. The figures at the bottom left and bottom right depict the convex hull of the vertices of triangles after one and two linear relations has been respectively imposed, so that the first one lives in $\mathbb{P}^4$ and the second in $\mathbb{P}^3$ . 
 }
 \label{fig:CP}
\end{figure}

The construction just presented can be extended. Let $(\mathbb{P}^1,\,\mathcal{S})$ be a segment, which can be seen as a codimension-$1$ projection of a triangle, so that its two intersectable facets are projected onto each other to be the interior of the segment, and its non-intersectable one is shrunk to a point \cite{Benincasa:2018ssx} that will be referred to as the non-intersectable vertex of the segment. Let $\left(\mathbb{P}^{3n_t+2n_h-1},\,\{\{\triangle^{\mbox{\tiny $(j)$}}\}_{j=1}^{n_t},\,\{\mathcal{S}^{\mbox{\tiny $(g)$}}\}_{g=1}^{n_h}\}\right)$ the collection of $n_e$ triangles and $n_h$ segments whose vertices are all linearly independent of each other as vectors of $\mathbb{R}^{3n_t+2n_h}$. The {\it extended cosmological polytopes} are then defined as the projective polytopes $(\mathbb{P}^{3n_t+2n_h-1},\,\mathcal{P})$, where $\mathcal{P}$ is the convex hull of all the vertices of the triangles and segments after triangles and segments are intersected in their midpoints. Hence $(\mathbb{P}^{3n_t+2n_h-r-1},\,\mathcal{P})$ can be constructed out of triangles only (for $n_h\,=\,0$), segments only (for $n_t\,=\,0$ -- in this case there is  just one of such polytopes for fixed $n_h$ given that any segment has only one midpoint where it can get intersected), or both triangles and segments, which is the most general case \cite{Benincasa:2019vqr}.

There is a $1-1$ correspondence between cosmological polytopes \footnote{Since now on with {\it cosmological polytopes} we will identify its extended notion, omitting to explicitly specify it for the sake of conciseness.} $(\mathbb{P}^{3n_t+2n_h-r-1},\,\mathcal{P})$, and graphs $\mathcal{G}_{\mbox{\tiny $\mathcal{P}$}}$. To each triangle let us associate a {\it two-site line graph}, {\it i.e.} a graph with two {\it sites} \footnote{In order to avoid confusion in the terminology, we will reserve {\it vertices} for the highest codimension face of the projective polytopes, and use {\it sites} for the graphs.}, one for each intersectable facet, and one edge, which corresponds to the non-intersectable facet; as far as the segment is concerned, thinking of it as a codimension-$1$ projection of a triangle, the associated graph is a {\it tadpole} (or {\it one-loop one-site}) graph, {\it i.e.} a graph with a single site, corresponding to its interior which is given by the two intersectable facets of the original triangle projected onto each other, and a loop closing itself onto this site, which correponds to its non-intersectable facet which got shrunk to a point. Then, given a cosmological polytope $(\mathbb{P}^{3n_t+2n_h-r-1},\,\mathcal{P})$ generated as an intersection of a collection of triangles and segments, its associated graph $\mathcal{G}_{\mbox{\tiny $\mathcal{P}$}}$ is obtained by merging a collection of two-site line graphs and tadpoles in their sites:
\begin{equation*}
 \begin{tikzpicture}[line join = round, line cap = round, ball/.style = {circle, draw, align=center, anchor=north, inner sep=0}, 
                     axis/.style={very thick, ->, >=stealth'}, pile/.style={thick, ->, >=stealth', shorten <=2pt, shorten>=2pt}, every node/.style={color=black}, scale={1}]
  \begin{scope}[scale={.5}, transform shape]
   \coordinate (A) at (0,0);
   \coordinate (B) at (-1.75,-2.25);
   \coordinate (C) at (+1.75,-2.25);
   \coordinate (m1) at ($(A)!0.5!(B)$);
   \coordinate (m2) at ($(A)!0.5!(C)$);
   \coordinate (m3) at ($(B)!0.5!(C)$);
   \tikzset{point/.style={insert path={ node[scale=2.5*sqrt(\pgflinewidth)]{.} }}} 

   \draw[color=blue,fill=blue] (m1) circle (2pt);
   \draw[color=blue,fill=blue] (m2) circle (2pt);
   \draw[color=red,fill=red] (m3) circle (2pt);

   \draw[-, very thick, color=blue] (B) -- (A);
   \draw[-, very thick, color=blue] (A) -- (C);  
   \draw[-, very thick, color=red] (B) -- (C);   
   
   \coordinate (s1) at ($(m2)+(4,0)$);
   \coordinate (s2) at ($(s1)+(3,0)$);
   
   \draw[color=blue,fill=blue] (s1) circle (3pt);
   \draw[color=blue,fill=blue] (s2) circle (3pt);
   \draw[-,very thick, color=red] (s1) -- (s2);
   
   \node[scale=1.75] (eq1) at ($(m2)!0.5!(s1)$) {$\displaystyle\longleftrightarrow$};
   
   \coordinate (V1) at ($(A)+(12,0)$);
   \coordinate (V2) at ($(V1)-(0,2.25)$);
   \coordinate (m3) at ($(V1)!0.5!(V2)$);
   
   \draw[-, very thick, color=blue] (V1) -- (V2);
   \draw[color=blue, fill=blue] (m3) circle (2pt);
   \draw[color=red, fill=red] (V2) circle (2pt);
   
   \coordinate (l3) at ($(m3)+(8,0)$);
   \coordinate (s3) at ($(l3)-(1.25cm,0)$);
   \draw[very thick, color=red] (l3) circle (1.25cm); 
   \draw[fill,color=blue] (s3) circle (3pt);
   
   \node[scale=1.75] (eq2) at ($(m3)!0.5!(s3)$) {$\displaystyle\longleftrightarrow$};
  \end{scope}
  \begin{scope}[shift={(0,-2)}, scale={.5}, transform shape]
   \pgfmathsetmacro{\factor}{1/sqrt(2)};
   \coordinate  (c1b) at (0.75,0,-.75*\factor);
   \coordinate  (b1b) at (-.75,0,-.75*\factor);
   \coordinate  (a2b) at (0.75,-.65,.75*\factor);
   
   \coordinate  (c2b) at (1.5,-3,-1.5*\factor);
   \coordinate  (b2b) at (-1.5,-3,-1.5*\factor);
   \coordinate  (a1b) at (1.5,-3.75,1.5*\factor); 

   \coordinate (Int1) at (intersection of b2b--c2b and b1b--a1b);
   \coordinate (Int2) at (intersection of b2b--c2b and c1b--a1b);
   \coordinate (Int3) at (intersection of b2b--a2b and b1b--a1b);
   \coordinate (Int4) at (intersection of a2b--c2b and c1b--a1b);
   \tikzstyle{interrupt}=[
    postaction={
        decorate,
        decoration={markings,
                    mark= at position 0.5 
                          with
                          {
                            \node[rectangle, color=white, fill=white] at (Int1) {};
                            \node[rectangle, color=white, fill=white] at (Int2) {};                            
                          }}}
   ]

   \node at (c1b) (c1c) {};
   \node at (b1b) (b1c) {};
   \node at (a2b) (a2c) {};
   \node at (c2b) (c2c) {};
   \node at (b2b) (b2c) {};
   \node at (a1b) (a1c) {};

   \draw[interrupt,thick,color=red] (b2b) -- (c2b);
   \draw[-,very thick,color=red] (b1b) -- (c1b);
   \draw[-,very thick,color=blue] (b1b) -- (a1b);
   \draw[-,very thick,color=blue] (a1b) -- (c1b);   
   \draw[-,very thick,color=blue] (b2b) -- (a2b);
   \draw[-,very thick,color=blue] (a2b) -- (c2b);

   \node[ball,text width=.15cm,fill,color=blue, above=-.06cm of Int3] (Inta) {};
   \node[ball,text width=.15cm,fill,color=blue, above=-.06cm of Int4] (Intb) {};

   \coordinate (Lc) at ($(Intb)+(5.25,0)$);
   \coordinate (S1) at ($(Lc)-(1.25cm,0)$);
   \coordinate (S2) at ($(Lc)+(1.25cm,0)$);
   
   \draw[-,color=red, very thick] (Lc) circle (1.25cm);
   \draw[color=blue, fill=blue] (S1) circle (3pt);
   \draw[color=blue, fill=blue] (S2) circle (3pt);
   
   \node[scale=1.75] (eq3) at ($(Intb)!0.5!(S1)$) {$\displaystyle\longleftrightarrow$};
  \end{scope}
  \begin{scope}[shift={(6,-1.75)}, scale={.5}, transform shape]
   \pgfmathsetmacro{\factor}{1/sqrt(2)};  
   \coordinate  (B2) at (1.5,-3,-1.5*\factor);
   \coordinate  (A1) at (-1.5,-3,-1.5*\factor);
   \coordinate  (B1) at (1.5,-3.75,1.5*\factor);
   \coordinate  (A2) at (-1.5,-3.75,1.5*\factor);  
   \coordinate  (C1) at (0.75,-.65,.75*\factor);
   \coordinate  (C2) at (0.4,-6.05,.75*\factor);
   \coordinate (Int) at (intersection of A2--B2 and B1--C1);
   \coordinate (Int2) at (intersection of A1--B1 and A2--B2);

   \tikzstyle{interrupt}=[
    postaction={
        decorate,
        decoration={markings,
                    mark= at position 0.5 
                          with
                          {
                            \node[rectangle, color=white, fill=white, below=-.1 of Int] {};
                          }}}
   ]
  
   \draw[interrupt,very thick,color=blue] (A1) -- (B1); 
   \draw[interrupt,very thick,color=blue] (A2) -- (B2);
   \draw[-,very thick,color=red] (B1) -- (C1);
   \draw[-,very thick,color=blue] (A1) -- (C1);

   \coordinate (x2) at ($(A1)!0.5!(B1)$);
   \draw[fill,color=blue] (x2) circle (2.5pt);   
   \coordinate (x1) at ($(C1)!0.5!(A1)$);
   \draw[fill,color=blue] (x1) circle (2.5pt);
   \draw[fill,color=red] (A2) circle (2.5pt);
   \draw[fill,color=blue] (B2) circle (2.5pt);

   \coordinate (B2b) at (B2);
   \coordinate (A2b) at (A2);
   
   \coordinate (lc) at ($(B2b)+(4,0)$);
   \coordinate (S3) at ($(lc)+(.75,0)$);
   \coordinate (S4) at ($(S3)+(3,0)$);
   
   \draw[very thick, color=red] (lc) circle (.75);
   \draw[-, very thick, color=red] (S3) -- (S4);
   \draw[color=blue, fill=blue] (S3) circle (3pt);
   \draw[color=blue, fill=blue] (S4) circle (3pt);
   
   \coordinate (tl) at ($(lc)-(.75,0)$);
   \node[scale=1.75] (eq4) at ($(B2b)!0.5!(tl)$) {$\displaystyle\longleftrightarrow$};
  \end{scope}
 \end{tikzpicture}
\end{equation*}

Notice that the number of edges $n_e$ of a graph $\mathcal{G}_{\mbox{\tiny $\mathcal{P}$}}$ is given by the sum of the number of triangles and segments, while its number of sites $n_s$ depends on the number $r$ of intersections: $n_e\:=\:n_t+n_h,\; n_s\:=\:2n_t+n_h-r$.  Thus, given a graph $\mathcal{G}$ it is possible to associate a polytope $(\mathbb{P}^{n_s+n_e-1},\,\mathcal{P}_{\mbox{\tiny $\mathcal{G}$}})$, with the convex hull $\mathcal{P}_{\mbox{\tiny $G$}}\,=\,\mathcal{P}$ as described above.

Given a cosmological polytope and its associated graph $\mathcal{G}$, there is a canonical way to assign a local coordinate chart in projective space for parametrising the polytope with a correspondence between such local coordinates and weights on sites and edges of $\mathcal{G}$. Let us consider the collection $\left(\mathbb{P}^{3n_t+2n_h-1},\,\{\{\triangle^{\mbox{\tiny $(j)$}}\}_{j=1}^{n_t},\,\{\mathcal{S}^{\mbox{\tiny $(g)$}}\}_{g=1}^{n_h}\}\right)$ of $n_t$ triangles and $n_h$ segments and choose the midpoints of the facets of the triangles, the midpoints of the segments as well as non-intersectable vertex of the segment, as a basis for $\mathbb{R}^{3n_e+2n_h}$. Let us indicate these vectors as $\{\mathbf{x}_j,\,\mathbf{y}_j,\,\mathbf{x}'_j\}$ for $\triangle^{\mbox{\tiny $(j)$}}$, where $\mathbf{x}_j,\,\mathbf{x}'_j$ are the midpoints of the intersectable sides of $\triangle^{\mbox{\tiny $(j)$}}$ and $\mathbf{y}_j$ is the midpoint for the non-intersectable one, and let $\{\mathbf{x}''_g,\,\mathbf{h}_g\}$ be the midpoint and the non-intersectable vertex for $\mathcal{S}^{\mbox{\tiny $(g)$}}$ respectively. Then, on this basis a generic point $\mathcal{Y}\,\in\,\mathbb{P}^{3n_e+2n_h-1}$ can be written as 
\begin{equation}
    \mathcal{Y}\:=\:\sum_{j=1}^{n_e}(x_j\mathbf{x}_j+y_j\mathbf{y}_j+x'_j\mathbf{x}'_j) + \sum_{g=1}^{n_h}(x''_g\mathbf{x}''_g+h_g\mathbf{h}_h),
\end{equation}
where the coefficients $\{\{x_j\,y_j,\,x'_j\},\,\{x''_g,\,h_g\}\}$ are the homogeneous coordinates in this patch. Then, in the association of a two-site line graph to a triangle, one assigns $x_j$ and $x'_j$ as weights of the graph sites and $y_j$ as weight of the edge connecting the sites; similarly, in the association of a tadpole to a segment, one assigns $x''_j$ and $h_j$ to the site and edge respectively. In constructing a cosmological polytope, each intersection condition \eqref{eq:CPch} identifies two elements of this basis, reducing the midpoint coordinates by one and, hence, each two-site line and tadpole subgraphs has the very same weight assignation as just described, but identifying the weights of common sites. Taking the midpoint basis, $\mathcal{P}$ is the convex hull of the vertices 
\begin{equation}\label{eq:CPxyx}
    \{\mathbf{x}_j-\mathbf{y}_j+\mathbf{x}'_j,\: \mathbf{x}_j+\mathbf{y}_j-\mathbf{x}'_j,\:-\mathbf{x}_j+\mathbf{y}_j+\mathbf{x}'_j\},\qquad 
    \{2\mathbf{x}''_g-\mathbf{h}_g,\:\mathbf{h}_g\}
\end{equation}
with suitable identifications among the midpoint vectors.

The definition of the cosmological polytopes as intersection of triangles and segments, allows for a simple and direct characterisation of its face structure. Given a cosmological polytope $(\mathbb{P}^{n_s+n_e-1},\,\mathcal{P})$ with associated graph $\mathcal{G}$, any of its faces $\mathcal{F}$ is given as a collection $\mathcal{V}_{\mathcal{F}}$ of vertices $Z_a^I$ ($a\,=\,1,\ldots,3n_t+2n_h$) of $\mathcal{P}$ such that $\mathcal{W}_IZ^I_a\,=\,0$, where $\mathcal{W}_I\,:=\,\tilde{x}_{sI}\mathbf{\tilde{x}}_{sI}+\tilde{y}_{eI}\mathbf{\tilde{y}}_e+\tilde{h}_g\mathbf{\tilde{h}}_g$\footnote{Here the summation over the indices $s$, $e$ and $g$ is understood, with $s$ running on the number of sites of the associated graph $\mathcal{G}$, $e$ on the number of its edges connecting two different sites, and $g$ on the number of its tadpoles subgraphs.} is the hyperplane in $\mathbb{P}^{n_s+n_e-1}$ where the facet lives such that, compatibly with the constraints on the midpoints of the generating triangles and segments,  $\mathbf{\tilde{x}}_{sI}\mathbf{x}_{s'}^I\,=\,\delta_{ss'}$, $\mathbf{\tilde{y}}_{eI}\mathbf{y}_{e'}^I\,=\,\delta_{e e'}$, $\mathbf{\tilde{h}}_{gI}\mathbf{h}_{g'}^I\,=\,\delta_{g g'}$, and $\mathbf{\tilde{y}}_{eI}\mathbf{y}_{e'}^I\,=\,\delta_{e e'}$ with all the other scalar products between vectos and co-vectors vanishing. All the other vertices of $\mathcal{P}$ which are not on the facet identified by the hyperplane $\mathcal{W}_i$ are such that $\mathcal{W}_IZ^{I}$. Each of these hyperplanes is in a $1-1$ correspondence with a subgraph $\mathfrak{g}\,\subseteq\,\mathcal{G}$, so that given any subgraph $\mathfrak{g}\subseteq\mathcal{G}$, it can be written as $\mathcal{W}_I\:=\:\sum_{s\in\mathfrak{g}}\tilde{x}_{s}\mathbf{\tilde{x}}_s+\sum_{e\in\mathcal{E}^{\mbox{\tiny ext}}_{\mathfrak{g}}}\tilde{y}_e\mathbf{\tilde{y}}_e+\sum_{g\in\mathcal{H}^{\mbox{\tiny ext}}_{\mathfrak{g}}}\tilde{h}_g\mathbf{\tilde{h}}_g$, with $\mathcal{E}^{\mbox{\tiny ext}}_{\mathfrak{g}}$ and $\mathcal{H}^{\mbox{\tiny ext}}_{\mathfrak{g}}$ being the sets of edges and tadpoles respectively which are external to the subgraph $\mathfrak{g}$ and depart from the sites of $\mathfrak{g}$.

The correspondence between cosmological polytopes and graphs allows to extract all the information about the polytope from the associated graph. For example, it allows to know all the vertices belonging to a certain face identified by an hyperplane $\mathcal{W}$, by introducing a marking on the graphs that identifies those which {\it do not} live on $\mathcal{W}$
\begin{equation*}
 \begin{tikzpicture}[ball/.style = {circle, draw, align=center, anchor=north, inner sep=0}, cross/.style={cross out, draw, minimum size=2*(#1-\pgflinewidth), inner sep=0pt, outer sep=0pt}, scale={1.125}, transform shape]
  \begin{scope}
   \node[ball,text width=.18cm,fill,color=black,label=below:{\footnotesize $x_{s\phantom{'}}$}] at (0,0) (v1) {};
   \node[ball,text width=.18cm,fill,color=black,label=below:{\footnotesize $x_{s'}$},right=1.5cm of v1.east] (v2) {};  
   \draw[-,thick,color=black] (v1.east) edge node [text width=.18cm,below=.1cm,midway] {\footnotesize $y_e$} (v2.west);
   \node[very thick, cross=4pt, rotate=0, color=blue, right=.7cm of v1.east]{};
   \coordinate (x) at ($(v1)!0.5!(v2)$);
   \node[below=.375cm of x, scale=.85] (lb1) {$\mathcal{W}\cdot({\bf x}_s+{\bf x}_{s'}-{\bf y}_e)>\,0$};  
  \end{scope}
  \begin{scope}[shift={(4.5,0)}]%, scale={.45}]
   \node[ball,text width=.18cm,fill,color=black,label=below:{\footnotesize $x_{s\phantom{'}}$}] at (0,0) (v1) {};
   \node[ball,text width=.18cm,fill,color=black,label=below:{\footnotesize $x_{s'}$},right=1.5cm of v1.east] (v2) {};  
   \draw[-,thick,color=black] (v1.east) edge node [text width=.18cm,below=.1cm,midway] {\footnotesize $y_e$} (v2.west);
   \node[very thick, cross=4pt, rotate=0, color=blue, left=.1cm of v2.west]{};
   \coordinate (x) at ($(v1)!0.5!(v2)$);
   \node[below=.375cm of x, scale=.85] (lb1) {$\mathcal{W}\cdot({\bf x}_{s'}+{\bf y}_e-{\bf x}_s)>\,0$};  
  \end{scope}
  \begin{scope}[shift={(9,0)}]%, scale={.45}]
   \node[ball,text width=.18cm,fill,color=black,label=below:{\footnotesize $x_{s\phantom{'}}$}] at (0,0) (v1) {};
   \node[ball,text width=.18cm,fill,color=black,label=below:{\footnotesize $x_{s'}$},right=1.5cm of v1.east] (v2) {};  
   \draw[-,thick,color=black] (v1.east) edge node [text width=.18cm,below=.1cm,midway] {\footnotesize $y_e$} (v2.west);
   \node[very thick, cross=4pt, rotate=0, color=blue, right=.1cm of v1.east]{};
   \coordinate (x) at ($(v1)!0.5!(v2)$);
   \node[below=.375cm of x, scale=.85] (lb1) {$\mathcal{W}\cdot({\bf x}_s+{\bf y}_e-{\bf x}_{s'})>\,0$};  
  \end{scope}
  \begin{scope}[shift={(2.25,-2.25)}, scale={.75}]
   \coordinate (LC) at (,0); 
   \coordinate [label=left:{$\displaystyle x_s$}] (x) at ($(LC)+(-1.25cm,0)$);
   \coordinate [label=right:{$\displaystyle h_g$}] (h) at ($(LC)+(1.25cm,0)$);    
   \tikzset{point/.style={insert path={ node[scale=2.5*sqrt(\pgflinewidth)]{.} }}} 

   \draw[very thick] (LC) circle (1.25cm); 
   \draw[fill] (x) circle (3pt);
   \node[very thick, cross=4pt, rotate=0, color=blue] (X2) at (h) {};   
  
   \coordinate [label={$\displaystyle\mathcal{W}\cdot(2\mathbf{x}_s-\mathbf{h}_g)>0$}] (hyp1) at ($(LC)-(0,2)$);
  \end{scope}
  \begin{scope}[shift={(7,-2.25)}, scale={.75}]
   \coordinate (LC) at (,0); 
   \coordinate [label=left:{$\displaystyle x_s$}] (x) at ($(LC)+(-1.25cm,0)$);
   \coordinate [label=right:{$\displaystyle h_g$}] (h) at ($(LC)+(1.25cm,0)$);    
   \tikzset{point/.style={insert path={ node[scale=2.5*sqrt(\pgflinewidth)]{.} }}} 

   \draw[very thick] (LC) circle (1.25cm); 
   \draw[fill] (x) circle (3pt);
   \coordinate (va) at ($(x)+(.02,.375)$);
   \coordinate (vb) at ($(x)+(.02,-.375)$);
   \node[very thick, cross=4pt, rotate=0, color=blue] (X1a) at (va) {};      
   \node[very thick, cross=4pt, rotate=0, color=blue] (X1b) at (vb) {};  
  
   \coordinate [label={$\displaystyle\mathcal{W}\cdot\mathbf{h}_g>0$}] (hyp1) at ($(LC)-(0,2)$);
  \end{scope}
 \end{tikzpicture}
\end{equation*}
where the two vertices indicated by a marking close to the only site indicate the very same vertex $\mathbf{h}$. Hence considering a general face of a cosmological polytope, the associated graph $\mathcal{G}$ gets marked in the middle of its edges which are internal to the subgraph $\mathfrak{g}$, and in the extreme close to $\mathfrak{g}$ for those edges which are external to $\mathfrak{g}$:
\begin{equation*}
  \begin{tikzpicture}[ball/.style = {circle, draw, align=center, anchor=north, inner sep=0}, cross/.style={cross out, draw, minimum size=2*(#1-\pgflinewidth), inner sep=0pt, outer sep=0pt}, scale=1, transform shape]
   \begin{scope}[scale=.8, transform shape]
    \node[ball,text width=.18cm,fill,color=black] at (0,0) (x1) {};    
    \node[ball,text width=.18cm,fill,color=black,right=1.2cm of x1.east] (x2) {};    
    \node[ball,text width=.18cm,fill,color=black,right=1.2cm of x2.east] (x3) {};
    \node[ball,text width=.18cm,fill,color=black] at (-1,.8) (x4) {};    
    \node[ball,text width=.18cm,fill,color=black] at (-1,-.8) (x5) {};    
    \node[ball,text width=.18cm,fill,color=black] at (-1.7,-2) (x6) {};    
    \node[ball,text width=.18cm,fill,color=black] at (-.3,-2) (x7) {};

    \node[above=.35cm of x5.north] (ref2) {};
    \coordinate (Int2) at (intersection of x5--x1 and ref2--x2);  

    \def\r{.225}
    \pgfmathsetmacro\x{\r*cos{60}};
    \pgfmathsetmacro\y{\r*sin{60}};
    \coordinate (c1u) at ($(x1)+(\x,\y)$);
    \coordinate (c1l) at ($(x1)!-0.175!(x2)$);
    \coordinate (c1r) at ($(x1)+(\x,-\y)$);
    \coordinate (c2u) at ($(x2)+(0,.2cm)$);
    \coordinate (c2d) at ($(x2)-(0,.2cm)$);
    \coordinate (c3) at ($(x3)!-0.15!(x2)$);
    \coordinate (c4) at ($(x4)!-0.15!(x1)$);
    \coordinate (c5u) at ($(x5)+(-\x,\y)$);
    \coordinate (c5r) at ($(x5)+(.3cm, 0)$);
    \coordinate (c5d) at ($(x5)+(0,-.3cm)$);
    \coordinate (c6) at ($(x6)!-0.15!(x5)$);
    \coordinate (c7) at ($(x7)!-0.15!(x5)$);

    \coordinate (c1xu) at ($(c1u)+(\x,\y)$);
    \coordinate (c1xl) at ($(x1)!-0.35!(x2)$);
    \coordinate (c1xr) at ($(x1)+(\x,-\y)+(\x,-\y)$);
    \coordinate (c2xu) at ($(x2)+(0,.4cm)$);
    \coordinate (c2xd) at ($(x2)-(0,.4cm)$);
    \coordinate (c3x) at ($(x3)+(.4cm,0)$);
    \coordinate (c4x) at ($(x4)!-.3!(x1)$);
    \coordinate (c5xu) at ($(c5u)+(-\x,\y)$);
    \coordinate (c5xr) at ($(c5r)+(.25cm,0)$);
    \coordinate (c5xd) at ($(c5d)+(0,-.25cm)$);
    \coordinate (c6x) at ($(x6)!-0.3!(x5)$);
    \coordinate (c7x) at ($(x7)!-0.3!(x5)$);

    \draw[-,thick,color=black] (x1) -- (x2) -- (x3); 
    \draw[-,thick,color=black] (x1) -- (x4);
    \draw[-,thick,color=black] (x5) -- (x1);
    \draw[-,thick,color=black] (x5) -- (x7);   
    \draw[-,thick,color=black] (x5) -- (x6); 

    \draw[thick] (c1u) circle (.2cm);
    \draw[thick] (c1l) ellipse (.25cm and .15cm);
    \draw[thick] (c1r) circle (.2cm);
    \draw[thick] (c2u) circle (.2cm);    
    \draw[thick] (c2d) circle (.2cm);
    \draw[thick] (c3) circle (.2cm);
    \draw[thick] (c4) circle (.2cm);
    \draw[thick] (c5u) circle (.2cm);
    \draw[thick] (c5r) ellipse (.25cm and .15cm);
    \draw[thick] (c5d) ellipse (.15cm and .25cm);
    \draw[thick] (c6) circle (.2cm);    
    \draw[thick] (c7) circle (.2cm);

    \def\rr{.113}
    \pgfmathsetmacro\xx{\rr*cos{60}};
    \pgfmathsetmacro\yy{\rr*sin{60}};
    \coordinate (a1u) at ($(c1xu)+(\xx,\yy)$);
    \coordinate (a1l) at ($(x1)!-0.7!(x2)$);
    \coordinate (a1r) at ($(c1xr)+(\xx,-\yy)$);
    \coordinate (a2u) at ($(c2xu)+(0, .25cm)$);
    \coordinate (a2d) at ($(c2xd)-(0, .25cm)$);
    \coordinate (a3u) at ($(x3)+(0, .5cm)$);
    \coordinate (a3r) at ($(x3)!-0.5!(x2)$);
    \coordinate (a3d) at ($(x3)-(0,.5cm)$);
    \coordinate (a4ur) at ($(c4x)+(\x,\y)$);
    \coordinate (a4ul) at ($(x4)!-.6!(x1)$);
    \coordinate (a4dl) at ($(c4x)-(\x,\y)$);
    \coordinate (a5u) at ($(c5xu)+(-\x,\y)$);
    \coordinate (a5r) at ($(c5xr)+(.1cm,0)$);
    \coordinate (a6ul) at ($(c6x)+(-.25,0)$);
    \coordinate (a6d) at ($(x6)!-.45!(x5)$);
    \coordinate (a6r) at ($(c6x)+(.25,0)$);
    \coordinate (a67) at ($(x6)!0.5!(x7)$);
    \coordinate (a7l) at ($(c7x)-(.25,0)$);
    \coordinate (a7d) at ($(x7)!-0.45!(x5)$);
    \coordinate (a7r) at ($(c7x)+(.25,0)$);
     
    \draw[red!50!black, thick] plot [smooth cycle] coordinates {(a3r) (a3u) (a2u) (a1u) (a4ur) (a4ul) (a4dl) (a1l) (a5u) (a6ul) (a6d) (a6r) (a67) (a7l) (a7d) (a7r) (a5r) (a1r) (a2d) (a3d)};
    \node[color=red!50!black,] at ($(x5)+(3,0)$) {\large $\mathfrak{g}\,=\,\mathcal{G}$}; 

    \coordinate (m1) at ($(x1)!0.5!(x4)$);
    \coordinate (m2) at ($(x1)!0.5!(x2)$);
    \coordinate (m3) at ($(x2)!0.5!(x3)$);
    \coordinate (m4) at ($(x1)!0.5!(x5)$);
    \coordinate (m5) at ($(x5)!0.5!(x6)$);
    \coordinate (m6) at ($(x5)!0.5!(x7)$);

    \node[very thick, cross=4pt, rotate=0, color=blue] at (m1) {};
    \node[very thick, cross=4pt, rotate=0, color=blue] at (m2) {};
    \node[very thick, cross=4pt, rotate=0, color=blue] at (m3) {};  
    \node[very thick, cross=4pt, rotate=0, color=blue] at (m4) {};  
    \node[very thick, cross=4pt, rotate=0, color=blue] at (m5) {};  
    \node[very thick, cross=4pt, rotate=0, color=blue] at (m6) {};

    \node[very thick, cross=4pt, rotate=0, color=blue] at (c1xu) {};
    \node[very thick, cross=4pt, rotate=0, color=blue] at (c1xl) {};
    \node[very thick, cross=4pt, rotate=0, color=blue] at (c1xr) {};
    \node[very thick, cross=4pt, rotate=0, color=blue] at (c2xu) {};
    \node[very thick, cross=4pt, rotate=0, color=blue] at (c2xd) {};
    \node[very thick, cross=4pt, rotate=0, color=blue] at (c3x) {};
    \node[very thick, cross=4pt, rotate=0, color=blue] at (c4x) {};
    \node[very thick, cross=4pt, rotate=0, color=blue] at (c5xu) {};
    \node[very thick, cross=4pt, rotate=0, color=blue] at (c5xr) {};
    \node[very thick, cross=4pt, rotate=0, color=blue] at (c5xd) {};
    \node[very thick, cross=4pt, rotate=0, color=blue] at (c6x) {};
    \node[very thick, cross=4pt, rotate=0, color=blue] at (c7x) {};
 
   \end{scope}
   \begin{scope}[shift={(5,-1.75)}, scale={1.5}, transform shape]
   \coordinate (v1) at (0,0);
   \coordinate (v2) at ($(v1)+(0,1.25)$);
   \coordinate (v3) at ($(v2)+(1,0)$);
   \coordinate (v4) at ($(v3)+(1,0)$);
   \coordinate (v5) at ($(v4)-(0,.625)$);   
   \coordinate (v6) at ($(v5)-(0,.625)$);
   \coordinate (v7) at ($(v6)-(1,0)$);
   \draw[thick] (v1) -- (v2) -- (v3) -- (v4) -- (v5) -- (v6) -- (v7) -- cycle;
   \draw[thick] (v3) -- (v7);
   \draw[fill=black] (v1) circle (2pt);
   \draw[fill=black] (v2) circle (2pt);
   \draw[fill=black] (v3) circle (2pt);
   \draw[fill=black] (v4) circle (2pt);
   \draw[fill=black] (v5) circle (2pt);
   \draw[fill=black] (v6) circle (2pt);
   \draw[fill=black] (v7) circle (2pt);   
   \coordinate (v12) at ($(v1)!0.5!(v2)$);   
   \coordinate (v23) at ($(v2)!0.5!(v3)$);
   \coordinate (v34) at ($(v3)!0.5!(v4)$);
   \coordinate (v45) at ($(v4)!0.5!(v5)$);   
   \coordinate (v56) at ($(v5)!0.5!(v6)$);   
   \coordinate (v67) at ($(v6)!0.5!(v7)$);
   \coordinate (v71) at ($(v7)!0.5!(v1)$);   
   \coordinate (v37) at ($(v3)!0.5!(v7)$);   
   \node[very thick, cross=4pt, rotate=0, color=blue, scale=.625] at (v34) {};
   \node[very thick, cross=4pt, rotate=0, color=blue, scale=.625] at (v45) {};
   \node[very thick, cross=4pt, rotate=0, color=blue, scale=.625, left=.15cm of v3] (v3l) {};
   \node[very thick, cross=4pt, rotate=0, color=blue, scale=.625, below=.15cm of v3] (v3b) {};   
   \node[very thick, cross=4pt, rotate=0, color=blue, scale=.625, below=.1cm of v5] (v5b){};
   \coordinate (a) at ($(v3l)!0.5!(v3)$);
   \coordinate (b) at ($(v3)+(0,.125)$);
   \coordinate (c) at ($(v34)+(0,.175)$);
   \coordinate (d) at ($(v4)+(0,.125)$);
   \coordinate (e) at ($(v4)+(.125,0)$);
   \coordinate (f) at ($(v45)+(.175,0)$);
   \coordinate (g) at ($(v5)+(.125,0)$);
   \coordinate (h) at ($(v5b)!0.5!(v5)$);
   \coordinate (i) at ($(v5)-(.125,0)$);
   \coordinate (j) at ($(v45)-(.175,0)$);
   \coordinate (k) at ($(v34)-(0,.175)$);
   \coordinate (l) at ($(v3)-(0,.125)$);
   \draw [thick, red!50!black] plot [smooth cycle] coordinates {(a) (b) (c) (d) (e) (f) (g) (h) (i) (j) (k) (l)};
   \node[below=.05cm of k, color=red!50!black] {\footnotesize $\displaystyle\mathfrak{g}$};   
  \end{scope}
 \end{tikzpicture}  
\end{equation*}
where the subgraph is encircled. Summarising, the marking in the middle of an edge of $\mathcal{G}$ indicates that the corresponding vertex $\mathbf{x}_s-\mathbf{y}_e+\mathbf{x}_{s'}$  does not belong to the face, while the marking in the extreme of the edge close to the graph site with weight $x_{s'}$ indicates that the vertex $-\mathbf{x}_s+\mathbf{y}_e+\mathbf{x}_{s'}$ does not belong to the relevant face.

%%%%%%%%%%%%%%%%%%%%%%%%%%%%%%%%%%%%%%%%
%%%%%%%%%%%%%%%%%%%%%%%%%%%%%%%%%%%%%%%%
%%%%%%%%%%%%%%%%%%%%%%%%%%%%%%%%%%%%%%%%%

\section{Projective Polytopes and Covariant Forms}\label{sec:DFHP}

Projective polytopes, as well as more generally positive geometries, are in $1-1$ correspondence with canonical forms, which are meromorphic forms with simple poles only. In this section, we show that:
\begin{enumerate}[label=(\alph*)]
    \item given a projective polytope $(\mathbb{P}^N,\,\mathcal{P})$, it is possible to associate a class of differential forms $\omega^{(k)}(\mathcal{Y},\mathcal{P})$ to it, which we call \emph{covariant forms}. These are
          meromorphic forms with poles of higher multiplicity on the boundaries of $\mathcal{P}$, and are distinguished by a $GL(1)$-scaling of degree $k$. For a fixed $GL(1)$ scaling and fixed multiplicities $m_j$'s of the poles, the covariant meromorphic form associated to a given polytope $(\mathbb{P}^N,\,\mathcal{P})$ is {\it not} unique;
    \item given a projective polytope $(\mathbb{P}^N,\,\mathcal{P})$, we define a more general way of associating differential forms with $GL(1)$ scaling to it. In 
          particular, we introduce the notion of {\it covariant pairing} $(\mathcal{P},\,\omega^{\mbox{\tiny $(k)$}})$ as the association of a differential meromorphic form with $GL(1)$ scaling of degree $k$ whose poles are along the boundary components of a signed triangulation of $(\mathbb{P}^N,\,\mathcal{P})$, including the collection of subsets of boundary components which triangulates the empty set. Moreover, the cancellation of spurious poles along such subsets, is such that that the order of the associated poles is lowered in the sum, but in general remains non-zero; 
    \item it is possible to complete the geometric-combinatorial characterisation of covariant forms and covariant pairings by relating them to higher dimensional projective
          polytopes whose restrictions onto certain hyperplanes return the polytope they are associated to. In particular, we introduce the notion {\it covariant restriction} of a canonical form of a polytope onto a given hyperplane, which maps the canonical form of the polytope into a covariant form associated to the restricion of the polytope on the hyperlpane, or into a differential form in covariant pairing with it.
\end{enumerate}

%%%%%%%%%%%%%%%%%%%%%%%%%%%%%%%%%%%%%%%%
%%%%%%%%%%%%%%%%%%%%%%%%%%%%%%%%%%%%%%%%

\subsection{Covariant Forms}\label{subsec:CovFrm}

Let us begin with defining a covariant form. Let $(\mathbb{P}^N,\,\mathcal{P})$ be a projective polytope with canonical form \eqref{eq:CFP} with $\mathcal{P}$ defined via the set of inequalities $\{q_j(\mathcal{Y})\, :=\,\mathcal{Y}^I\mathcal{W}_I^{\mbox{\tiny $(j)$}}\,\ge\,0,\,j\,=\,1,\ldots,\,\tilde{\nu}\}$, and let $\{m_j\in\,\mathbb{N},\,j\,=\,1,\ldots,\tilde{\nu}\}$ be a set of strictly positive integers, then a \emph{covariant form} of degree $k\,\in\,\mathbb{N}_0$ is defined as
\begin{equation}\label{eq:CFdk} 
 \omega^{\mbox{\tiny $(k)$}}(\mathcal{Y},\,\mathcal{P})\:=\:\frac{\mathfrak{n}_{\delta}(\mathcal{Y})\langle\mathcal{Y}d^N\mathcal{Y}\rangle}{\prod_{j=1}^{\tilde{\nu}}q^{m_j}_j(\mathcal{Y})},\hspace{1cm}
 \mbox{deg}\{\mathfrak{n}_{\delta}\}\: :=\:\delta,
\end{equation}
such that
\begin{enumerate}[label=(\roman*)]
 \item under the action of a $GL(1)$ transformation $\mathcal{Y}\,\longrightarrow\,\lambda\,\mathcal{Y}$, $\lambda\,\in\,\mathbb{R}_{+}$, the covariant form $\omega^{\mbox{\tiny $(k)$}}(\mathcal{\mathcal{Y},\mathcal{P}})$ transforms 
       as
       \begin{equation}\label{eq:CFdkGL1}
        \omega^{\mbox{\tiny $(k)$}}(\mathcal{\mathcal{Y},\mathcal{P}})\:\longrightarrow\:\lambda^{-k}\omega^{\mbox{\tiny $(k)$}}(\mathcal{\mathcal{Y},\mathcal{P}}),\qquad k\:\in\:\mathbb{N}_0,
       \end{equation}
       with $k$ being the {\it covariant degree} of the differential form\footnote{In most of the text, when there is no ambiguity, we will refer to the covariant degree as simply as {\it degree} of the differential form with a little abuse of terminology.}.
       Such a property fixes the degree $\delta$ of the numerator $\mathfrak{n}_{\delta}(\mathcal{Y})$ of $\omega^{\mbox{\tiny $(k)$}}$ to be $\delta\:=\:\sum_{j=1}^{\tilde{\nu}}m_j-N-1-k$. The forms of degree $k$ that differ from each other by \eqref{eq:CFdkGL1} belong to the same equivalence class: 
       $$\omega^{\mbox{\tiny $(k)$}}(\mathcal{Y},\,\mathcal{P})\,\sim\,\lambda^{-k}\omega^{\mbox{\tiny $(k)$}}(\mathcal{Y},\,\mathcal{P});$$
 \item its leading Laurent  coefficient along any of the boundary components $(\mathbb{P}^{N-1},\,\partial\mathcal{P}^{\mbox{\tiny $(j)$}})$ is a covariant form of degree $k-m_j+1$ of the polytope constituted       by the boundary component $(\mathbb{P}^{N-1},\,\partial\mathcal{P}^{\mbox{\tiny $(j)$}})$ itself:
       \begin{equation}\label{eq:LorOp}
           \Lor_{\mbox{\tiny $\partial\mathcal{P}^{\mbox{\tiny $(j)$}}$}}{\hspace{-.25cm}}^{\mbox{\tiny $(m_j)$}}\left\{\omega^{\mbox{\tiny $(k)$}}(\mathcal{Y},\,\mathcal{P})\right\}\:=\:
            \omega^{\mbox{\tiny $(k-m_j+1)$}}(\mathcal{Y}',\,\partial\mathcal{P}^{\mbox{\tiny $(j)$}}),
       \end{equation}
       where $\mathcal{Y}'\,\in\,\mathbb{P}^{N-1}$, $\mathcal{L}^{\mbox{\tiny $(m_j)$}}$ is the \emph{Laurent operator} (of order $m_j$) applied to the covariant form $\omega(\mathcal{Y},\,\mathcal{P})$ along the boundary component $(\mathbb{P}^{N-1},\,\partial\mathcal{P}^{\mbox{\tiny $(j)$}})$.
    \end{enumerate}
    
  The Laurent operator in Eq.\eqref{eq:LorOp} is defined as follows. Let us parametrise $\mathbb{P}^N$ with a set of local holomorphic coordinates ($y_j,\,h_j$) such that the locus $h_j\,=\,0$ locally identifies the facet    
       $(\mathbb{P}^{N-1},\,\partial\mathcal{P}^{\mbox{\tiny $(j)$}})$, while $y_j$ collectively indicates the remaining local coordinates. Then the covariant form $\omega^{\mbox{\tiny $(k)$}}(\mathcal{\mathcal{Y},\mathcal{P}})$ shows a multiple pole in $h_j\,=\,0$ with multiplicity $m_j$ such that
       \begin{equation}\label{eq:CFdkLoc}
        \omega^{\mbox{\tiny $(k)$}}(\mathcal{\mathcal{Y},\mathcal{P}})\:=\:\omega^{\mbox{\tiny $(k-m_j+1)$}}(y_j)\,\wedge\,\frac{dh_j}{h_j^{m_j}} + \tilde{\omega}^{\mbox{\tiny $(k)$}},
       \end{equation}
       with $\tilde{\omega}^{\mbox{\tiny $(k)$}}$ being the part of the covariant form which at most shows poles in $h_j\,=\,0$ with multiplicity lower than $m_j$, {\it i.e.} it does not contribute to the leading coefficient in the Laurent expansion around $h_j\,=\,0$, and $\omega^{\mbox{\tiny $(k-m_j+1)$}}(y_j)$ is a covariant form of degree $k-m_j+1\,\in\,\mathbb{N}_0$ with poles in any of the other local variables included in the collective one $y_j$ whose multiplicity can be \emph{lower or equal} to $m_l$ ($l\,\neq\,j$):
       \begin{equation}\label{eq:LorOp2}
        \begin{split}
           \Lor_{\partial\mathcal{P}^{\mbox{\tiny $(j)$}}}{\hspace{-.25cm}}^{\mbox{\tiny $(m_j)$}}\left\{\omega^{\mbox{\tiny $(k)$}}(\mathcal{Y},\,\mathcal{P})\right\}\:&=\:
           \Lor_{\mbox{\tiny $h_j=0$}}{\hspace{-.25cm}}^{\mbox{\tiny $(m_j)$}}\left\{\omega^{\mbox{\tiny $(k)$}}(\mathcal{Y},\,\mathcal{P})\right\}\:=\:
           \omega^{\mbox{\tiny $(k-m_j+1)$}}(y_j)\:=\\
          &=\:\omega^{\mbox{\tiny $(k-m_j+1)$}}(\mathcal{Y}',\,\partial\mathcal{P}^{\mbox{\tiny $(j)$}}),
        \end{split}
       \end{equation}
       with equalities being valid locally.

Importantly, the requirement (ii) implies the existence of an upper bound for the multiplicity $m_j$ of a given pole for fixed covariance degree-$k$: $m_j\,\in\,]0,\,k+1],\;\forall\,j\,=\,1,\,\ldots,\,\tilde{\nu}$. Furthermore the conditions (i) and (ii) {\it do not} fix univocally the covariant form $\omega^{\mbox{\tiny $(k)$}}$ for a given polytope $(\mathbb{P}^N,\,\mathcal{P})$, except for the case $k\,=\,0$.

\begin{prop}\label{thm:wk0un}
 Given a polytope $(\mathbb{P}^N,\,\mathcal{P})$, there is a unique covariant form $\omega^{(0)}(\mathcal{Y},\mathcal{P})$ of degree $k\,=\,0$, and it is given by its canonical form $\omega(\mathcal{Y},\,\mathcal{P})$.
\end{prop}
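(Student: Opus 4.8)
The plan is to prove both the uniqueness and the explicit identification by induction on the projective dimension $N$, exploiting the fact that for covariant degree $k=0$ the defining data of \eqref{eq:CFdk} degenerate to those of the canonical form \eqref{eq:CFP}.

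The first step is to pin down the shape of an arbitrary degree-$0$ covariant form. The bound $m_j\in\,]0,\,k+1]$ recorded just above, specialised to $k=0$ together with $m_j\in\mathbb{N}$, forces $m_j=1$ for all $j=1,\ldots,\tilde{\nu}$; then the degree relation of condition (i), $\delta=\sum_j m_j-N-1-k$, yields $\delta=\tilde{\nu}-N-1$. Hence $\omega^{(0)}(\mathcal{Y},\mathcal{P})=\mathfrak{n}_0(\mathcal{Y})\langle\mathcal{Y}d^N\mathcal{Y}\rangle/\prod_{j=1}^{\tilde{\nu}}q_j(\mathcal{Y})$ with $\deg\mathfrak{n}_0=\tilde{\nu}-N-1$, i.e.\ exactly the form \eqref{eq:CFP}. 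Since all poles are simple, the Laurent operator of order $m_j=1$ is the ordinary residue, so condition (ii) reads $\Res_{q_j=0}\{\omega^{(0)}(\mathcal{Y},\mathcal{P})\}=\omega^{(0)}(\mathcal{Y}',\partial\mathcal{P}^{(j)})$. Because this boundary datum is, by the inductive hypothesis, the (nonzero) canonical form of the facet, the numerator $\mathfrak{n}_0$ cannot share a factor with any $q_j$: every one of the $\tilde{\nu}$ facet hyperplanes is genuinely a simple pole.

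The induction then runs as follows. For the base case $N=0$ the polytope is a point, its canonical form is the $0$-form $\pm1$, and no freedom is left once the orientation convention is fixed. For the inductive step, assume the statement for all polytopes of dimension $<N$ and let $\omega_1^{(0)},\omega_2^{(0)}$ be two degree-$0$ covariant forms of $(\mathbb{P}^N,\mathcal{P})$. By the preceding step they have the same residue along every facet hyperplane, namely $\omega(\mathcal{Y}',\partial\mathcal{P}^{(j)})$, so the difference $\eta=\omega_1^{(0)}-\omega_2^{(0)}=(\mathfrak{n}_0^{(1)}-\mathfrak{n}_0^{(2)})\langle\mathcal{Y}d^N\mathcal{Y}\rangle/\prod_j q_j$ has vanishing residue on each $\{q_j=0\}$. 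Vanishing of that residue is equivalent to $q_j\mid(\mathfrak{n}_0^{(1)}-\mathfrak{n}_0^{(2)})$ (here one uses that distinct facets give pairwise non-proportional linear forms), hence $\prod_j q_j$ divides the polynomial $\mathfrak{n}_0^{(1)}-\mathfrak{n}_0^{(2)}$ of degree at most $\tilde{\nu}-N-1<\tilde{\nu}$; therefore $\mathfrak{n}_0^{(1)}=\mathfrak{n}_0^{(2)}$ and $\omega_1^{(0)}=\omega_2^{(0)}$. (Equivalently, $\eta$ is a global holomorphic top form on $\mathbb{P}^N$ and hence zero.)

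It remains to identify the unique representative. The canonical form $\omega(\mathcal{Y},\mathcal{P})$ has numerator degree $\tilde{\nu}-N-1$, hence is $GL(1)$-invariant, i.e.\ it satisfies \eqref{eq:CFdkGL1} with $k=0$; and by its defining property \eqref{eq:CFdef}--\eqref{eq:CFC} its residue along each facet is the canonical form of that facet, which by the same induction is the degree-$0$ covariant form of the facet, so \eqref{eq:LorOp} holds with all $m_j=1$. Thus $\omega(\mathcal{Y},\mathcal{P})$ is a degree-$0$ covariant form, and by the uniqueness just established it is \emph{the} degree-$0$ covariant form. I expect the one genuinely delicate point to be the second step, namely ruling out that the numerator of a candidate $\omega^{(0)}$ secretly cancels a factor $q_j$ and thereby drops a pole: this is precisely where the non-vanishing of the lower-dimensional canonical forms --- via condition (ii) and the induction --- is indispensable. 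The sign and overall-normalisation bookkeeping, on the other hand, is harmless: it is exactly the constant $a\in\mathbb{R}$ already fixed ``by convention'' on the leading singularities in the definition of canonical forms.
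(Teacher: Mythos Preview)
Your proof is correct and follows essentially the same route as the paper: both arguments first force $m_j=1$ for all $j$ from condition~(ii) (you quote the bound $m_j\le k+1$ recorded just before the proposition, the paper re-derives it inside the proof), after which the defining data of a degree-$0$ covariant form coincide with those of the canonical form. The only difference is one of emphasis: the paper stops there and invokes the already-established $1$--$1$ correspondence between positive geometries and canonical forms, whereas you unpack that uniqueness explicitly via the induction on $N$ and the degree-count argument that the difference of two candidates must vanish. Your version is slightly more self-contained; the paper's is shorter because it treats the uniqueness of canonical forms as prior input.
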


\begin{proof}
 Let us consider  the most generic form \eqref{eq:CFdk} for a covariant form of degree $k\,=\,0$. The scaling property (i) -- in this case the invariance under $GL(1)$ transformations -- fixes the degree of the homogeneous polynomial $\mathfrak{n}_{\delta}(\mathcal{Y})$ constituting the numerator to be $\delta\,=\,\sum_{j=1}^{\tilde{\nu}}m_j-N-1$:
 \begin{equation}\label{eq:CFdk0} 
  \omega^{\mbox{\tiny $(0)$}}(\mathcal{Y},\,\mathcal{P})\:=\:\frac{\mathfrak{n}_{\delta}(\mathcal{Y})\langle\mathcal{Y}d^N\mathcal{Y}\rangle}{\prod_{j=1}^{\tilde{\nu}}q^{m_j}_j(\mathcal{Y})},\hspace{1cm}
  \delta\:=\:\sum_{j=1}^{\tilde{\nu}}m_j-N-1.
\end{equation}
 Let us now parametrise $\mathbb{P}^N$ via the local holomorphic coordinates $(y_j,\,h_j)$ such that the locus $h_j\,=\,0$ identifies one of the facets of the polytope, while $y_j$ collectively indicate the other coordinates. Then, property (ii) implies that
 \begin{equation}\label{eq:CFdkLoc0}
  \omega^{\mbox{\tiny $(0)$}}(\mathcal{\mathcal{Y},\mathcal{P}})\:=\:\omega^{\mbox{\tiny $(0-m_j+1)$}}(y_j)\,\wedge\,\frac{dh_j}{h_j^{m_j}} + \tilde{\omega}^{\mbox{\tiny $(0)$}},
 \end{equation}
 However, by definition the degree of the covariant forms is non-negative and all the multiplicities $m_j$'s are strictly positive. Hence $m_j\,=\,1$. Iterating this argument for all the singularities of \eqref{eq:CFdk0}, then
 \begin{equation}\label{mj1}
  \forall\:j\,=\,1,\,  \ldots,\,\tilde{\nu}\: :\: m_j\,=\,1.
 \end{equation}
 Hence, a covariant form of degree $0$ satisfying the property (ii) has simple poles only and \eqref{eq:CFdkLoc0} reduces to \eqref{eq:CFX}, so that the residues of $\omega^{\mbox{\tiny $(0)$}}(\mathcal{\mathcal{Y},\mathcal{P}})$ along any of the facets returns a degree-$0$ form with simple poles only associated to the facets itself. Thus, the covariant form of degree-$0$ satisfying property (ii) is the canonical form associated to the polytope $(\mathbb{P}^N,\,\mathcal{P})$:
 \begin{equation}\label{eq:CFdk0can} 
  \omega^{\mbox{\tiny $(0)$}}(\mathcal{\mathcal{Y},\mathcal{P}})\,\equiv\,\omega(\mathcal{\mathcal{Y},\mathcal{P}})
 \end{equation}
\end{proof}

For $k\,>\,0$, the conditions (i) and (ii) are not sufficient to fix the covariant form with fixed degree $k$ and fixed multiplicities $m_j$ ($j\,=\,1,\ldots,\tilde{\nu}$): one could imagine it to be defined up to an overall constant only (it is a $GL(1)$ covariant form), however the defining conditions (i) and (ii) are not sufficient to fix the numerator $\mathfrak{n}_{\delta}$ up to an overall constant.
\\

\noindent
{\it Example: } Let us consider a simple visualisable example. Let us take $(\mathbb{P}^1,\,\mathcal{P})$ with $\mathcal{P}$ being a segment and let us try to fix a covariant form of degree $1$. Because of the bound $m_j\,\le\,k+1$ for the multiplicities of the poles, our degree-one covariant form can have simple and double poles only. Let us take both poles to have multiplicity two. Then the scaling condition (i) fixes the degree $\delta$ of the numerator $\mathfrak{n}_{\delta}$ to be $1$. Taking $\mathcal{Y}\,=\,(y_1,\,y_2)$  as homogeneous local coordinates, then the most generic form for such a degree-$1$ covariant form is given by
\begin{equation}\label{eq:ex1}
 \omega^{\mbox{\tiny $(1)$}}\:=\:\frac{a_1 y_1+a_2 y_2}{y_1^2 y_2^2}\frac{dy_1\wedge dy_2}{\mbox{Vol}\{GL(1)\}},\qquad a_1,\,a_2\,\in\,\mathbb{R}.
\end{equation}
Let us now check whether the condition (ii) fixes one of the coefficients $a_j$ ($j\,=\,1,\,2$)  in function of the other one. Notice that in correspondence of any of the facets $y_j\,=\,0$  ($j\,=\,1,\,2$) we get
\begin{equation}\label{eq:ex1ii}
 \omega^{\mbox{\tiny $(1)$}}\:=\:\underbrace{\frac{(-1)^j}{\mbox{Vol}\{GL(1)\}}\,\frac{dy_l}{y_l}}_{\omega^{\mbox{\tiny $(0)$}}(y_l)}\wedge\,a_l\frac{dy_j}{y_j^2}\:+\:\ldots,\hspace{1.5cm}
 j,\,l\:=\:1,\,2\quad(l\,\neq\,j).
\end{equation}
Taking the patch $y_l\,=\,1$, the covariant form \eqref{eq:ex1ii} acquires the form
\begin{equation}\label{eq:ex1iii}
    \omega^{\mbox{\tiny $(1)$}}\:=\:(-1)^j\,a_l\frac{dy_j}{y_j^2}\,+\,\ldots
\end{equation}
and the leading Laurent coefficient of this double pole is an arbitrary constant.

Notice that the constant $a_l$ is not fixed by requiring that the canonical form $\omega^{\mbox{\tiny $(0)$}}$ is $\pm\,1$ because it reflects the covariance degree of the differential $dy_j/y_j^2$\footnote{This is the important point which marks the difference between the cases $k\,=\,0$, for which the uniqueness theorem \ref{thm:wk0un} holds, and $k\,>\,0$: while in the first case all the poles are forced to be simple so that requiring the leading singularity to be $\pm1$ fixes the form completely, in the case of covariant forms we still have an equivalence class of forms because of \eqref{eq:ex1ii} albeit along a boundary degree-$0$ form is singled out.}. Hence, the leading part of the covariant form along a given boundary is still defined up to a constant. Consequently, the expression \eqref{eq:ex1ii} does not fix $a_1$ and $a_2$ to be proportional to each other, neither a $GL(1)$ transformation does it, given that it can allow to fix one of the two to one, but leaving the other arbitrary.
\\

 Differently from the canonical forms, that are in principle defined up to an overall constant $a$ which can be fixed by requiring that the leading singularities are $\pm1$ rather than $\pm a$, the covariant forms of degree $k$ define equivalence classes as a consequence of the property (i).

%%%%%%%%%%%%%%%%%%%%%%%%%%%%%%%%%%%%%%%%
%%%%%%%%%%%%%%%%%%%%%%%%%%%%%%%%%%%%%%%%

\subsection{Unions, Triangulations and Covariant Pairings}\label{subsec:UTCP}

Let $(\mathbb{P}^N,\,\mathcal{P}^{\mbox{\tiny $(1)$}}\cup\mathcal{P}^{\mbox{\tiny $(2)$}})$ the disjoint union of two projective polytopes $(\mathbb{P}^N\,\mathcal{P}^{\mbox{\tiny $(1)$}})$ and $(\mathbb{P}^N\,\mathcal{P}^{\mbox{\tiny $(2)$}})$. Then, the equivalence class of covariant forms of degree $k$ associated to such a disjoint union is defined by the sum of any representative of the covariant forms of each element of the union:
\begin{equation}\label{eq:cfwdu}
    \omega^{\mbox{\tiny $(k)$}}(\mathcal{Y},\,\mathcal{P}^{\mbox{\tiny $(1)$}}\cup\mathcal{P}^{\mbox{\tiny $(2)$}})\:=\:
        \omega^{\mbox{\tiny $(k)$}}(\mathcal{Y},\,\mathcal{P}^{\mbox{\tiny $(1)$}})+\omega^{\mbox{\tiny $(k)$}}(\mathcal{Y},\,\mathcal{P}^{\mbox{\tiny $(2)$}}).
\end{equation}
Because the boundaries of $(\mathbb{P}^N,\,\mathcal{P}^{\mbox{\tiny $(1)$}}\cup\mathcal{P}^{\mbox{\tiny $(2)$}})$ are either boundaries of one of the $(\mathbb{P}^N\,\mathcal{P}^{\mbox{\tiny $(j)$}})$'s or the union of their boundaries, the property (ii) is guaranteed for  $\omega^{\mbox{\tiny $(k)$}}(\mathcal{Y},\,\mathcal{P}^{\mbox{\tiny $(1)$}}\cup\mathcal{P}^{\mbox{\tiny $(2)$}})$. Equation \eqref{eq:cfwdu} is just the statement that the sum of {\it any} representative of the covariant forms of degree $k$ for $(\mathbb{P}^N\,\mathcal{P}^{\mbox{\tiny $(1)$}})$ and $(\mathbb{P}^N\,\mathcal{P}^{\mbox{\tiny $(2)$}})$ such that $\mathcal{P}^{\mbox{\tiny $(1)$}}\cap\mathcal{P}^{\mbox{\tiny $(2)$}}\,=\,\varnothing$ returns a representative of the covariant forms for their disjoint union $(\mathbb{P}^N,\,\mathcal{P}^{\mbox{\tiny $(1)$}}\cup\mathcal{P}^{\mbox{\tiny $(2)$}})$. 
\\

We would like now to generalise the notion of (signed) triangulation reviewed in Section \ref{subsubsec:DuTr} to the case of covariant forms. Recall that, given a polytope $(\mathbb{P}^N,\,\mathcal{P})$ and a collection of polytopes $\lbrace (\mathbb{P}^N,\,\mathcal{P}^{(j)})\rbrace$ which sign-triangulates it, then the canonical form of $(\mathbb{P}^N,\,\mathcal{P})$ can be expressed as a sum of the canonical forms of the elements of the collection $\lbrace (\mathbb{P}^N,\,\mathcal{P}^{(j)})\rbrace$:
\begin{equation} \label{eq:triangex}
    \omega(\mathcal{Y},\,\mathcal{P})=\sum_{j} \omega(\mathcal{Y},\,\mathcal{P}^{(j)}).
\end{equation}
In particular, for any collection $\lbrace \mathcal{Q}^{(i)} \rbrace$ of faces of some of the polytopes $\lbrace (\mathbb{P}^N,\,\mathcal{P}^{(j))}\rbrace$ such they triangulate the empty set, the canonical form $\omega(\mathcal{Y},\,\mathcal{P})$ does not have poles on them. Therefore, the simple poles $\lbrace \omega(\mathcal{Y},\,\mathcal{P}^{(i)}) \rbrace$ have on $\lbrace \mathcal{Q}^{(i)} \rbrace$ all cancel in the sum \eqref{eq:triangex}: they are called \emph{spurious}.
In the case of covariant forms, they have in general poles of higher multiplicity and the poles related to those faces which triangulate the empty set might no longer be spurious, but their order could be lowered.

Let $(\mathbb{P}^N,\,\mathcal{P})$ be a projective polytope, $\{(\mathbb{P}^N,\,\mathcal{P}^{\mbox{\tiny $(j)$}})\}_{j=1}^{n}$ a collection of projective polytopes and $\omega^{(k)}$ a differential form of covariant degree $k$ such that:
\begin{enumerate}[label=(\roman*)]
    \item $\lbrace (\mathbb{P}^N,\,\mathcal{P}^{(j)}) \rbrace$ is a signed triangulation of $(\mathbb{P}^N,\,\mathcal{P})$;
    \item the form $\omega^{\mbox{\tiny $(k)$}}$  can be written as a sum of (representatives of) covariant forms $\omega^{(k)}(\mathcal{Y},\,\mathcal{P}^{(j)})$ of 
          degree $k$ associated to the projective polytopes $(\mathbb{P}^N,\,\mathcal{P}^{\mbox{\tiny $(j)$}})$:
          \begin{equation}\label{eq:cfwcp}
           \omega^{(k)}\:=\:\sum_{j=1}^n\omega^{(k)}(\mathcal{Y},\,\mathcal{P}^{(j)});
           \end{equation}
    \item for every collection of faces $\lbrace \mathcal{Q}^{(i)} \rbrace$ of some of the polytopes $\lbrace (\mathbb{P}^N,\,\mathcal{P}^{(j)})\rbrace$ such they triangulate the empty
          set, the order of spurious poles $\lbrace \omega(\mathcal{Y},\,\mathcal{P}^{(i)}) \rbrace$ have on $\lbrace \mathcal{Q}^{(i)} \rbrace$ are lowered in the sum;
\end{enumerate}
then the association $(\mathcal{P}, \omega^{(k)})$ is called a \emph{covariant pairing}. Moreover, $\lbrace (\mathcal{P}^{(j)},\omega^{(k)}(\mathcal{P}^{(j)})) \rbrace$ will be referred to as a \emph{covariant triangulation} of $(\mathcal{P},\omega^{(k)})$.

As the collection $\{(\mathbb{P}^N,\,\mathcal{P}^{\mbox{\tiny $(j)$}})\}_{j=1}^{n}$ provides a signed triangulation for $(\mathbb{P}^N,\,\mathcal{P})$, there exist a common pole in a subset of the collection of covariant forms $\{\omega^{\mbox{\tiny $(k)$}}(\mathcal{Y},\,\mathcal{P}^{\mbox{\tiny $(j)$}})\}_{j=1}^n$ such that the boundary components of the relevant elements of $\{(\mathbb{P}^N,\,\mathcal{P}^{\mbox{\tiny $(j)$}})\}_{j=1}^{n}$ triangulate the empty set. If such class of poles have multiplicity higher than $1$, then the covariant form $\omega^{\mbox{\tiny $(k)$}}$ in covariant pairing $(\mathcal{P},\,\omega^{\mbox{\tiny $(k)$}}(\mathcal{Y},\,\mathcal{P}))$ with $(\mathbb{P}^N,\,\mathcal{P})$ shows a pole of lower multiplicity:  the covariant form $\omega^{\mbox{\tiny $(k)$}}$ has poles in correspondence of both the boundary components of $(\mathbb{P}^N,\,\mathcal{P})$ and of the boundary components of $(\mathbb{P}^N,\,\mathcal{P}^{\mbox{\tiny $(j)$}})$ which are not boundaries of $(\mathbb{P}^N,\,\mathcal{P})$. If instead such a pole is a simple, it becomes spurious upon the summation \eqref{eq:cfwcp} and we recover the covariant form has only poles along the boundary components of $(\mathbb{P}^N,\,\mathcal{P})$.

Hence, the covariant pairing generalises the association between a covariant form and a projective polytope originally defined in Section \ref{subsec:CovFrm}. With a little abuse of notation, in what follows we will indicate with $\omega^{\mbox{\tiny $(k)$}}(\mathcal{Y},\,\mathcal{P})$ (a representative of) a covariant form with poles only along the boundary components of $(\mathbb{P}^N,\,\mathcal{P})$, as well as a covariant form in covariant pairing with the projective polytope $(\mathbb{P}^N,\,\mathcal{P})$, which have (multiple) poles both along the boundary components of $(\mathbb{P}^N,\,\mathcal{P})$ and along the empty-set-triangulating boundary components of a collection of projective polytopes providing a signed triangulation of $(\mathbb{P}^N,\,\mathcal{P})$. 
\\

\noindent
{\it Example}: Let us consider two segments $(\mathbb{P}^1,\,\mathcal{P}^{\mbox{\tiny $(32)$}})$ and $(\mathbb{P}^1,\,\mathcal{P}^{\mbox{\tiny $(31)$}})$ such that they provide a signed triangulation of $(\mathbb{P}^1,\,\mathcal{P}^{\mbox{\tiny $(12)$}})$ \footnote{The apex $(ij)$ in $\mathcal{P}^{\mbox{\tiny $(ij)$}}$ indicates that the vertices of that segment are $i$ and $j$.}. 

\begin{wrapfigure}{l}{4.5cm}
 \centering
 \begin{tikzpicture}[line join = round, line cap = round, ball/.style = {circle, draw, align=center, anchor=north, inner sep=0}, 
                     axis/.style={very thick, ->, >=stealth'}, pile/.style={thick, ->, >=stealth', shorten <=2pt, shorten>=2pt}, every node/.style={color=black}]
  \coordinate [label=below:{\footnotesize $\displaystyle \mathbf{3}$}] (Z3) at (0,0);
  \coordinate [label=below:{\footnotesize $\displaystyle \mathbf{2}$}] (Z2) at (3,0);
  \coordinate [label=below:{\footnotesize $\displaystyle \mathbf{1}$}] (Z1) at ($(Z3)!0.625!(Z2)$);
  \draw[-,thick] ($(Z3)!-.5cm!(Z2)$) -- ($(Z2)!-.5cm!(Z3)$);
  \draw[-,color=red, very thick] (Z3) -- (Z1);
  \draw[-,color=blue, very thick] (Z1) -- (Z2);
  \draw[draw=none, fill=red] (Z3) circle (2pt);
  \draw[draw=none, fill=blue] (Z2) circle (2pt);
  \draw[draw=none, fill=blue] (Z1) circle (2pt);
 \end{tikzpicture}
\end{wrapfigure}

Let $(Z_3,\,Z_1)$, $(Z_3,\,Z_2)$ and $(Z_1,\,Z_2)$ be the pair of boundary components of $(\mathbb{P}^1,\,\mathcal{P}^{\mbox{\tiny $(32)$}})$,  $(\mathbb{P}^1,\,\mathcal{P}^{\mbox{\tiny $(31)$}})$ and $(\mathbb{P}^1,\,\mathcal{P}^{\mbox{\tiny $(12)$}})$ respectively, with $Z_j\,\in\,\mathbb{P}^1$. Let us consider a covariant form of degree $1$ for with a double pole along one of the boundary components of $(\mathbb{P}^1,\,\mathcal{P}^{\mbox{\tiny $(3j)$}})$'s, namely:
\begin{equation}\label{eq:cpex}
    \omega^{\mbox{\tiny $(1)$}}(\mathcal{Y},\,\mathcal{P}^{\mbox{\tiny $(32)$}})\:=\:\frac{\langle23\rangle\langle\mathcal{Y}d\mathcal{Y}\rangle}{\langle\mathcal{Y}3\rangle^2\langle\mathcal{Y}2\rangle},
    \qquad
    \omega^{\mbox{\tiny $(1)$}}(\mathcal{Y},\,\mathcal{P}^{\mbox{\tiny $(31)$}})\:=\:\frac{\langle31\rangle\langle\mathcal{Y}d\mathcal{Y}\rangle}{\langle\mathcal{Y}3\rangle^2\langle\mathcal{Y}1\rangle}.
\end{equation}
Then the covariant form $\omega^{\mbox{\tiny $(k)$}}$ in covariant pairing with $(\mathbb{P}^1,\,\mathcal{P})$ is:
\begin{equation}\label{eq:cpex2}
    \omega^{\mbox{\tiny $(1)$}}(\mathcal{Y},\,\mathcal{P}^{\mbox{\tiny $(12)$}})\:=\:\sum_{j=1}^2\omega^{\mbox{\tiny $(1)$}}(\mathcal{Y},\,\mathcal{P}^{\mbox{\tiny $(3j)$}})\:=\:
    \frac{\langle21\rangle\langle\mathcal{Y}d\mathcal{Y}\rangle}{\langle\mathcal{Y}1\rangle\langle\mathcal{Y}2\rangle\langle\mathcal{Y}3\rangle},
\end{equation}
which shows a pole in each boundary component of the collection $\{(\mathbb{P}^1,\,\mathcal{P}^{\mbox{\tiny $(3j)$}})\}_{j=1}^2$, with the pole along the common boundary of lower multiplicity. 

%Notice that the covariant forms \eqref{eq:cpex} and \eqref{eq:cpex2} not only have degree-$1$ covariance under $\mathcal{Y}\,\longrightarrow\,\lambda\mathcal{Y}$, but also under $Z_3\,\longrightarrow\,\lambda Z_3$, {\it i.e} the $GL(1)$ rescaling of a vector identifying a boundary component. Taking directly the point of view of the covariant pairing $(\mathcal{P},\,\omega^{\mbox{\tiny $(1)$}})$, this feature of $\omega^{\mbox{\tiny $(1)$}}$ is the avatar of the decomposition \eqref{eq:cpex2} in covariant forms associated to projective polytopes that provide a signed triangulation of $(\mathbb{P}^1,\,\mathcal{P})$, with a double pole along the boundary component for which the covariant form scale under a rescaling of its representative vector.

We can also consider the following covariant forms associated to $\{(\mathbb{P}^1,\,\mathcal{P}^{\mbox{\tiny $(3j)$}})\}_{j=1}^2$
\begin{equation}\label{eq:cpex3}
    \omega^{\mbox{\tiny $(1)$}}(\mathcal{Y},\,\mathcal{P}^{\mbox{\tiny $(32)$}})\:=\:\frac{\langle23\rangle^2\langle\mathcal{Y}d\mathcal{Y}\rangle}{\langle\mathcal{Y}3\rangle\langle\mathcal{Y}2\rangle^2},
    \qquad
    \omega^{\mbox{\tiny $(1)$}}(\mathcal{Y},\,\mathcal{P}^{\mbox{\tiny $(31)$}})\:=-\:\frac{\langle31\rangle^2\langle\mathcal{Y}d\mathcal{Y}\rangle}{\langle\mathcal{Y}3\rangle\langle\mathcal{Y}1\rangle^2}.
\end{equation}
Then the covariant form $\omega^{\mbox{\tiny $(k)$}}$ in covariant pairing with $(\mathbb{P}^1,\,\mathcal{P}^{\mbox{\tiny $(12)$}})$ is:
\begin{equation}\label{eq:cpex4}
    \omega^{\mbox{\tiny $(1)$}}(\mathcal{Y},\,\mathcal{P}^{\mbox{\tiny $(12)$}})\:=\:
     \sum_{j=1}^2\omega^{\mbox{\tiny $(1)$}}(\mathcal{Y},\,\mathcal{P}^{\mbox{\tiny $(3j)$}})\:=\:
     \frac{\langle21\rangle\left(\langle32\rangle\langle\mathcal{Y}1\rangle+\langle31\rangle\langle\mathcal{Y}2\rangle\right)\langle\mathcal{Y}d\mathcal{Y}\rangle}{\langle\mathcal{Y}1\rangle^2\langle\mathcal{Y}2\rangle^2}.
\end{equation}
Notice the this covariant form has poles only along the boundary components of $(\mathbb{P}^1,\,\mathcal{P}^{\mbox{\tiny $(12)$}})$ and it is one of the covariant forms of degree-$1$ naturally associated to the segment $(\mathbb{P}^1,\,\mathcal{P}^{\mbox{\tiny $(12)$}})$.

Summarising, in the previous two subsection we have introduced a natural way of associating the subclass of differential forms with non-logarithmic singularities constituted by forms whose coefficients are meromorphic homogeneous functions, to projective polytopes, through the notions of {\it covariant forms} and differential forms in {\it covariant pairing} with polytopes, with the latter generalising the former. Neither covariant forms nor covariant pairings are in $1-1$ correspondence with a polytope, not even fixing the multiplicity of the poles in the covariant form: the defining conditions for the covariant forms constrain as well as the requirement that spurious higher codimension singularities  cancel, constrain the numerator of the covariant forms but they do not fix it uniquely.

In the next subsection we will see how it is possible to complete the geometric-combinatorial characterisation of covariant forms and covariant pairings by associating them to higher dimensional polytopes and their canonical forms.

%%%%%%%%%%%%%%%%%%%%%%%%%%%%%%%%%%%%%%%%
%%%%%%%%%%%%%%%%%%%%%%%%%%%%%%%%%%%%%%%%

\subsection{Parent Polytopes, Child Polytopes and Covariant Forms}\label{subsec:PCP}

Let $(\mathbb{P}^N,\,\mathcal{P})$ be a projective polytope and let $\mathcal{F}_{\mathcal{P}}\, := \,\{\mathcal{W}_I^{\mbox{\tiny $(j)$}}\,\in\,\mathbb{P}^N(\mathbb{R}),\;j\,=\,1,\ldots,\tilde{\nu}\}$ be the set of dual vectors identifying its facets. Let $\mathcal{H}\, :=\,\{\mathcal{Y}\,\in\,\mathbb{P}^{N}(\mathbb{R})\,|\,h_{l}(\mathcal{Y})\, :=\, \mathcal{Y}^I\mathfrak{H}_I^{\mbox{\tiny $(l)$}}\,=\,0,\;\mathfrak{H}_I^{\mbox{\tiny $(l)$}}\,\nsubseteq\,\mathcal{F}_{\mathcal{P}},\;\forall\,l\,=\,1,\,\ldots,\,N-M\}$ be an hyperplane of codimension $N-M$ in $\mathbb{P}^N$ -- {\it i.e.} it lives in $\mathbb{P}^{M}\,\subset\,\mathbb{P}^N$, with $M\,<\,N$ -- such that it intersects the convex hull $\mathcal{P}$. Let $\mathcal{P}_{\mathcal{H}}\, :=\,\mathcal{P}\,\cap\,\mathcal{H}$ be the restriction\footnote{The term `restriction' is just equivalent to \emph{section} of the polytope, on the geometric side. In our case, it will also carry extra information about an operation on differential forms, as in \eqref{eq:CPcf}.} of $\mathcal{P}$ on $\mathcal{H}$. We will refer to the projective polytope $(\mathbb{P}^N,\,\mathcal{P})$ as {\it parent polytope}, and to its restriction $(\mathbb{P}^M,\,\mathcal{P}_{\mbox{\tiny $\mathcal{H}$}})$ on the hyperplane $\mathcal{H}$ as its {\it child polytope} with respect to $\mathcal{H}$.
\\

If $\omega(\mathcal{Y},\,\mathcal{P})$ is the canonical form associated to $(\mathbb{P}^N,\,\mathcal{P})$, then it is possible to define the {\it covariant restriction} of $\omega(\mathcal{Y},\,\mathcal{P})$ onto $\mathcal{H}$ as the  differential form
\begin{equation}\label{eq:CPcf}
 \omega^{\mbox{\tiny $(N-M)$}}(\mathcal{Y}_{\mathcal{H}})\: :=\: \frac{1}{(2\pi i)^{N-M}}\oint_{\mathcal{H}}\,  
  \frac{\omega(\mathcal{Y},\,\mathcal{P})}{\prod_{l=1}^{\mbox{\tiny $N-M$}}h_l(\mathcal{Y})}.
\end{equation}
The differential form \eqref{eq:CPcf} can be equivalently defined as
\begin{equation}\label{eq:LorOpH}
    \omega^{\mbox{\tiny $(N-M)$}}(\mathcal{Y}_{\mathcal{H}})\: :=\:
     \Lor_{\mbox{\tiny $\mathcal{H}$}}{}^{\mbox{\tiny $(0)$}}\left\{\omega(\mathcal{Y},\,\mathcal{P})\right\},
\end{equation}
where $\mathcal{L}^{\mbox{\tiny $(0)$}}$ is the Laurent operator defined in \eqref{eq:CFdkLoc} but now acting along a codimension $N-M$ hyperplane and extracting the zero-th order coefficient. More explicitly, let us parametrise $\mathbb{P}^N$ with a set of local holomorphic coordinates $(y,\,h)$, where $h\,:=\:\{h_1,\,\ldots,\,h_{\mbox{\tiny $N-M$}}\}$ collectively indicates the coordinates such that the locus $h\,=\,0$ locally identifies the hyperplane $\mathcal{H}\,\subset\,\mathbb{P}^N$, while $y$ collectively indicates the remaining local coordinates. Then, the canonical form $\omega(\mathcal{Y},\,\mathcal{P})$ can be written as
\begin{equation}\label{eq:LorOpLoc}
    \omega(\mathcal{Y},\,\mathcal{P})\:=\:\omega^{\mbox{\tiny $(N-M)$}}(y)\wedge dh \:+\: \tilde{\omega},
\end{equation}
with $\tilde{\omega}$ being the part of the canonical form which depends polynomially on $h$ (with degree equal or greater than $1$), and which does not contribute to the leading Laurent coefficient of the canonical form, which is now of order zero because the locus $h\,=\,0$ does not identify neither poles nor zeroes of the canonical form. Hence, locally:
\begin{equation}\label{eq:LorOpLoc2}
 \begin{split}
    \mathcal{L}_{\mbox{\tiny $\mathcal{H}$}}{}^{\mbox{\tiny $(0)$}} \left\{\omega(\mathcal{Y},\,\mathcal{P})\right\}\: 
     &=\:\mathcal{L}_{\mbox{\tiny $h=0$}}{}^{\mbox{\tiny $(0)$}}\left\{\omega(\mathcal{Y},\,\mathcal{P})\right\}\:=\:
     \omega^{\mbox{\tiny $(N-M)$}}(y)\:=\:\omega^{\mbox{\tiny $(N-M)$}}(\mathcal{Y}_{\mathcal{H}}).
 \end{split}
\end{equation}

Notice that $\omega^{\mbox{\tiny $(N-M)$}}(\mathcal{Y}_{\mathcal{H}})$ is a differential form of covariant degree $N-M$. This property is manifest in both \eqref{eq:CPcf} and \eqref{eq:LorOpH}: the canonical form $\omega(\mathcal{Y},\,\mathcal{P})$ of $(\mathbb{P}^N,\,\mathcal{P})$ is invariant under a $GL(1)$-transformation $\mathcal{Y}\,\longrightarrow\,\lambda\,\mathcal{Y}$ (with $\lambda\,\in\,\mathbb{R}_+$), while each homogeneous polynomial $h_l(\mathcal{Y})$ in the definition of the hypersurface $\mathcal{H}$ transform as $h_l(\mathcal{Y})\,\longrightarrow\,\lambda h_l(\mathcal{Y})$ being linear. Hence the {\it integrand} differential form in \eqref{eq:CPcf} transforms as $\lambda^{-(N-M)}$. Finally, the contour integration computes the residue of the {\it integrand} differential form at all the {\it simple} poles $h_l(\mathcal{Y})\,=\,0$, leaving the $GL(1)$-scaling behaviour unchanged.
 
 Properties of the covariant restriction $\omega^{\mbox{\tiny $(N-M)$}}$ are inherited from the the property of the canonical  $\omega(\mathcal{Y},\,\mathcal{P})$  associated to $(\mathbb{P}^N,\,\mathcal{P})$ that its residue along any of the boundary components $(\mathbb{P}^{N-1},\,\partial\mathcal{P}^{\mbox{\tiny $(j)$}})$ is the canonical form of the projective polytope $(\mathbb{P}^{N-1},\,\partial\mathcal{P}^{\mbox{\tiny $(j)$}})$ itself. 
 First, let $\mathcal{W}^{\mbox{\tiny $(j_1\ldots j_{m_j})$}}\, :=\, \bigcap_{r=1}^{m_j}\mathcal{W}^{(j_r)}$ be the intersection of $m_j$ facets, each of which is identified by a dual vector $\mathcal{W}^{\mbox{\tiny $(j_r)$}}$. If $\mathcal{H}\,\bigcap\,\mathcal{W}^{\mbox{\tiny $(j_1\ldots j_{m_j})$}}\,\neq\,\varnothing$, then the linear homogeneous polynomials $q_{j_r}(\mathcal{Y})\,=\,\mathcal{Y}\cdot\mathcal{W}^{\mbox{\tiny $j_r$}}$ ($r\,=\,1,\ldots,\,m_j$) providing a subset of poles of the canonical form $\omega(\mathcal{Y,\,\mathcal{P}})$ become equal to each other on the {\it covariant restriction} on the hypersurface $\mathcal{H}$ -- {\it i.e.} when the residues of the integrand \eqref{eq:CPcf} at all the poles $h_l(\mathcal{Y})\,=\,0$  are taken --, generating a multiple pole of multiplicity $m_j$. 
 Let us now parametrise $\mathbb{P}^N$ via a set of local holomorphic coordinates $(y_j,\,q_j)$ such that the locus $q_j\,=\,0$ locally identifies a particular boundary $\partial \mathcal{P}^{(i)}$, with $y_j$ collectively indicating the remaining local coordinates. As we already saw in \eqref{eq:CFX}, it shows a simple pole in $q_j\,=\,0$ and it can be locally written as
 \begin{equation}\label{eq:CFXb}
  \omega(\mathcal{Y,\,\mathcal{P}})\:=\:\omega(y_j)\wedge\frac{dq_j}{q_j}+\tilde{\omega}.
 \end{equation}
 Considering now \eqref{eq:CPcf}, the covariant restriction of the canonical form $\omega(\mathcal{Y},\,\mathcal{P})$ generates multiple poles and, hence, in the local holomorphic coordinates $(y_j,\,q_j)$, the differential form \eqref{eq:CPcf} can be written as
  \begin{equation}\label{eq:CPloc}
  \omega^{\mbox{\tiny $(N-M)$}}(\mathcal{Y}_{\mathcal{H}})\:=\:\omega^{\mbox{\tiny $(N-M-m_j)$}}(y_j)\wedge\frac{dq_j}{q_j^{m_j}}+\tilde{\omega}^{\mbox{\tiny $(N-M)$}},
 \end{equation}
which is exactly the very same structure as \eqref{eq:CFdkLoc0}, with $\tilde{\omega}^{\mbox{\tiny $(N-M)$}}$ having a lower order pole it $q_j\,=\,0$.
%Let us analyse the poles of $\omega^{\mbox{\tiny $(N-M-m_j)$}}$ closely. 
Hence, the differential form satisfies also the property \eqref{eq:CFdkLoc} in the definition of the covariant forms. 
\\

%  Let us now analyse the poles of $\omega(y_j)$ more in details. By definition of canonical form, $\omega(y_j)$ will not have poles corresponding to facets of $\mathcal{P}$ which intersect $\partial \mathcal{P}^{(j)}$ outside. 
%  Let us now fix a boundary of $\partial \mathcal{P}^{(j)}$, and denote it as $\mathcal{Q}$. Let us also denote as $\lbrace \mathcal{F}_a \rbrace$ all facets of $\mathcal{P}$ which contain $\mathcal{Q}$ (excluding the ones containing $\partial \mathcal{P}^{(j)}$). Since $\omega(y_j)$ has to have simple poles on $\mathcal{Q}$ by definition, then its numerator must cancel the denominators corresponding to such facets, in order to leave an overall simple pole.
% The only way for the form to increase the order of its pole at $\mathcal{Q}$ would be if the poles corresponding to the restriction of $\lbrace \mathcal{F}_a \rbrace$ onto $\mathcal{H}$ would increase their order.

Let us now analyse the structure of these covariant restriction in detail as well as the covariant forms obtained from the canonical form of $(\mathbb{P}^N,\,\mathcal{P})$. As we will discuss in detail later on, the Laurent coefficients of the covariant form $\omega^{\mbox{\tiny $(N-M)$}}(\mathcal{Y}_{\mathcal{H}})$ are related to the residues of the canonical form $\omega(\mathcal{Y},\,\mathcal{P})$ (a manifestation of this fact was first observed in the context of the cosmological polytopes \cite{Benincasa:2019vqr}), which is a consequence of the property \eqref{eq:CPloc}. Interestingly, as we will prove shortly afterwards, the covariant form $\omega^{\mbox{\tiny $(N-M)$}}(\mathcal{Y}_{\mathcal{H}})$ turns out to be in covariant pairing with the child polytope $(\mathbb{P}^M,\,\mathcal{P}_{\mathcal{H}})$, with poles reflecting boundaries both inside and outside $\mathcal{P}_{\mathcal{H}}$, which occurs when the intersections between $\mathcal{H}$ and the facets of $\mathcal{P}$ lie outside of $\mathcal{P}_{\mathcal{H}}$, or just poles along the boundary components of $(\mathbb{P}^M,\,\mathcal{P}_{\mathcal{H}})$ which occur when the intersections between $\mathcal{H}$ and the facets of $\mathcal{P}$ are boundaries of $\mathcal{P}_{\mathcal{H}}$. From \eqref{eq:CFXb} and \eqref{eq:CPloc}, it is possible to see that in general the multiplicity of the poles of the covariant form $\omega^{\mbox{\tiny $(N-M)$}}(\mathcal{Y_{\mathcal{H}}})$ is given by the number of facets of the parent polytope which have a common intersection inside the polytope and on the hyperplane $\mathcal{H}$.
There are two exceptions. The first one is when the subspace where the facets of the parent polytope and $\mathcal{H}$ intersect {\it is not} on the hypersurface which determines the zeroes of the canonical form of the parent polytope itself. In this latter case, the multiplicity of the pole is lower. The second exception occurs  when the number of facets on the common intersection with the hyperplane $\mathcal{H}$ is higher than the codimension of such intersection. Because of the properties of the canonical form of the parent polytope, if the child polytope has dimension $M$, then its poles with order great than one are on faces of dimension $M-1$ of the parent polytope. Therefore, the maximal order of these poles equals $N-(M-1)=k+1$, where $k$ is the covariant degree of $\omega^{(N-M)}(\mathcal{Y}_{\mathcal{H}})$, consistently with what discussed in Section \ref{subsec:CovFrm}.

In order to prove that statement that the covariant form $\omega(\mathcal{Y}_{\mathcal{H}})$ is in covariant pairing with $(\mathbb{P}^M,\,\mathcal{P}_{\mathcal{H}})$, let us first consider the case of simplices as parent polytopes and then generalise to arbitrary projective polytopes.

\begin{lemma}\label{lem:triangulation}
 Let $(\mathbb{P}^N, \Delta)$ be a simplex and $\omega(\mathcal{Y},\Delta)$ its canonical form. Given an hyperplane $\mathcal{H}$ of codimension $N-M$ in $\mathbb{P}^N$, let $\omega^{\mbox{\tiny $(N-M)$}}(\mathcal{Y}_{\mathcal{H}})$ be the covariant restriction of $\omega(\mathcal{Y},\Delta)$ onto $\mathcal{H}$, and $\Delta_{\mathcal{H}}\,:=\,\Delta\cap\mathcal{H}$ so that $(\mathbb{P}^M,\,\Delta_{\mathcal{H}})$ is the restriction of $(\mathbb{P}^N, \Delta)$ onto $\mathcal{H}$. Then $(\Delta_{\mathcal{H}},\,\omega^{\mbox{\tiny $(N-M)$}}(\mathcal{Y}_{\mathcal{H}}))$ is a covariant pairing. In particular, %if $\mathcal{H}$ does not intersect $\Delta^{(N)}$ in all its facets\footnote{This happens only when the dimension of the hyperplane is $2$, for any $N$, see [].}, 
 there exist a collection of simplices $\lbrace (\mathbb{P}^{M},\Delta^{(\sigma)}_{\mathcal{H}})\rbrace$ which is a signed triangulation of $(\mathbb{P}^{M}, \Delta_{\mathcal{H}})$ and 
 \begin{equation}
     \omega^{\mbox{\tiny $(N-M)$}}(\mathcal{Y}_{\mathcal{H}})\:\equiv\:\omega^{\mbox{\tiny $(N-M)$}}(\mathcal{Y}_{\mathcal{H}},\Delta_{\mathcal{H}})\:=\:\sum_{\sigma}  \omega^{\mbox{\tiny $(N-M)$}}(\mathcal{Y}_{\mathcal{H}},\Delta^{(\sigma)}_{\mathcal{H}}),
 \end{equation}
 where $\omega^{\mbox{\tiny $(N-M)$}}(\mathcal{Y}_{\mathcal{H}},\Delta^{(\sigma)}_{\mathcal{H}})$ are covariant forms of degree $N-M$ associate to $(\mathbb{P}^{M},\Delta^{(\sigma)}_{\mathcal{H}})$.
\end{lemma}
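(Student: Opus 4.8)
The plan is to establish the lemma in three moves: compute the covariant restriction $\omega^{\mbox{\tiny $(N-M)$}}(\mathcal{Y}_{\mathcal{H}})$ of the simplex's canonical form explicitly; produce a signed triangulation of the child polytope $\Delta_{\mathcal{H}}$ whose internal walls are adapted to the pole structure of that form; and match the two by a partial-fraction-type identity, checking the axioms of a covariant pairing.

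First I would choose homogeneous coordinates $\mathcal{Y}=(\mathcal{Y}_{\mathcal{H}}:h)$ on $\mathbb{P}^N$ with $h=(h_1,\ldots,h_{N-M})$ and $\mathcal{H}=\{h_1=\cdots=h_{N-M}=0\}$. In these coordinates each of the $N+1$ facets of the simplex reads $q_j(\mathcal{Y})=\bar q_j(\mathcal{Y}_{\mathcal{H}})+\ell_j(h)$ with $\bar q_j,\ell_j$ linear, and since $\tilde\nu-N-1=0$ the canonical form $\omega(\mathcal{Y},\Delta)=c_\Delta\langle\mathcal{Y}d^N\mathcal{Y}\rangle/\prod_{j=1}^{N+1}q_j(\mathcal{Y})$ has a constant numerator. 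Computing the iterated residue at $h=0$ in \eqref{eq:CPcf}, and keeping the component of $\langle\mathcal{Y}d^N\mathcal{Y}\rangle$ which survives it, namely $\langle\mathcal{Y}_{\mathcal{H}}d^M\mathcal{Y}_{\mathcal{H}}\rangle\wedge dh_1\wedge\cdots\wedge dh_{N-M}$ at $h=0$, then yields
\begin{equation*}
 \omega^{\mbox{\tiny $(N-M)$}}(\mathcal{Y}_{\mathcal{H}})\:=\:\frac{c_\Delta\,\langle\mathcal{Y}_{\mathcal{H}}d^M\mathcal{Y}_{\mathcal{H}}\rangle}{\prod_{j=1}^{N+1}\bar q_j(\mathcal{Y}_{\mathcal{H}})},
\end{equation*}
a covariant $M$-form of degree $N-M$ on $\mathbb{P}^M$ whose poles sit on the $N+1$ restricted facet hyperplanes $\bar q_j=0$, with the multiplicity of each dictated by the number of facets of $\Delta$ restricting to a common hyperplane on $\mathcal{H}$, in accordance with the discussion preceding the statement.

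Next I would triangulate the child. The hyperplanes $\{\bar q_j=0\}_{j=1}^{N+1}$ form an arrangement in $\mathbb{P}^M$, a sub-collection of which bounds $\Delta_{\mathcal{H}}$ while the rest are spurious; since $\Delta\cap\mathcal{H}$ has an explicit face lattice — its faces are the transverse intersections of the faces of the simplex $\Delta$ with $\mathcal{H}$ — it admits a triangulation $\{(\mathbb{P}^M,\Delta^{\mbox{\tiny $(\sigma)$}}_{\mathcal{H}})\}$ into simplices, each cut out by $M+1$ of the $\bar q_j=0$, whose internal walls lie among the spurious ones. I would build such a triangulation either iteratively — slicing $\Delta$ by the defining hyperplanes of $\mathcal{H}$ one at a time and using a standard triangulation of the cross-section at each stage — or via the Jeffrey--Kirwan description of Theorem \ref{th:jk}, a choice of chamber selecting a regular signed triangulation of $\Delta_{\mathcal{H}}$. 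For each $\sigma$, writing $J_\sigma\subset\{1,\ldots,N+1\}$ for the $M+1$ hyperplanes bounding $\Delta^{\mbox{\tiny $(\sigma)$}}_{\mathcal{H}}$, I would then use the linear relations among the covectors $\bar q_j$ (any $M+2$ of them are dependent), exactly as in the $M=1$ computation leading to \eqref{eq:cpex2}, to redistribute the denominator and obtain
\begin{equation*}
 \frac{c_\Delta\,\langle\mathcal{Y}_{\mathcal{H}}d^M\mathcal{Y}_{\mathcal{H}}\rangle}{\prod_{j=1}^{N+1}\bar q_j}\:=\:\sum_\sigma\frac{\mathfrak{n}^{\mbox{\tiny $(\sigma)$}}(\mathcal{Y}_{\mathcal{H}})\,\langle\mathcal{Y}_{\mathcal{H}}d^M\mathcal{Y}_{\mathcal{H}}\rangle}{\prod_{j\in J_\sigma}\bar q_j^{\,m^{\mbox{\tiny $(\sigma)$}}_j}},\qquad \mbox{deg}\{\mathfrak{n}^{\mbox{\tiny $(\sigma)$}}\}\:=\:\sum_{j\in J_\sigma}m^{\mbox{\tiny $(\sigma)$}}_j-N-1,
\end{equation*}
with $1\le m^{\mbox{\tiny $(\sigma)$}}_j\le N-M+1$, so that each summand is a genuine degree-$(N-M)$ covariant form $\omega^{\mbox{\tiny $(N-M)$}}(\mathcal{Y}_{\mathcal{H}},\Delta^{\mbox{\tiny $(\sigma)$}}_{\mathcal{H}})$ on $\Delta^{\mbox{\tiny $(\sigma)$}}_{\mathcal{H}}$ (conditions (i) and (ii) of Section \ref{subsec:CovFrm} being automatic for forms of this shape once the multiplicity bound holds and the numerator degree is non-negative). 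Condition (iii) of the covariant-pairing definition follows because on a wall internal to the triangulation the left-hand side has a pole of order equal to the multiplicity of $\bar q_j$ in $\prod_{j'}\bar q_{j'}$ — in particular strictly below $\max_\sigma m^{\mbox{\tiny $(\sigma)$}}_j$, and possibly zero — so its order is lowered, but in general not annihilated, in the sum; this exhibits $(\Delta_{\mathcal{H}},\omega^{\mbox{\tiny $(N-M)$}}(\mathcal{Y}_{\mathcal{H}}))$ as a covariant pairing with covariant triangulation $\{(\Delta^{\mbox{\tiny $(\sigma)$}}_{\mathcal{H}},\omega^{\mbox{\tiny $(N-M)$}}(\mathcal{Y}_{\mathcal{H}},\Delta^{\mbox{\tiny $(\sigma)$}}_{\mathcal{H}}))\}$.

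The main obstacle is this last step: arranging the redistribution so that the pieces are indexed precisely by the simplices of one signed triangulation of $\Delta_{\mathcal{H}}$, with the correct multiplicities and non-negative numerator degrees, rather than by some ad hoc partial-fraction expansion. I expect the cleanest route is induction on the codimension $N-M$: the base case $N-M=1$ is the single-hyperplane cross-section of a simplex, where the identity generalises \eqref{eq:cpex2} directly; the inductive step uses the transitivity $\mathcal{L}_{\mathcal{H}}^{\mbox{\tiny $(0)$}}=\mathcal{L}_{\mathcal{H}''}^{\mbox{\tiny $(0)$}}\circ\mathcal{L}_{\mathcal{H}'}^{\mbox{\tiny $(0)$}}$ of the covariant restriction (Fubini for the contour integral \eqref{eq:CPcf}) together with the compatibility of the cross-section triangulation with further slicing, which in turn forces one to first strengthen the statement so that it applies not merely to a canonical form but to an arbitrary degree-$k$ covariant form carried by a simplex.
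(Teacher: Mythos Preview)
Your route is genuinely different from the paper's, and the step you flag as the main obstacle is precisely the one the paper bypasses rather than confronts.

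You restrict first and then try to triangulate/decompose on $\mathcal{H}$, hoping that a partial-fraction redistribution of $c_\Delta/\prod_j\bar q_j$ can be organised so that its summands are covariant forms of degree $N-M$ supported on the pieces of some signed triangulation of $\Delta_{\mathcal{H}}$. You correctly identify that this matching is the crux, and you do not actually carry it out: the inductive scheme you propose would need the strengthened statement (arbitrary degree-$k$ covariant form on a simplex as input) together with a compatibility of triangulations under successive slicing, neither of which is established. So as it stands there is a real gap at the decomposition step.

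The paper instead works \emph{upstairs} in $\mathbb{P}^N$ and triangulates the parent simplex before restricting. The key construction is this: pick the facet $F_{N+1}$ (say one whose trace $F_{N+1}\cap\mathcal{H}$ lies outside $\Delta_{\mathcal{H}}$) and build $N-M$ auxiliary hyperplanes $B_1,\ldots,B_{N-M}$ through the opposite vertex $Z_{N+1}$, all chosen to contain $F_{N+1}\cap\mathcal{H}$. Because each $B_a$ passes through $Z_{N+1}$, its dual vector lies in the span of $\mathcal{W}^{(1)},\ldots,\mathcal{W}^{(N)}$; multiplying and dividing $\omega(\mathcal{Y},\Delta)$ by $\prod_a B_a$ and expanding these relations yields a signed triangulation of $\Delta$ into simplices $\Delta^{(\sigma)}$, $\sigma\in\binom{[N]}{M}$, with facets $\{B_1,\ldots,B_{N-M},F_{N+1},F_{\sigma_1},\ldots,F_{\sigma_M}\}$. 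Upon covariant restriction, the $N-M+1$ facets $B_1,\ldots,B_{N-M},F_{N+1}$ all collapse to the single hyperplane $f_{N+1}$, so each $\omega(\mathcal{Y},\Delta^{(\sigma)})$ becomes the explicit covariant form
\[
\omega^{(N-M)}(\mathcal{Y}_{\mathcal{H}},\Delta^{(\sigma)}_{\mathcal{H}})\ \sim\ \frac{\langle\mathcal{Y}_{\mathcal{H}}\,d^M\mathcal{Y}_{\mathcal{H}}\rangle}{f_{N+1}^{\,N-M+1}\prod_{s=1}^{M}f_{\sigma_s}},
\]
which is manifestly a degree-$(N-M)$ covariant form on the simplex $\Delta^{(\sigma)}_{\mathcal{H}}$ (one pole of maximal order, the rest simple), and $\{\Delta^{(\sigma)}_{\mathcal{H}}\}$ is automatically a signed triangulation of $\Delta_{\mathcal{H}}$ because $\{\Delta^{(\sigma)}\}$ triangulates $\Delta$. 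No partial-fraction matching or induction is needed: the auxiliary hyperplanes are engineered so that the triangulation and the covariant decomposition come out together.

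What your approach would buy, if completed, is a direct description on $\mathcal{H}$ without reference to auxiliary data in $\mathbb{P}^N$; what the paper's approach buys is an explicit, uniform shape for each $\omega^{(N-M)}(\mathcal{Y}_{\mathcal{H}},\Delta^{(\sigma)}_{\mathcal{H}})$ and a one-shot argument that avoids your acknowledged obstacle.
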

\begin{proof}
 Let $(\mathbb{P}^N,\,\Delta)$ be a simplex and let $\mathcal{F}_{\Delta}\,:=\,\lbrace\mathcal{W}_I^{\mbox{\tiny $(j)$}}\,\in\,\mathbb{P}^N(\mathbb{R}),\:j\,=\,1,\ldots,\,N+1\rbrace$ be the set of dual vectors identifying its facets $F_j\,:=\,\lbrace\mathcal{Y}\,\in\,\mathbb{P}^N(\mathbb{R})\,|\,F_j(\mathcal{Y})\,:=\,\mathcal{Y}^I\mathcal{W}^{\mbox{\tiny $(j)$}}_I\,=\,0\rbrace$.
 
 Let $Z_1, \ldots Z_{N+1}$ be the vertices of $\Delta$, with $Z_i$ being the only vertex which does not belong to $F_i$. Let us denote consider the $M-1$ dimensional intersection $F_{\mbox{\tiny $N+1$}}\cap\mathcal{H}$ lies outside $\mathcal{P}$. 
 
 Let us now consider the $M$ dimensional hyperplane\footnote{Without loss of generality $\mathcal{H}$ does not pass through $Z_{N+1}$, otherwise we choose another facet.} $B=\cap_{a \in [N-M]}B_a$, with $B_a=\lbrace\mathcal{Y}\,\in\,\mathbb{P}^N(\mathbb{R})\,|\,\mathcal{Y}^I\mathcal{X}^{\mbox{\tiny $(a)$}}_I=0 \rbrace$, which includes the $M-1$ hyperplane $F_{\mbox{\tiny $N+1$}} \cap \mathcal{H}$ and the vertex $Z_{\mbox{\tiny $N+1$}}$. 
 
 Furthermore, the linear space of vectors dual to hyperplanes passing by $Z_{\mbox{\tiny $N+1$}}$ is $N$ dimensional and that $\lbrace \mathcal{W}^{\mbox{\tiny $(1)$}}\ldots \mathcal{W}^{\mbox{\tiny $(N)$}} \rbrace$ provides a basis for such a space. Therefore, since $Z_{\mbox{\tiny $N+1$}} \subset B_a$, then
 \begin{equation} \label{rel:dualvectors}
 \mathcal{X}^{\mbox{\tiny $(a)$}}=\sum_{i=1}^{N} c_{ai} \mathcal{W}^{\mbox{\tiny $(i)$}}, \quad a \in [N-M].
 \end{equation}
 
 Let us now consider the canonical form of the simplex and re-write it as:
 \begin{equation} \label{rewrite}
    \omega(\mathcal{Y},\Delta)\:\sim\:\frac{\displaystyle\prod_{a=1}^{N-M} B_a \, \langle \mathcal{Y} \mbox{d}^N \mathcal{Y} \rangle}{
      \displaystyle\prod_{a=1}^{N-M} B_a  \prod_{i=1}^{N+1} F_i}\:=\:
     \sum_{i_1,\ldots, i_{N-M} \in [N]} \frac{\displaystyle c_{1 i_1} \ldots c_{N-M i_{N-M}}}{\displaystyle \prod_{a=1}^{N-M} B_a F_{N+1}\prod_{s=1}^{M} F_{\bar{i}_s}} 
      \langle \mathcal{Y} \mbox{d}^N \mathcal{Y} \rangle
 \end{equation}
 where we used \eqref{rel:dualvectors} and we denoted $\lbrace \bar{i}_1, \ldots, \bar{i}_M \rbrace=[N]\setminus \lbrace i_1,\ldots, i_{N-M} \rbrace$. If we denote as $\Delta^{(\sigma)}$ the simplices whose facets are $B_1,\ldots, B_{\mbox{\tiny $N-M$}}, F_{\mbox{\tiny $N+1$}}, F_{\sigma_1}, \ldots, F_{\sigma_M}$, with $\sigma \in {[N] \choose M}$, then one can show that \eqref{rewrite} produces the oriented triangulation of $\Delta$ into $\lbrace \Delta^{(\sigma)} \rbrace$, in particular:
 \begin{equation} \label{cansimplexdec}
 \omega(\mathcal{Y},\Delta)\:\sim\:\sum_{\sigma \in {[N] \choose M}} \omega(\mathcal{Y},\Delta^{(\sigma)}).
 \end{equation}
 We now consider the covariant restriction of $\omega(\mathcal{Y},\Delta)$ onto $\mathcal{H}$ and use \eqref{cansimplexdec}. Let us choose a set of local holomorphic coordinates $(y,\,\tilde{y}_a)$ such that the locus $\tilde{y}_a \,=\,0$ locally identifies the hyperplane $\mathcal{H}$, with $y$ collectively indicating the remaining local coordinates. Then the covariant restriction of $\omega(\mathcal{Y},\Delta^{(\sigma)})$ to $\mathcal{H}$ is:
 \begin{equation}
 \omega^{\mbox{\tiny $(N-M)$}}(y,\,\Delta^{(\sigma)}_{\mathcal{H}})\:\sim\:\frac{\langle y\,\mbox{d}^M y \rangle}{f_{N+1}(y)^{N-M+1}\prod_{s=1}^{M} f_{\sigma_s}(y)}  
 \end{equation}
 which is a covariant form of degree $N-M$ of the simplex $(\mathbb{P}^M,\Delta^{(\sigma)}_{\mathcal{H}})$, whose facets are $\lbrace f_{\mbox{\tiny $N+1$}}, f_{\sigma_1}, \ldots, f_{\sigma_M} \rbrace$, where $f_i(y):= F_i(\mathcal{Y})|_{\tilde{y}_a=0}$ for $i \in [N+1]$.
 Notice that the covariant form has a pole of order $N-M+1$ in $f_{N+1}$ since $\Delta^{(\sigma)}$ has $N-M+1$ facets which intersect in $f_{\mbox{\tiny $N+1$}}$: 
 \begin{equation}
     f_{\mbox{\tiny $N+1$}}=F_{\mbox{\tiny $N+1$}}\cap \mathcal{H}=B_a \cap \mathcal{H}, \quad a \in [N-M].
 \end{equation}
 Then, by \eqref{cansimplexdec}, we have:
 \begin{equation}
        \omega^{\mbox{\tiny $(N-M)$}}(\mathcal{Y}_{\mathcal{H}}) \sim 
  \sum_{\sigma \in {[N] \choose M}}\omega^{(N-M)}(\mathcal{Y}, \Delta^{(\sigma)}_{\mathcal{H}})
 \end{equation}
 Since $\lbrace \Delta^{(\sigma)}\rbrace$ provides a signed triangulation of $\Delta$, then $\lbrace \Delta^{(\sigma)} \cap \mathcal{H}= \Delta^{(\sigma)}_{\mathcal{H}}\rbrace$ is a signed triangulation of $\Delta_{\mathcal{H}}$. Therefore $(\Delta_{\mathcal{H}},\,\omega^{\mbox{\tiny $(N-M)$}}(\mathcal{Y}_{\mathcal{H}},\,\Delta_{\mathcal{H}}))$ is a covariant form triangulation of $\lbrace (\Delta^{(\sigma)}_{\mathcal{H}},\omega^{\mbox{\tiny $(N-M)$}}(\mathcal{Y}_{\mathcal{H}},\,\Delta^{(\sigma)}_{\mathcal{H}}) \rbrace$.
 We comment on  why $\Delta^{(N)}_{\sigma} \cap \mathcal{H}$ is a simplex.
\end{proof}

\begin{theorem}\label{th:covpair}
  Let $(\mathbb{P}^N, \mathcal{P})$ be a projective polytope and $\omega(\mathcal{Y},\mathcal{P})$ its canonical form. Given a hyperplane $\mathcal{H}$ of codimension $N-M$ in $\mathbb{P}^N$, let $\omega^{\mbox{\tiny $(N-M)$}}(\mathcal{Y}_{\mathcal{H}})$ be the covariant form of degree $N-M$ of the restriction as in \eqref{eq:CPcf}. Then ${(\mathcal{P}_{\mathcal{H}}, \, \omega^{\mbox{\tiny $(N-M)$}}(\mathcal{Y}_{\mathcal{H}}))}$ is a covariant pairing.
\end{theorem}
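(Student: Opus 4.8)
The plan is to reduce the statement to the simplex case, Lemma~\ref{lem:triangulation}, by triangulating the parent polytope and using that the covariant restriction \eqref{eq:CPcf}--\eqref{eq:LorOpH} is a linear operation. First I would triangulate $(\mathbb{P}^N,\mathcal{P})$ into simplices by \emph{pulling} from a generic interior point $p\in\mathcal{P}_{>0}$ chosen so that $p\notin\mathcal{H}$: this produces a triangulation $\{(\mathbb{P}^N,\Delta^{\mbox{\tiny $(\alpha)$}})\}_\alpha$ in which every $\Delta^{\mbox{\tiny $(\alpha)$}}$ has exactly one facet $\tau^{\mbox{\tiny $(\alpha)$}}$ not through $p$, that facet lies on $\partial\mathcal{P}$, and $p$ is the vertex opposite to it. The point of this particular triangulation is that it lets us pick, \emph{consistently} across all simplices, the distinguished facet of Lemma~\ref{lem:triangulation} to be $\tau^{\mbox{\tiny $(\alpha)$}}$ (so no internal wall of the triangulation is ever a distinguished facet) and the vertex the hyperplane must avoid to be $p$ (which is off $\mathcal{H}$ by construction); taking in addition $p$ and the induced boundary triangulation generic guarantees that no internal wall, and no $\tau^{\mbox{\tiny $(\alpha)$}}$, meets $\mathcal{H}$ along the same $\mathbb{P}^M$-hyperplane as a facet of $\mathcal{P}$ or as another internal wall. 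By \eqref{eq:TRcf}, $\omega(\mathcal{Y},\mathcal{P})=\sum_\alpha\omega(\mathcal{Y},\Delta^{\mbox{\tiny $(\alpha)$}})$.

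Applying the Laurent operator $\mathcal{L}_{\mathcal{H}}^{\mbox{\tiny $(0)$}}$ of \eqref{eq:LorOpH}, which is $\mathbb{C}$-linear since it is a multivariate residue around the simple poles $h_l=0$, gives
\begin{equation*}
 \omega^{\mbox{\tiny $(N-M)$}}(\mathcal{Y}_{\mathcal{H}})\:=\:\sum_\alpha\omega^{\mbox{\tiny $(N-M)$}}(\mathcal{Y}_{\mathcal{H}},\Delta^{\mbox{\tiny $(\alpha)$}}),
\end{equation*}
and by Lemma~\ref{lem:triangulation} each summand equals $\sum_\sigma\omega^{\mbox{\tiny $(N-M)$}}(\mathcal{Y}_{\mathcal{H}},\Delta^{\mbox{\tiny $(\alpha,\sigma)$}}_{\mathcal{H}})$, with $\{(\mathbb{P}^M,\Delta^{\mbox{\tiny $(\alpha,\sigma)$}}_{\mathcal{H}})\}_\sigma$ a signed triangulation of $(\mathbb{P}^M,\Delta^{\mbox{\tiny $(\alpha)$}}_{\mathcal{H}})$ and each $\omega^{\mbox{\tiny $(N-M)$}}(\mathcal{Y}_{\mathcal{H}},\Delta^{\mbox{\tiny $(\alpha,\sigma)$}}_{\mathcal{H}})$ a covariant form of degree $N-M$ of the simplex $\Delta^{\mbox{\tiny $(\alpha,\sigma)$}}_{\mathcal{H}}$. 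Hence $\omega^{\mbox{\tiny $(N-M)$}}(\mathcal{Y}_{\mathcal{H}})=\sum_{\alpha,\sigma}\omega^{\mbox{\tiny $(N-M)$}}(\mathcal{Y}_{\mathcal{H}},\Delta^{\mbox{\tiny $(\alpha,\sigma)$}}_{\mathcal{H}})$, which establishes condition (ii) in the definition of a covariant pairing with the collection $\{\Delta^{\mbox{\tiny $(\alpha,\sigma)$}}_{\mathcal{H}}\}$. For condition (i) I would note that $\{\Delta^{\mbox{\tiny $(\alpha)$}}_{\mathcal{H}}=\Delta^{\mbox{\tiny $(\alpha)$}}\cap\mathcal{H}\}$ subdivides $\mathcal{P}_{\mathcal{H}}=\mathcal{P}\cap\mathcal{H}$ --- equivalently signed-triangulates it, the canonical forms adding up by \eqref{eq:TRcf} --- and then refine by the signed triangulations $\{\Delta^{\mbox{\tiny $(\alpha,\sigma)$}}_{\mathcal{H}}\}_\sigma$, using that a composition of signed triangulations is again one (the relation $\sum_j\omega(\mathcal{Y},\mathcal{P}^{\mbox{\tiny $(j)$}})=0$ being preserved under refinement).

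The remaining point is condition (iii). Let $\{\mathcal{Q}^{\mbox{\tiny $(i)$}}\}$ be a subfamily of faces of the $\{\Delta^{\mbox{\tiny $(\alpha,\sigma)$}}_{\mathcal{H}}\}$ that sign-triangulates the empty set. Under the correspondence of Lemma~\ref{lem:triangulation} these faces are restrictions to $\mathcal{H}$ either of internal walls of the triangulation $\{\Delta^{\mbox{\tiny $(\alpha)$}}\}$ (on which $\omega(\mathcal{Y},\mathcal{P})=\sum_\alpha\omega(\mathcal{Y},\Delta^{\mbox{\tiny $(\alpha)$}})$ has no pole), or of facets $F_j$ of $\mathcal{P}$ whose $\mathcal{H}$-intersection lies outside $\mathcal{P}_{\mathcal{H}}$. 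Since $\omega^{\mbox{\tiny $(N-M)$}}(\mathcal{Y}_{\mathcal{H}})$ is the covariant restriction of $\omega(\mathcal{Y},\mathcal{P})$ --- a meromorphic form regular away from the facets of $\mathcal{P}$ --- the local analysis \eqref{eq:CFXb}--\eqref{eq:CPloc} shows that its poles lie only on loci of the form $(\text{face of }\mathcal{P})\cap\mathcal{H}$, with multiplicity equal to the number of facets of $\mathcal{P}$ having common intersection with $\mathcal{H}$ there, hence bounded by $N-M+1=k+1$. On the $\mathcal{Q}^{\mbox{\tiny $(i)$}}$ coming from internal walls this multiplicity is zero (the genericity of the triangulation rules out accidental coincidences of their $\mathcal{H}$-images with other pole loci, so the simple poles carried there by the individual $\Delta^{\mbox{\tiny $(\alpha,\sigma)$}}_{\mathcal{H}}$ cancel in the sum), and on those coming from a facet $F_j$ it is strictly smaller than the order $k+1$ carried there by each summand $\omega^{\mbox{\tiny $(N-M)$}}(\mathcal{Y}_{\mathcal{H}},\Delta^{\mbox{\tiny $(\alpha,\sigma)$}}_{\mathcal{H}})$ on its distinguished facet. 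In either case the spurious pole is lowered in the sum, which is condition (iii), and $(\mathcal{P}_{\mathcal{H}},\,\omega^{\mbox{\tiny $(N-M)$}}(\mathcal{Y}_{\mathcal{H}}))$ is a covariant pairing.

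I expect the main obstacle to be precisely the verification of condition (iii): matching the empty-set-triangulating subfamilies of $\{\Delta^{\mbox{\tiny $(\alpha,\sigma)$}}_{\mathcal{H}}\}$ with spurious walls of the parent triangulation, and checking uniformly that the partial cancellations dictated by $\omega(\mathcal{Y},\mathcal{P})$ being less singular than its triangulation pieces do lower every such pole. Choosing the parent triangulation so that the distinguished facets can be picked consistently and the required genericity relative to the fixed $\mathcal{H}$ holds is the key enabling device, and the pulling triangulation above is what makes this possible; the rest is the bookkeeping of how the orders add up on each class of restricted locus.
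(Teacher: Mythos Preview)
Your approach is essentially the same as the paper's: triangulate $\mathcal{P}$ into simplices, apply Lemma~\ref{lem:triangulation} to each, and observe that the resulting collection $\{\Delta^{(\alpha,\sigma)}_{\mathcal{H}}\}$ signed-triangulates $\mathcal{P}_{\mathcal{H}}$ with the covariant restriction decomposing accordingly. The paper's proof is in fact considerably terser than yours --- it takes an arbitrary triangulation $\{\Delta^{(j)}\}$, invokes the lemma to refine each piece, notes that the refinement is still a signed triangulation of $\mathcal{P}$, and writes down the sum formula, without explicitly addressing condition~(iii) or imposing any genericity on the triangulation.

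Your extra work --- choosing a pulling triangulation from a point $p\notin\mathcal{H}$ so that the distinguished facet of Lemma~\ref{lem:triangulation} can be taken consistently, arranging genericity so internal walls do not accidentally collide on $\mathcal{H}$, and then tracking the pole orders on the empty-set-triangulating subfamilies --- is all in service of verifying condition~(iii), which the paper leaves implicit. This is a reasonable concern to flag, and your argument for it (that $\omega^{(N-M)}(\mathcal{Y}_{\mathcal{H}})$ inherits its pole structure from the global $\omega(\mathcal{Y},\mathcal{P})$, which is regular on internal walls and bounded in multiplicity on genuine facets by the local analysis \eqref{eq:CFXb}--\eqref{eq:CPloc}) is the right mechanism. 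The paper appears to regard this as automatic once the sum decomposition is established, since the left-hand side is a fixed form with known singular support independent of the triangulation chosen.
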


\begin{proof}
 Given a projective polytope $(\mathbb{P}^N,\,\mathcal{P})$, let us consider its triangulation via the simplices $\lbrace \Delta^{\mbox{\tiny $(j)$}} \rbrace$. Then, it is possible to triangulate each $\Delta^{\mbox{\tiny $(j)$}}$ using the signed triangulations defined in Lemma \ref{lem:triangulation} as $\lbrace \Delta^{\mbox{\tiny $(j\sigma_j)$}}\rbrace$. Of course $\lbrace \Delta^{\mbox{\tiny $(j\sigma_j)$}} \rbrace$ is a signed triangulation of $(\mathbb{P}^N,\,\mathcal{P})$ as well. By Lemma \ref{lem:triangulation}, the covariant restriction of $\omega(\mathcal{Y},\,\Delta^{\mbox{\tiny $j\sigma_j$}})$ on $\mathcal{H}$ is a covariant form of degree $N-M$ of the simplex $\Delta^{\mbox{\tiny $(j\sigma_j)$}}_{\mathcal{H}}=\Delta^{\mbox{\tiny $(j\sigma_j)$}} \cap \mathcal{H}$. Therefore:
 \begin{equation}
     \omega^{\mbox{\tiny $(N-M)$}}(\mathcal{Y}_{\mathcal{H}})=\sum_{j,\sigma_j} \omega^{\mbox{\tiny $(N-M)$}}(\mathcal{Y}_{\mathcal{H}},\Delta^{\mbox{\tiny $(j \sigma_j)$}}_{\mathcal{H}}).
 \end{equation}
\end{proof}

In Lemma \ref{lem:triangulation} we encountered restrictions of a simplices on hyperplanes, and we will see them again in Section \ref{subsec:jkex} as well. In general, every polytope can be realised as a restriction from a simplex of suitable dimension.
Therefore, restrictions of arbitrary polytopes are subsumed under the study of restrictions of simplices. Surprisingly, despite the simplicity of simplices, little is known about the geometric and combinatorial properties of restrictions of simplices in full generality. We refer to \cite{Prabhu1999, Bezdek1990} for related questions and answers on such properties. In particular, in \cite{Prabhu1999} it is shown that, given a simplex $\Delta$ in $\mathbb{P}^N$, there exists an hyperplane $\mathcal{H}$ of even dimension $M$ such that it intersects the interior of \emph{all} the faces of $\Delta$ of dimension $N-M/2$. The only visualisable example is a $2$-plane which intersects all facets of a tetrahedron: the restriction on such plane gives a quadrilateral. Therefore, curiously enough, we can always intersect the interior of \emph{all} $N+1$ facets of a simplex in $\mathbb{P}^N$ with a $2$-plane. 
If we perform a covariant restriction of the canonical form $\omega(\mathcal{Y},\Delta)$ of $\Delta$ onto such $2$-dimensional hyperplane $\mathcal{H}$, we get a covariant form $\omega^{(N-2)}(\mathcal{Y}_{\mathcal{H}},\Delta_{\mathcal{H}})$ of degree $N-2$ in covariant pairing with $\Delta_{\mathcal{H}}$, i.e. a polygon with $N+1$ edges. This form has \emph{all} simple poles on the edges of the polygon, since $\mathcal{H}$ intersects the all facets of $\Delta$ on the simplex. However, it is not its canonical form: it has poles outside, where non-adjacent edges intersect.

There is an analogous statement \cite{Bezdek1990} for polytopes in $\mathbb{P}^N$, with $N\geq3$: if the polytope has at most $2N$ facets\footnote{Under the condition that the polytope has at a simple vertex (i.e. it belongs to exactly $N$ facets of the polytope), then the result is true also if the polytope has $2N+1$ facets, $N\geq4$.}, then there is always an hyperplane which intersects the interior of \emph{all} its facets.

\begin{theorem}
 Let $(\mathbb{P}^N,\mathcal{P})$ be a projective polytope and $\omega(\mathcal{Y},\mathcal{P})$ its canonical form. Then, given an hyperplane $\mathcal{H}$ of codimension $N-M$ in $\mathbb{P}^N$, let $\omega^{\mbox{\tiny $(N-M)$}}(\mathcal{Y}_{\mathcal{H}},\,\mathcal{P}_{\mathcal{H}})$ be the covariant restriction of $\omega(\mathcal{Y},\mathcal{P})$ onto $\mathcal{H}$, which is in covariant pairing with $(\mathbb{P}^M,\,\mathcal{P}_{\mathcal{H}})$.
 Let $(\mathbb{P}^{M-1},\,\partial\mathcal{P}^{\mbox{\tiny $(j)$}}_{\mathcal{H}})$ be a boundary component of the restriction $(\mathbb{P}^M,\,\mathcal{P}_{\mathcal{H}})$ of $(\mathbb{P}^N,\mathcal{P})$ onto $\mathcal{H}$, corresponding to a pole with multiplicity $N-M+1$ in $\omega^{\mbox{\tiny $(N-M)$}}(\mathcal{Y}_{\mathcal{H}},\,\mathcal{P}_{\mathcal{H}})$. Then, if $\{(\mathbb{P}^{N-1},\,\partial\mathcal{P}^{(\alpha)})\}$ is the collection of boundary components of $(\mathbb{P}^N,\,\mathcal{P})$ such that $\partial\mathcal{P}^{\mbox{\tiny $(\alpha)$}}\cap \mathcal{H} \,=\,\partial\mathcal{P}^{\mbox{\tiny $(j)$}}_{\mathcal{H}}=\cap_{\alpha} \partial \mathcal{P}^{(\alpha)}$, then
 \begin{equation}
     \ResT_{\mbox{\tiny $\bigcap_\alpha \partial\mathcal{P}^{\mbox{\tiny $(\alpha)$}}$}}\left\{\omega(\mathcal{Y},\,\mathcal{P})\right\}\:=\:
      \Lor_{\partial\mathcal{P}^{\mbox{\tiny $(j)$}}_{\mathcal{H}}}{}^{\mbox{\tiny $(N-M+1)$}}\left\{\omega^{\mbox{\tiny $(N-M)$}}(\mathcal{Y}_{\mathcal{H}},\,\mathcal{P}_{\mathcal{H}})\right\}.
 \end{equation}
\end{theorem}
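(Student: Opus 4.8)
The plan is to show that both sides of the asserted identity equal the canonical form of one and the same face, namely $G\,:=\,\bigcap_\alpha\partial\mathcal{P}^{\mbox{\tiny $(\alpha)$}}$, regarded on the left as a codimension-$(N-M+1)$ face of the parent polytope $(\mathbb{P}^N,\,\mathcal{P})$ and on the right as the codimension-one boundary component $\partial\mathcal{P}^{\mbox{\tiny $(j)$}}_{\mathcal{H}}$ of the child polytope $(\mathbb{P}^M,\,\mathcal{P}_{\mathcal{H}})$. The first observation is purely geometric: the hypothesis $\partial\mathcal{P}^{\mbox{\tiny $(\alpha)$}}\cap\mathcal{H}\,=\,\partial\mathcal{P}^{\mbox{\tiny $(j)$}}_{\mathcal{H}}\,=\,\bigcap_\alpha\partial\mathcal{P}^{\mbox{\tiny $(\alpha)$}}$ forces $G\subseteq\mathcal{H}$, forces the facets $\partial\mathcal{P}^{\mbox{\tiny $(\alpha)$}}$ to be in general position along $G$ (so that $G$ has codimension exactly $N-M+1$, matching the maximal pole order $m_j\,=\,k+1\,=\,N-M+1$ of Section~\ref{subsec:CovFrm}), and identifies $G$, as a subvariety of $\mathbb{P}^N$, with the facet $\partial\mathcal{P}^{\mbox{\tiny $(j)$}}_{\mathcal{H}}$ of $\mathcal{P}_{\mathcal{H}}$; in particular the two carry the same canonical form, $\omega(\mathcal{Y}',\,G)\,=\,\omega(\mathcal{Y}',\,\partial\mathcal{P}^{\mbox{\tiny $(j)$}}_{\mathcal{H}})$. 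The left-hand side is then immediate: iterating the defining property~\eqref{eq:CFdef} of the canonical form along the facets $\partial\mathcal{P}^{\mbox{\tiny $(\alpha)$}}$ cutting out $G$ yields
\begin{equation}
 \ResT_{\mbox{\tiny $\bigcap_\alpha\partial\mathcal{P}^{\mbox{\tiny $(\alpha)$}}$}}\left\{\omega(\mathcal{Y},\,\mathcal{P})\right\}\:=\:\omega(\mathcal{Y}',\,G),
\end{equation}
the ordering of the iterated residues being fixed, up to an overall sign, by the (compatible) orientations.

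For the right-hand side I would argue through the covariant-pairing structure. By Theorem~\ref{th:covpair}, $\omega^{\mbox{\tiny $(N-M)$}}(\mathcal{Y}_{\mathcal{H}},\,\mathcal{P}_{\mathcal{H}})$ is a covariant form of degree $N-M$ in covariant pairing with $(\mathbb{P}^M,\,\mathcal{P}_{\mathcal{H}})$, and along the genuine boundary $\partial\mathcal{P}^{\mbox{\tiny $(j)$}}_{\mathcal{H}}$ it has a pole of multiplicity $N-M+1$. The local normal form~\eqref{eq:CPloc}, together with \eqref{eq:CFdkLoc}--\eqref{eq:LorOp}, shows that it obeys property~(ii) along that boundary, so its leading Laurent coefficient there is a covariant form of degree $(N-M)-(N-M+1)+1\,=\,0$ of $\partial\mathcal{P}^{\mbox{\tiny $(j)$}}_{\mathcal{H}}$. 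By Proposition~\ref{thm:wk0un} a degree-$0$ covariant form is unique and coincides with the canonical form, hence
\begin{equation}
 \Lor_{\partial\mathcal{P}^{\mbox{\tiny $(j)$}}_{\mathcal{H}}}{}^{\mbox{\tiny $(N-M+1)$}}\left\{\omega^{\mbox{\tiny $(N-M)$}}(\mathcal{Y}_{\mathcal{H}},\,\mathcal{P}_{\mathcal{H}})\right\}\:=\:\omega(\mathcal{Y}',\,\partial\mathcal{P}^{\mbox{\tiny $(j)$}}_{\mathcal{H}}),
\end{equation}
and combining this with the left-hand side and the identification $G\,=\,\partial\mathcal{P}^{\mbox{\tiny $(j)$}}_{\mathcal{H}}$ proves the statement.

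An equivalent, more explicit derivation runs through simplices. Triangulate $\mathcal{P}$ into simplices and apply Lemma~\ref{lem:triangulation} to each piece: the covariant restriction of a simplex's canonical form has a pole of maximal order $N-M+1$ exactly on the restriction of one distinguished facet, and the computation in the proof of Lemma~\ref{lem:triangulation} identifies its leading Laurent coefficient with the canonical form of the corresponding sub-simplex of $\partial\mathcal{P}^{\mbox{\tiny $(j)$}}_{\mathcal{H}}$. The simplices whose distinguished facet restricts to $\partial\mathcal{P}^{\mbox{\tiny $(j)$}}_{\mathcal{H}}$ provide a signed triangulation of it, so summing their leading Laurent coefficients --- which is exactly what the covariant restriction of $\omega(\mathcal{Y},\,\mathcal{P})$ does, by the decomposition in Theorem~\ref{th:covpair} --- reconstructs $\omega(\mathcal{Y}',\,\partial\mathcal{P}^{\mbox{\tiny $(j)$}}_{\mathcal{H}})$.

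The step I expect to be the main obstacle is fixing the normalization in the two displays above. Unlike an ordinary residue, the Laurent operator $\Lor^{\mbox{\tiny $(m)$}}$ with $m>1$ is sensitive to the scale of the local defining function of the boundary, so one has to check that the scale inherited from the covariant restriction~\eqref{eq:CPcf} --- equivalently, from the dual vectors of $\mathcal{P}_{\mathcal{H}}$ --- is precisely the one for which the leading Laurent coefficient has leading singularity $\pm1$, and not a nontrivial constant multiple of $\omega(\mathcal{Y}',\,G)$. The brute-force verification introduces local holomorphic coordinates $(t,\,q,\,h_1,\ldots,h_{N-M})$ adapted to $G\subseteq\mathcal{H}$, with $h=0$ cutting out $\mathcal{H}$ and $q=0$ cutting out $\partial\mathcal{P}^{\mbox{\tiny $(j)$}}_{\mathcal{H}}$ inside $\mathcal{H}$: one expresses each linear form $q_\alpha$ through $q$ and the $h_l$, computes both the iterated residue and the contour integral~\eqref{eq:CPcf} explicitly, and checks that the competing Jacobian determinants cancel. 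The conceptual shortcut around this bookkeeping is the appeal to Proposition~\ref{thm:wk0un}: once the leading Laurent coefficient is known to be a genuine degree-$0$ covariant form of $\partial\mathcal{P}^{\mbox{\tiny $(j)$}}_{\mathcal{H}}$ --- so that all its poles are simple and its iterated residues down to a point equal $\pm1$ --- its normalization is no longer a free parameter and must be that of $\omega(\mathcal{Y}',\,G)$.
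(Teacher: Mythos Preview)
Your alternative explicit route through simplices is essentially the paper's proof: the paper expands $\omega(\mathcal{Y},\mathcal{P})$ over a triangulation, isolates the terms carrying all the $Q_\alpha$ in the denominator, computes the iterated residue in coordinates adapted to $\{Q_\alpha=0\}$, and matches this against the leading Laurent coefficient computed in coordinates adapted to $\mathcal{H}$ and to $\{q=0\}$, using that $F_i\cap\bigcap_\alpha\partial\mathcal{P}^{(\alpha)}=(F_i\cap\mathcal{H})\cap\partial\mathcal{P}^{(j)}_{\mathcal{H}}$. The paper, like you, records that the equality holds once one passes to the representative with unit leading singularities.

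Your primary conceptual route, however, has a gap that is not the normalization issue you flag but something earlier. You assert, via \eqref{eq:CPloc} together with \eqref{eq:CFdkLoc}--\eqref{eq:LorOp}, that the leading Laurent coefficient is a degree-$0$ covariant form of $\partial\mathcal{P}^{(j)}_{\mathcal{H}}$, and then invoke Proposition~\ref{thm:wk0un}. But \eqref{eq:CPloc} only establishes the local \emph{shape} of the Laurent expansion; it says nothing about whether the coefficient $\omega^{(0)}(y_j)$ has poles solely on the facets of $\partial\mathcal{P}^{(j)}_{\mathcal{H}}$, nor about whether its residues there themselves satisfy the recursive property~(ii). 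Property~(ii) is part of the \emph{definition} of a covariant form, and the input here, $\omega^{(N-M)}(\mathcal{Y}_{\mathcal{H}},\mathcal{P}_{\mathcal{H}})$, is only shown (Theorem~\ref{th:covpair}) to be in covariant \emph{pairing} with $\mathcal{P}_{\mathcal{H}}$, a notion that carries no analogue of property~(ii). Knowing that the leading Laurent coefficient is the canonical form of $\partial\mathcal{P}^{(j)}_{\mathcal{H}}$ is precisely the content of the theorem, so this step is circular. The direct simplex computation is what actually closes the argument, and that is the route the paper takes.
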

\begin{proof}
 Let $(\mathbb{P}^N,\mathcal{P})$ be a projective polytope and $\omega(\mathcal{Y},\mathcal{P})$ its canonical form. If $\tilde{n}$ is the total number of its facets, then 
 its canonical form can be decomposed as\footnote{This corresponds to picking a triangulation of the polytope.}:
 \begin{equation} \label{eq:candec}
    \omega(\mathcal{Y},\mathcal{P})= \sum_{\sigma \in { [\tilde{n}]\choose N+1} } \frac{a_{\sigma}}{F_{\sigma_1} \ldots F_{\sigma_{N+1}}} \langle \mathcal{Y} \, \mbox{d}^N \mathcal{Y}\rangle
 \end{equation}
 where $ F_i\,=\,F_i(\mathcal{Y})$ is the linear homogeneous polynomial identifying the facet $\mathcal{F}_i$, with $i \in [\tilde{n}]$.
 Among these, let us denote as $Q_\alpha$ the linear homogeneous polynomial identifying the facet $\partial \mathcal{P}^{(\alpha)}$, with $\alpha=1,\ldots,m$. 
 Then the residue operator receives contributions only from terms of the following type:
 \begin{equation} \label{eq:canresterm}
    \frac{1}{\prod_{\alpha \in [m]} Q_{\alpha}}\sum_{\tilde{\sigma} \in I} \frac{a_{\tilde{\sigma}}}{F_{\tilde{\sigma}_1} \ldots F_{\tilde{\sigma}_{M}}} \langle \mathcal{Y} \, \mbox{d}^N \mathcal{Y}\rangle 
 \end{equation}
 where $I \subseteq {[\tilde{\nu}] \choose M}$ such that $\lbrace F_{\tilde{\sigma}_1} \ldots F_{\tilde{\sigma}_{M}} \rbrace$ does not contain any of the $Q_{\alpha}$, and we used the fact that $N+1-m =M$.  Let us now parametrise $\mathbb{P}^N$ via a set of local holomorphic coordinates $(x_j,\,h_\alpha)$ such that the locus $h_{\alpha}\,=\,0$ locally identifies with the facet $Q_{\alpha}$, with $x_j$ collectively indicating the remaining local coordinates. Then:
 \begin{equation} \label{res:can}
     \Res_{\mbox{\tiny $\lbrace Q_{\alpha}=0 \rbrace$}} \omega(\mathcal{Y},\mathcal{P}) \sim  \sum_{\tilde{\sigma} \in I} \left. \frac{a_{\tilde{\sigma}}}{F_{\tilde{\sigma}_1} \ldots F_{\tilde{\sigma}_{M}}} \right|_{h_{\alpha}=0} \langle x \, \mbox{d}^{M-1} x\rangle,
 \end{equation}
 where $\sim$ expresses the result up to an overall constant.
 
 We now focus on the covariant form $\omega^{\mbox{\tiny $(N-M)$}}(\mathcal{Y},\mathcal{P}_{\mathcal{H}})$. Using the expansion \eqref{eq:candec}, we can see that, the only term contributing to its leading Laurent coefficient around the pole corresponding to the boundary $\partial \mathcal{P}^{(j)}_{\mathcal{H}}$ is the restriction on $\mathcal{H}$ of the form in \eqref{eq:canresterm}. If we parametrise $\mathbb{P}^N$ via a set of local holomorphic coordinates $(y,\,\tilde{y}_a)$ such that the locus $\tilde{y}_a \,=\,0$ locally identifies with the hyperplane $\mathcal{H}$, with $y$ collectively indicating the remaining local coordinates, then this restriction reads:
 \begin{equation}
    \frac{1}{q^m}\sum_{\tilde{\sigma} \in I} \left. \frac{a_{\tilde{\sigma}}}{F_{\tilde{\sigma}_1} \ldots F_{\tilde{\sigma}_{M}}} \right|_{\tilde{y}_a=0} \langle y \, \mbox{d}^M y \rangle,
 \end{equation}
 where we denoted $q$ the linear homogeneous polynomial corresponding to the boundary $\partial \mathcal{P}^{(j)}_{\mathcal{H}}$ and used the fact that $Q_{\alpha}|_{\tilde{y}_a=0}=q$, since $\partial \mathcal{P}^{(\alpha)} \cap \mathcal{H}=\partial \mathcal{P}^{(j)}_{\mathcal{H}}$.
  Furthermore, let us choose coordinates $(\tilde{x},\,x_j)$ such that the locus $\tilde{x} \,=\,0$ locally identifies with the locus of the pole $q=0$, and $x$ collectively indicating the remaining local coordinates. Then:
  \begin{equation} \label{llc:can}
     \Lor_{\mbox{\tiny $\lbrace q=0 \rbrace$}} \omega^{(N-M)}(\mathcal{Y},\mathcal{P}_{\mathcal{H}}) \sim  \sum_{\tilde{\sigma} \in I} \left. \frac{a_{\tilde{\sigma}}}{F_{\tilde{\sigma}_1} \ldots F_{\tilde{\sigma}_{M}}} \right|_{\tilde{y}_a,\tilde{x}=0} \langle x \, \mbox{d}^{M-1} x\rangle.
 \end{equation}
 By hypotheses, $(\mathcal{F}_i \cap \mathcal{H})\cap \partial \mathcal{P}^{(j)}_{\mathcal{H}}=\mathcal{F}_i \cap_{\alpha \in [m]} \partial \mathcal{P}^{(\alpha)}$, then the restriction to $h_{\alpha}=0$ in \eqref{res:can} and the restriction to $\tilde{y}_a,\tilde{x}=0$ in \eqref{llc:can} coincide. The statement of theorem follows immediately, once we consider a representative of \eqref{llc:can} such that it has unit leading singularities. 
\end{proof}
%%%%%%%%%%%%%%%%%%%%%%%%%%%%%%%%%%%%%%%%
%%%%%%%%%%%%%%%%%%%%%%%%%%%%%%%%%%%%%%%%

\subsection{Visualisable Examples: Polygons and Polyhedra}\label{subsec:Ex}

In order to illustrate the covariant restriction map between the canonical form of a parent polytope and the covariant pairing of its child polytope, and how their structures are tied to each other, we will discuss some non-trivial example in the two visualisable cases, {\it i.e.} polytopes in $\mathbb{P}^2$ and $\mathbb{P}^3$, distinguishing between the cases in which the restriction is with respect to a hyperplane intersecting the parent polytope inside only, and when the hyperplane can intersect its facets outside.

%%%%%%%%%%%%%%%%%%%%%%%%%%%%%%%%%%%%%%%%

\subsubsection{Polygons and Internal Intersections}

Let us consider the simplest non-trivial examples of polytopes in $\mathbb{P}^2$, and let us indicate the $n$-gons as $\mathcal{P}_n$. First, notice that just for $n\,=\,3,\,4$ there exist hyper-planes $\mathcal{H}$ which intersect $\mathcal{P}_n$ inside or on its boundaries only:
\begin{figure}[H]
 \centering
 \begin{tikzpicture}[line join = round, line cap = round, ball/.style = {circle, draw, align=center, anchor=north, inner sep=0}, 
                     axis/.style={very thick, ->, >=stealth'}, pile/.style={thick, ->, >=stealth', shorten <=2pt, shorten>=2pt}, every node/.style={color=black}]
 \begin{scope}
  \coordinate [label=above:{\footnotesize $\displaystyle 1$}] (A) at (0,0);
  \coordinate [label=left:{\footnotesize $\displaystyle 2$}] (B) at (-1.75,-2.25);
  \coordinate [label=right:{\footnotesize $\displaystyle 3$}] (C) at (+1.75,-2.25);
  
  \draw[-, fill=blue!30, opacity=.7] (A) -- (B) -- (C) -- cycle;  
  
  \coordinate [label=below:{\footnotesize $\displaystyle 4$}] (Z) at ($(B)!0.25!(C)$);
  \coordinate [label=right:{\small $\displaystyle \mathcal{H}$}] (ZU) at ($(A)!-0.75cm!(Z)$);
  \coordinate [label=left:{\small $\displaystyle \mathcal{P}_3$}] (P3) at ($(A)!0.375!(B)$);
  \draw[color=red, thick] ($(A)!-0.75cm!(Z)$) -- ($(Z)!-0.75cm!(A)$);
  \draw[color=red, very thick] (A) -- (Z);
  \draw[fill=red, color=red] (A) circle (1pt);
- \draw[fill=red, color=red] (Z) circle (1pt);
 \end{scope}
 \begin{scope}[shift={(7,0)}, transform shape]
  \coordinate [label=above:{\footnotesize $\displaystyle 1$}] (A) at (0,0);
  \coordinate [label=left:{\footnotesize $\displaystyle 2$}] (B) at (-.5,-2.25);
  \coordinate [label=right:{\footnotesize $\displaystyle 3$}] (C) at (2,-2);
  \coordinate [label=above:{\footnotesize $\displaystyle 4$}] (D) at (1.75,-.25);

  \draw[-, fill=blue!30, opacity=.7] (A) -- (B) -- (C) -- (D) -- cycle;
  \coordinate [label=right:{\small $\displaystyle \mathcal{H}$}] (H) at ($(D)!-0.75cm!(B)$);
  \coordinate [label=left:{\small $\displaystyle \mathcal{P}_4$}] (P4) at ($(A)!0.5!(B)$);
  \draw[color=red, thick] ($(D)!-0.75cm!(B)$) -- ($(B)!-0.75cm!(D)$);
  \draw[color=red, very thick] (D) -- (B);
  \draw[fill=red, color=red] (D) circle (1pt);
- \draw[fill=red, color=red] (B) circle (1pt);
 \end{scope}
 \end{tikzpicture}
 %
% \caption{Examples of hyperplanes $\mathcal{H}$ intersecting polytopes just on their interior. In $\mathcal{P}^2$ the triangle $\mathcal{P}_3$ and the square $\mathcal{P}_4$ are the only one for which such an hyperplane exists.}
% \label{fig:HP1}
\end{figure}
For all $n\,\ge\,5$ such hyperplanes do not exist, and any hyperplane intersects $\mathcal{P}_{n\ge5}$ both inside {\it and} outside (see Figure \ref{fig:PolygH}). Let us begin with discussing the two examples in which the intersection $\mathcal{H}\,\bigcap\,\mathcal{P}_n$ lies completely inside the convex hull $\mathcal{P}_n$, {\it i.e.} for the triangle and the square depicted above. Let us choose the local coordinates $\mathcal{Y}\,=\,(y_1,\,y_2,\,y_3)$. For the triangle $\mathcal{P}_3$, let us take its vertices to be $Z_1\,=\,(1,0,0)$, $Z_2\,=\,(0,1,0)$, $Z_3\,=\,(0,0,1)$, then its canonical form is given by
\begin{equation}\label{eq:P3cf}
    \omega(\mathcal{Y},\,\mathcal{P}_3)\:=\:\frac{\langle123\rangle^2}{\langle\mathcal{Y}12\rangle\langle\mathcal{Y}23\rangle\langle\mathcal{Y}31\rangle}\langle\mathcal{Y}d^2\mathcal{Y}\rangle\:=\:
   \bigwedge_{j=1}^3\frac{dy_j}{y_j}\frac{1}{\mbox{Vol}\{GL(1)\}}.
\end{equation}
Let us now consider the hyperplane $\mathcal{H}$ defined as
\begin{equation}
    \mathcal{H}\:=\:\left\{\mathcal{Y}\,\in\,\mathbb{P}^2\,|\,\langle\mathcal{Y}14\rangle\,=\,0\,=\,\alpha y_3-(1-\alpha) y_2\right\},
\end{equation}
where $Z_4\,=\,\alpha Z_2+(1-\alpha)Z_3\,=\,(0,\alpha,\,1-\alpha)$, with $\alpha\,\in\,]0,1[$ so to guarantee that it lies inside the boundary $(2,3)$, and the last equality is just the representation of the hyperplane in our local coordinates. The restriction $\mathcal{P}_{\mathcal{H}}\, :=\,\mathcal{P}_3\cap\mathcal{H}$ is just the segment with boundaries in $Z_1$ and $Z_4$. Then, the covariant restriction \eqref{eq:CPcf} of \eqref{eq:P3cf} yields:
\begin{equation}\label{eq:P3cfR}
    \omega^{\mbox{\tiny $(1)$}}(\mathcal{Y}_{\mathcal{H}}\,\mathcal{P}_{\mathcal{H}})
      \:=\:\frac{\langle Z_{\star}14\rangle}{\langle\mathcal{Y}_{\mathcal{H}}Z_{\star}1\rangle^2\langle\mathcal{Y}_{\mathcal{H}}Z_{\star}4\rangle}\langle Z_{\star}\mathcal{Y}_{\mathcal{H}}d\mathcal{Y}_{\mathcal{H}}\rangle
      \:\sim\:\frac{1}{y_2^2\,y_1}\frac{dy_1\wedge\,dy_2}{\mbox{Vol$\{GL(1)\}$}},
\end{equation}
where $Z_{\star}\, :=\,(0,\,-(1-\alpha),\,\alpha)$ identifies the restriction on $\mathcal{H}$, and the symbol $\sim$ indicates that the form is defined up to an overall constant, {\it i.e.} there is an equivalence class of degree-$1$ covariant forms, and \eqref{eq:P3cf} is a representative. In this case the boundary components of the parent polytope are mapped to boundary components of the child polytope, and the covariant form \eqref{eq:P3cfR} in covariant pairing with the segment $(\mathbb{P}^1,\,\mathcal{P}_{\mathcal{H}})$ has poles only on the boundary components of the segment, {\it i.e.} the vertices $(1,\,4)$. Notice that the double pole in the facet of the child polytope is the manifestation of the fact that there are two boundaries of the parent polytope (the triangle) which are projected onto it. while there is a single pole in correspondence of the facet of the child polytope encoding just one facet of the parent polytope. Notice also that the covariant form \eqref{eq:P3cfR} does not depend on $\alpha$, which parametrises the intersection between the hyperplane $\mathcal{H}$ and the facet $(2,\,3)$ of $\mathcal{P}_3$, or, more precisely, such a dependence results in an overall coeffcient. Hence, the form structure is not changed and all the forms differing by the $\alpha$-dependent scale factor belongs to the same equivalence class.

Finally notice that the leading Laurent coefficients of the covariant form \eqref{eq:P3cfR} of the child polytope of each of the poles -- for the simple pole it is just its residue -- return the canonical form of a lower codimension boundary of the parent polytope, which for the double pole is simply the canonical form of the vertex $1$ of the parent triangle.

We can repeat the same analysis for a square $\mathcal{P}_4$ intersected by the hyperplane $\mathcal{H}\,=\,\{\mathcal{Y}\,\in\,\mathbb{P}^2\,|\,\langle\mathcal{Y}24\rangle\,=\,0\}$ in the figure above. For the sake of concreteness, let us take the vertices of the square to be $Z_1\,=\,(1,0,0)$, $Z_2\,=\,(0,1,0)$, $Z_3\,=\,(0,0,1)$, $Z_4\,=\,\alpha\,Z_1\,-\,(\alpha+\beta-1)Z_2\,+\,\beta\,Z_3$ (with $\alpha+\beta-1\,>\,0$). The the canonical form associated the square is given by
\begin{equation}\label{eq:P4cf}
 \begin{split}
    \omega(\mathcal{Y},\,\mathcal{P}_4)\:&=\:\frac{\langle\mathcal{Y}Z_{13}Z_{24}\rangle}{\langle\mathcal{Y}12\rangle\langle\mathcal{Y}23\rangle\langle\mathcal{Y}34\rangle\langle\mathcal{Y}41\rangle}
                                              \langle\mathcal{Y}d^2\mathcal{Y}\rangle\:=\\
                                         &=\:\frac{\beta(\alpha+\beta-1)y_1+\alpha\beta y_2 + \alpha(\alpha+\beta-1)y_3}{y_3 y_1[(\alpha+\beta-1)y_1+\alpha y_2][\beta y_2+(\alpha+\beta-1)y_3]}\,\bigwedge_{j=1}^3\,dy_j\frac{1}{\mbox{Vol}\{GL(1)\}}
 \end{split}
\end{equation}
where $Z_{ij}\,:=\,(i,i+1)\bigcap(j,j+1)\,:=\,Z_i\langle i+1,j,j+1\rangle-Z_{i+1}\langle i,j,j+1\rangle$ represents the intersection between the two facets $(i,i+1)$ and $(j,j+1)$. The line identified by the two points $Z_{13}$ and $Z_{24}$ provide a zero of the canonical form \eqref{eq:P4cf}. In such local coordinates, the line $\mathcal{H}$ is identified by the equation $\alpha\,y_3-\beta\,y_1\,=\,0$, and the restriction of $\mathcal{P}_4$ onto it is simply the segment with boundaries $Z_2$ and $Z_4$. The covariant restriction of the canonical form \eqref{eq:P4cf} is therefore
\begin{equation}
    \omega^{\mbox{\tiny $(1)$}}(\mathcal{Y}_{\mathcal{H}},\mathcal{P}_{\mathcal{H}})\:=\:
        \frac{\langle\mathcal{Y}_{\mathcal{H}}Z_{\star}Z_{\circ}\rangle}{\langle\mathcal{Y}_{\mathcal{H}}Z_{\star}2\rangle^2\langle\mathcal{Y}_{\mathcal{H}}Z_{\star}4\rangle^2}
        \langle Z_{\star}\mathcal{Y}_{\mathcal{H}}d\mathcal{Y}_{\mathcal{H}}\rangle\:\sim\:
        \frac{2(\alpha+\beta-1)y_1+\alpha y_2}{y_1^2[(\alpha+\beta-1)y_1+\alpha y_2]^2}\,\frac{dy_1\wedge\,dy_2}{\mbox{Vol}\{GL(1)\}},
\end{equation}
with $Z_{\star}\,:=\,(\beta,0,-\alpha)$ identifying the restriction on $\mathcal{H}$, and $Z_{\circ}\,:=\,(24)\bigcap (Z_{13}Z_{24})$ being the projection of the locus identifying the zero of the canonical form of the parent polytope onto $\mathcal{H}$. Here we can see how a covariant form of a polytope inherits the zero of the canonical form of the parent polytope, which is now a point outside the segment $(2,4)$ in $\mathbb{P}^1$. If we were to start from the segment $(\mathbb{P}^1,\,\mathcal{P})$ and associate to it a covariant form of degree-$1$ with both poles of second order, the homogeneiy condition would fix the numerator to be linear, but then, as we already saw in the previous section, no other defining property of a covariant form, would fix the coefficients up to an overall constant. We would need some extra information, but we have no reason to choice any special point outside the segment as a zero given that does not arise from any geometrical feature of the segment itself. 

%%%%%%%%%%%%%%%%%%%%%%%%%%%%%%%%%%%%%%%%

\subsubsection{Polygons with Outer Intersections}\label{subsubsec:POI}

Let us now consider the case of $n$-gons $(\mathbb{P}^2,\,\mathcal{P}_n)$ and an hyperline $\mathcal{H}$ intersecting their facets both inside and outside the convex hull $\mathcal{P}_n$ (see Figure \ref{fig:PolygH}). We begin with the simplest example of the triangle $\mathcal{P}_3$. For any hyper-line $\mathcal{H}$, its intersection with the facets of the triangle $\mathcal{P}_3$ occurs on three points, which we label $Z_4,\,Z_5,\,Z_6$ following the notation of Figure \ref{fig:PolygH}, with two of them inside the polytope $Z_4,\,Z_5$ and the third one $Z_6$ outside. Hence, the hyper-line $\mathcal{H}$ is identified by
\begin{equation}\label{eq:H45s}
    \mathcal{H}\:=\:\left\{\mathcal{Y}\,\in\,\mathbb{P}^2\,\big|\,\langle\mathcal{Y}45\rangle\,=\,0\right\},
\end{equation}
with 
\begin{equation}\label{eq:Z456}
    Z_4\:\sim\:\alpha Z_2 + (1-\alpha)Z_3,\qquad
    Z_5\:\sim\:\beta Z_3 + (1-\beta)Z_1,\qquad
    Z_6\:=\:(12)\cap(45)
\end{equation}
where $\alpha,\,\beta\,\in\,]0,1[$.

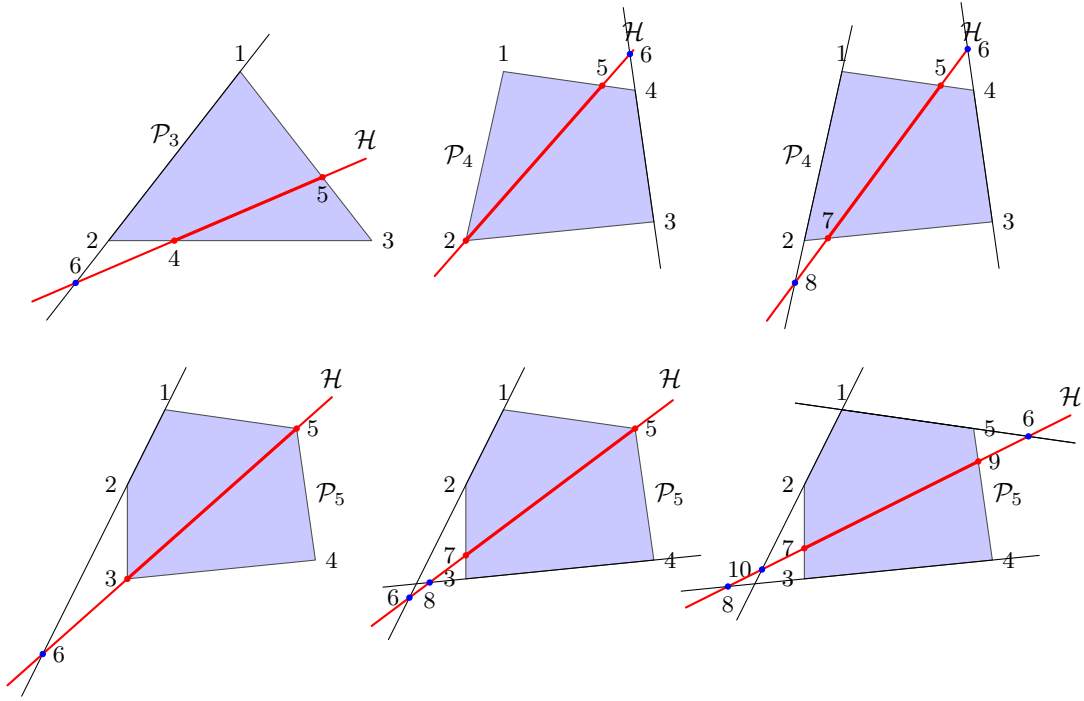
\begin{figure}[h]
 \centering
 \begin{tikzpicture}[line join = round, line cap = round, ball/.style = {circle, draw, align=center, anchor=north, inner sep=0}, 
                     axis/.style={very thick, ->, >=stealth'}, pile/.style={thick, ->, >=stealth', shorten <=2pt, shorten>=2pt}, every node/.style={color=black}]
 \begin{scope}
  \coordinate [label=above:{\footnotesize $\displaystyle 1$}] (A) at (0,0);
  \coordinate [label=left:{\footnotesize $\displaystyle 2$}] (B) at (-1.75,-2.25);
  \coordinate [label=right:{\footnotesize $\displaystyle 3$}] (C) at (+1.75,-2.25);
  
  \draw[-, fill=blue!30, opacity=.7] (A) -- (B) -- (C) -- cycle;  
  
  \coordinate [label=below:{\footnotesize $\displaystyle 4$}] (Z4) at ($(B)!0.25!(C)$);
  \coordinate [label=below:{\footnotesize $\displaystyle 5$}] (Z5) at ($(C)!0.375!(A)$);
  \coordinate [label=above:{\footnotesize $\displaystyle 6$}] (Z6) at (intersection of A--B and Z4--Z5);

  \coordinate [label=above:{\small $\displaystyle \mathcal{H}$}] (H) at ($(Z5)!-0.625cm!(Z4)$);
  \coordinate [label=left:{\small $\displaystyle \mathcal{P}_3$}] (P3) at ($(A)!0.375!(B)$);
  \draw[color=red, thick] ($(Z5)!-0.625cm!(Z6)$) -- ($(Z6)!-0.625cm!(Z5)$);
  \draw ($(A)!-0.625cm!(Z6)$) -- ($(Z6)!-0.625cm!(A)$);
  \draw[color=red, very thick] (Z4) -- (Z5);
  \draw[fill=red, color=red] (Z4) circle (1pt);
- \draw[fill=red, color=red] (Z5) circle (1pt);
  \draw[fill=red, color=blue] (Z6) circle (1pt);
 \end{scope}
 \begin{scope}[shift={(3.5,0)}, transform shape]
  \coordinate [label=above:{\footnotesize $\displaystyle 1$}] (A) at (0,0);
  \coordinate [label=left:{\footnotesize $\displaystyle 2$}] (B) at (-.5,-2.25);
  \coordinate [label=right:{\footnotesize $\displaystyle 3$}] (C) at (2,-2);
  \coordinate [label=right:{\footnotesize $\displaystyle 4$}] (D) at (1.75,-.25);

  \draw[-, fill=blue!30, opacity=.7] (A) -- (B) -- (C) -- (D) -- cycle;
  \coordinate [label=above:{\footnotesize $\displaystyle 5$}] (Z5) at ($(D)!0.25!(A)$);
  \coordinate [label=right:{\footnotesize $\displaystyle 6$}] (Z6) at (intersection of D--C and B--Z5);
  \coordinate [label=above:{\small $\displaystyle \mathcal{H}$}] (H) at ($(Z5)!-0.625cm!(B)$);
  \coordinate [label=left:{\small $\displaystyle \mathcal{P}_4$}] (P4) at ($(A)!0.5!(B)$);
  \draw[color=red, thick] ($(Z5)!-0.625cm!(B)$) -- ($(B)!-0.625cm!(Z5)$);
  \draw[color=red, very thick] (Z5) -- (B);
  \draw ($(C)!-0.625cm!(Z6)$) -- ($(Z6)!-0.625cm!(C)$);
  \draw[fill=red, color=red] (Z5) circle (1pt);
- \draw[fill=red, color=red] (B) circle (1pt);
 \draw[fill=red, color=blue] (Z6) circle (1pt);
 \end{scope}
 \begin{scope}[shift={(8,0)}, transform shape]
  \coordinate [label=above:{\footnotesize $\displaystyle 1$}] (A) at (0,0);
  \coordinate [label=left:{\footnotesize $\displaystyle 2$}] (B) at (-.5,-2.25);
  \coordinate [label=right:{\footnotesize $\displaystyle 3$}] (C) at (2,-2);
  \coordinate [label=right:{\footnotesize $\displaystyle 4$}] (D) at (1.75,-.25);

  \draw[-, fill=blue!30, opacity=.7] (A) -- (B) -- (C) -- (D) -- cycle;
  \coordinate [label=above:{\footnotesize $\displaystyle 5$}] (Z5) at ($(D)!0.25!(A)$);
  \coordinate [label=above:{\footnotesize $\displaystyle 7$}] (Z7) at ($(B)!0.125!(C)$);
  \coordinate [label=right:{\footnotesize $\displaystyle 6$}] (Z6) at (intersection of D--C and Z7--Z5);
  \coordinate [label=right:{\footnotesize $\displaystyle 8$}] (Z8) at (intersection of A--B and Z7--Z5);
  \coordinate [label=above:{\small $\displaystyle \mathcal{H}$}] (H) at ($(Z5)!-0.625cm!(B)$);
  \coordinate [label=left:{\small $\displaystyle \mathcal{P}_4$}] (P4) at ($(A)!0.5!(B)$);
  \draw[color=red, thick] ($(Z5)!-0.625cm!(Z8)$) -- ($(Z8)!-0.625cm!(Z5)$);
  \draw[color=red, very thick] (Z5) -- (Z7);
  \draw ($(C)!-0.625cm!(Z6)$) -- ($(Z6)!-0.625cm!(C)$);
  \draw ($(A)!-0.625cm!(Z8)$) -- ($(Z8)!-0.625cm!(A)$);
  \draw[fill=red, color=red] (Z5) circle (1pt);
- \draw[fill=red, color=red] (Z7) circle (1pt);
 \draw[fill=red, color=blue] (Z6) circle (1pt);
 \draw[fill=red, color=blue] (Z8) circle (1pt);
 \end{scope}
 \begin{scope}[shift={(-1,-4.5)}, transform shape]
  \coordinate [label=above:{\footnotesize $\displaystyle 1$}] (A) at (0,0);
  \coordinate [label=left:{\footnotesize $\displaystyle 3$}] (B) at (-.5,-2.25);
  \coordinate [label=left:{\footnotesize $\displaystyle 2$}] (E) at ($(A)!0.5!(B)+(-.25,+.125)$);
  \coordinate [label=right:{\footnotesize $\displaystyle 4$}] (C) at (2,-2);
  \coordinate [label=right:{\footnotesize $\displaystyle 5$}] (D) at (1.75,-.25);

  \draw[-, fill=blue!30, opacity=.7] (A) -- (E) -- (B) -- (C) -- (D) -- cycle;
  \coordinate [label=right:{\footnotesize $\displaystyle 6$}] (Z6) at (intersection of A--E and D--B);
  \coordinate [label=above:{\small $\displaystyle \mathcal{H}$}] (H) at ($(D)!-0.625cm!(B)$);
  \coordinate [label=right:{\small $\displaystyle \mathcal{P}_5$}] (P5) at ($(C)!0.5!(D)$);
  \draw[color=red, thick] ($(D)!-0.625cm!(Z6)$) -- ($(Z6)!-0.625cm!(D)$);
  \draw[color=red, very thick] (D) -- (B);
  \draw ($(A)!-0.625cm!(Z6)$) -- ($(Z6)!-0.625cm!(A)$);
  \draw[fill=red, color=red] (D) circle (1pt);
- \draw[fill=red, color=red] (B) circle (1pt);
 \draw[fill=red, color=blue] (Z6) circle (1pt);
 \end{scope}
 \begin{scope}[shift={(3.5,-4.5)}, transform shape]
  \coordinate [label=above:{\footnotesize $\displaystyle 1$}] (A) at (0,0);
  \coordinate [label=left:{\footnotesize $\displaystyle 3$}] (B) at (-.5,-2.25);
  \coordinate [label=left:{\footnotesize $\displaystyle 2$}] (E) at ($(A)!0.5!(B)+(-.25,+.125)$);
  \coordinate [label=right:{\footnotesize $\displaystyle 4$}] (C) at (2,-2);
  \coordinate [label=right:{\footnotesize $\displaystyle 5$}] (D) at (1.75,-.25);

  \draw[-, fill=blue!30, opacity=.7] (A) -- (E) -- (B) -- (C) -- (D) -- cycle;
  \coordinate [label=left:{\footnotesize $\displaystyle 7$}] (Z7) at ($(E)!.75!(B)$);
  \coordinate [label=left:{\footnotesize $\displaystyle 6$}] (Z6) at (intersection of A--E and D--Z7);
  \coordinate [label=below:{\footnotesize $\displaystyle 8$}] (Z8) at (intersection of B--C and D--Z7);
  \coordinate [label=above:{\small $\displaystyle \mathcal{H}$}] (H) at ($(D)!-0.625cm!(B)$);
  \coordinate [label=right:{\small $\displaystyle \mathcal{P}_5$}] (P5) at ($(C)!0.5!(D)$);
  \draw[color=red, thick] ($(D)!-0.625cm!(Z6)$) -- ($(Z6)!-0.625cm!(D)$);
  \draw[color=red, very thick] (D) -- (Z7);
  \draw ($(A)!-0.625cm!(Z6)$) -- ($(Z6)!-0.625cm!(A)$);
  \draw ($(C)!-0.625cm!(Z8)$) -- ($(Z8)!-0.625cm!(C)$);
  \draw[fill=red, color=red] (D) circle (1pt);
- \draw[fill=red, color=red] (Z7) circle (1pt);
 \draw[fill=red, color=blue] (Z6) circle (1pt);
 \draw[fill=red, color=blue] (Z8) circle (1pt);
 \end{scope}
 \begin{scope}[shift={(8,-4.5)}, transform shape]
  \coordinate [label=above:{\footnotesize $\displaystyle 1$}] (A) at (0,0);
  \coordinate [label=left:{\footnotesize $\displaystyle 3$}] (B) at (-.5,-2.25);
  \coordinate [label=left:{\footnotesize $\displaystyle 2$}] (E) at ($(A)!0.5!(B)+(-.25,+.125)$);
  \coordinate [label=right:{\footnotesize $\displaystyle 4$}] (C) at (2,-2);
  \coordinate [label=right:{\footnotesize $\displaystyle 5$}] (D) at (1.75,-.25);

  \draw[-, fill=blue!30, opacity=.7] (A) -- (E) -- (B) -- (C) -- (D) -- cycle;
  \coordinate [label=left:{\footnotesize $\displaystyle 7$}] (Z7) at ($(E)!.675!(B)$); 
  \coordinate [label=right:{\footnotesize $\displaystyle 9$}] (Z9) at ($(C)!.75!(D)$);
  \coordinate [label=left:{\footnotesize $\displaystyle 10$}] (Z10) at (intersection of A--E and Z9--Z7);
  \coordinate [label=below:{\footnotesize $\displaystyle 8$}] (Z8) at (intersection of B--C and Z9--Z7);
  \coordinate [label=above:{\footnotesize $\displaystyle 6$}] (Z6) at (intersection of A--D and Z9--Z7);
  \coordinate [label=above:{\small $\displaystyle \mathcal{H}$}] (H) at ($(Z6)!-0.625cm!(Z9)$);
  \coordinate [label=right:{\small $\displaystyle \mathcal{P}_5$}] (P5) at ($(C)!0.5!(D)$);
  \draw[color=red, thick] ($(Z6)!-0.625cm!(Z8)$) -- ($(Z8)!-0.625cm!(Z6)$);
  \draw[color=red, very thick] (Z9) -- (Z7);
  \draw ($(A)!-0.625cm!(Z6)$) -- ($(Z6)!-0.625cm!(A)$);
  \draw ($(C)!-0.625cm!(Z8)$) -- ($(Z8)!-0.625cm!(C)$);
  \draw ($(A)!-0.625cm!(Z6)$) -- ($(Z6)!-0.625cm!(A)$);
  \draw ($(A)!-0.625cm!(Z10)$) -- ($(Z10)!-0.75cm!(A)$);
  \draw[fill=red, color=red] (Z9) circle (1pt);
- \draw[fill=red, color=red] (Z7) circle (1pt);
 \draw[fill=red, color=blue] (Z6) circle (1pt);
 \draw[fill=red, color=blue] (Z8) circle (1pt);
 \draw[fill=red, color=blue] (Z10) circle (1pt);
 \end{scope}
 \end{tikzpicture}
 \caption{Examples of projective polytopes $(\mathbb{P}^2,\,\mathcal{P}_n)$ intersected by a hyperplane on its facets both inside and outside $\mathcal{P}_{n}$. For polygons $\mathcal{P}_{n\,\ge\,5}$ the latter is the only choice for an hyperplane $\mathcal{H}$ such that $\mathcal{H}\,\bigcap\,\mathcal{P}_{n\,\ge\,5}\,\neq\,\varnothing$}
 \label{fig:PolygH}
\end{figure}
\noindent
The canonical form of the triangle \eqref{eq:P3cf} reduces to a covariant form of degree $1$ in $\mathbb{P}^1$ with {\it three} simple poles:
\begin{equation}\label{eq:P3chHo}
    \omega(\mathcal{Y}_{\mathcal{H}},\,\mathcal{P}_{\mathcal{H}})\:=\:
     \frac{\langle Z_{\star}45\rangle\langle Z_{\star}\mathcal{Y}_{\mathcal{H}}d\mathcal{Y}_{\mathcal{H}}\rangle}{\langle\mathcal{Y}_{\mathcal{H}}Z_{\star}6\rangle\langle\mathcal{Y}_{\mathcal{H}}Z_{\star}4\rangle\langle\mathcal{Y}_{\mathcal{H}}Z_{\star}5\rangle},
\end{equation}
where $Z_{\star}$ is the orthogonal complement of $\mathcal{W}_{\mbox{\tiny $I$}}^{\mbox{\tiny $(\mathcal{H})$}}\,:=\,\varepsilon_{\mbox{\tiny $IJK$}}Z_4^{\mbox{\tiny $J$}}Z_5^{\mbox{\tiny $K$}}$. Such a covariant form is in covariant pairing with $(\mathbb{P}^1,\,\mathcal{P}_{\mathcal{H}})$, where $\mathcal{P}_{\mathcal{H}}\,:=\,\mathcal{P}_2^{\mbox{\tiny $(45)$}}$\footnote{The superscript $(ab)$ in $\mathcal{P}_2^{\mbox{\tiny $(ab)$}}$ labels the boundaries (vertices) of the segment.} and it can be seen as the sum of the covariant forms associated to the two segments $(\mathbb{P}^1,\,\mathcal{P}_2^{\mbox{\tiny $(65)$}})$ and $(\mathbb{P}^1,\,\mathcal{P}_2^{\mbox{\tiny $(46)$}})$ which provide a signed triangulation of $(\mathbb{P}^1,\,\mathcal{P}_2^{\mbox{\tiny $(45)$}})$ through $Z_6$ and, consequently, the covariant pairing $(\mathcal{P}_{\mathcal{H}},\,\omega^{\mbox{\tiny $(1)$}}(\mathcal{Y}_{\mathcal{H}},\,\mathcal{P}_{\mathcal{H}}))$ is covariant triangulated by the covariant forms $\omega^{\mbox{\tiny $(1)$}}(\mathcal{Y}_{\mathcal{H}},\,\mathcal{P}^{\mbox{\tiny $(65)$}})$ and $\omega^{\mbox{\tiny $(1)$}}(\mathcal{Y}_{\mathcal{H}},\,\mathcal{P}_2^{\mbox{\tiny $(46)$}})$:
\begin{equation}\label{eq:P3chHo2}
 \begin{split}
    \omega(\mathcal{Y}_{\mathcal{H}},\,\mathcal{P}_\mathcal{H})\:&=\:\omega(\mathcal{Y}_{\mathcal{H}},\,\mathcal{P}_{2}^{\mbox{\tiny $(65)$}})+
                                                                     \omega(\mathcal{Y}_{\mathcal{H}},\,\mathcal{P}_{2}^{\mbox{\tiny $(46)$}})\:=\\
    &=\:\frac{\langle Z_{\star}65\rangle\langle Z_{\star}\mathcal{Y}_{\mathcal{H}}d\mathcal{Y}_{\mathcal{H}}\rangle}{
         \langle\mathcal{Y}_{\mathcal{H}}6\rangle^2\langle\mathcal{Y}_{\mathcal{H}}5\rangle}+
        \frac{\langle Z_{\star}46\rangle\langle Z_{\star}\mathcal{Y}_{\mathcal{H}}d\mathcal{Y}_{\mathcal{H}}\rangle}{
          \langle\mathcal{Y}_{\mathcal{H}}6\rangle^2\langle\mathcal{Y}_{\mathcal{H}}4\rangle},
 \end {split}                                                            
\end{equation}
{\it i.e.} $\mathcal{P}_{\mathcal{H}}$ is triangulated via an external point, and the covariant form in covariant pairing with it is the sum of the covariant forms associated with the two segments in the signed triangulation via the external point which now have poles only in the boundaries of the associated polytope, with a double pole in the common boundary which get lowered to simple pole upon summation. Here we see the phenomenon of how the covariant forms which are elements of a covariant form triangulation shows a certain multiplicity of a pole We can understand \eqref{eq:P3chHo2} also from the perspective of the parent polytope. The parent polytope is a triangle with vertices $(123)$ which can be triangulated via the external point $Z_6$ into $(123)\:=\:(163)+(236)$, and its canonical form can be written as sum of the canonical forms of $(163)$ and $(236)$. 
\begin{wrapfigure}{l}{4.75cm}
 \centering
 \begin{tikzpicture}[line join = round, line cap = round, ball/.style = {circle, draw, align=center, anchor=north, inner sep=0}, 
                     axis/.style={very thick, ->, >=stealth'}, pile/.style={thick, ->, >=stealth', shorten <=2pt, shorten>=2pt}, every node/.style={color=black}]
 
 \begin{scope}
  \coordinate [label=above:{\footnotesize $\displaystyle 1$}] (A) at (0,0);
  \coordinate [label=left:{\footnotesize $\displaystyle 2$}] (B) at (-1.75,-2.25);
  \coordinate [label=right:{\footnotesize $\displaystyle 3$}] (C) at (+1.75,-2.25);
  
  \draw[-, fill=blue!30, opacity=.4] (A) -- (B) -- (C) -- cycle;  
  
  \coordinate [label=below:{\footnotesize $\displaystyle 4$}] (Z4) at ($(B)!0.25!(C)$);
  \coordinate [label=below:{\footnotesize $\displaystyle 5$}] (Z5) at ($(C)!0.375!(A)$);
  \coordinate [label=above:{\footnotesize $\displaystyle 6$}] (Z6) at (intersection of A--B and Z4--Z5);
  
  \draw[-, fill=yellow!30, opacity=.7] (A) -- (Z6) -- (C) -- cycle;

  \coordinate [label=above:{\small $\displaystyle \mathcal{H}$}] (H) at ($(Z5)!-0.625cm!(Z4)$);
  \coordinate [label=left:{\small $\displaystyle \mathcal{P}_3$}] (P3) at ($(A)!0.375!(B)$);
  \draw[color=red, thick] ($(Z5)!-0.625cm!(Z6)$) -- ($(Z6)!-0.625cm!(Z5)$);
  \draw ($(A)!-0.625cm!(Z6)$) -- ($(Z6)!-0.625cm!(A)$);
  
  \draw[color=red, very thick] (Z4) -- (Z5);
  \draw[fill=red, color=red] (Z4) circle (1pt);
- \draw[fill=red, color=red] (Z5) circle (1pt);
  \draw[fill=red, color=blue] (Z6) circle (1pt);
 \end{scope}
 \begin{scope}[shift={(0,-3)}, transform shape]
  \coordinate [label=above:{\footnotesize $\displaystyle 1$}] (A) at (0,0);
  \coordinate [label=left:{\footnotesize $\displaystyle 2$}] (B) at (-1.75,-2.25);
  \coordinate [label=right:{\footnotesize $\displaystyle 3$}] (C) at (+1.75,-2.25);
  
  \draw[-, fill=blue!30, opacity=.4] (A) -- (B) -- (C) -- cycle;  
  
  \coordinate [label=below:{\footnotesize $\displaystyle 4$}] (Z4) at ($(B)!0.25!(C)$);
  \coordinate [label=below:{\footnotesize $\displaystyle 5$}] (Z5) at ($(C)!0.375!(A)$);
  \coordinate [label=above:{\footnotesize $\displaystyle 6$}] (Z6) at (intersection of A--B and Z4--Z5);
  
  \draw[-, fill=red!30, opacity=.7] (B) -- (Z6) -- (C) -- cycle;

  \coordinate [label=above:{\small $\displaystyle \mathcal{H}$}] (H) at ($(Z5)!-0.625cm!(Z4)$);
  \coordinate [label=left:{\small $\displaystyle \mathcal{P}_3$}] (P3) at ($(A)!0.375!(B)$);
  \draw[color=red, thick] ($(Z5)!-0.625cm!(Z6)$) -- ($(Z6)!-0.625cm!(Z5)$);
  \draw ($(A)!-0.625cm!(Z6)$) -- ($(Z6)!-0.625cm!(A)$);
  
  \draw[color=red, very thick] (Z4) -- (Z5);
  \draw[fill=red, color=red] (Z4) circle (1pt);
- \draw[fill=red, color=red] (Z5) circle (1pt);
  \draw[fill=red, color=blue] (Z6) circle (1pt);
 \end{scope}
 \end{tikzpicture}
\end{wrapfigure}

The hyperline $\mathcal{H}$ defined as \eqref{eq:H45s} intersects both such triangles inside only, and hence the covariant restriction of their canonical form onto it is as the example discussed in the previous subsection: upon the restriction, these triangles are mapped into segments and the related covariant forms show a double pole at the boundary of the segment where two facets of the parent polytope intersect (in other words, two codimension-$1$ boundary of the parent polytope reduce to the same codimension-$1$ boundary of the child polytope). The covariant forms in the second line of \eqref{eq:P3chHo2} are exactly the restriction of the canonical forms of the triangles $(163)$ and $(236)$, which are mapped to the segments $(65)$ and $(46)$ respectively. The fact that such segments share a boundary manifests itself in the lower multiplicity of the related pole.
\\

The same happens for any other polygon: its restriction on a hyper-line is still a segment which can be decomposed into a union (triangulations) of segments each of which is the restriction of the terms of the triangulation of the parent polytope. From the perspective of the covariant form in $\mathbb{P}^1$ obtained as covariant restriction of the canonical form of the parent polytope, each covariant form obtained from a single term in the triangulation of the parent polytope has simple poles at those boundaries in $\mathbb{P}^1$ identified by the intersection of $\mathcal{H}$ with a single facet of $\mathcal{P}_n$ and a double pole if the boundary in $\mathbb{P}^1$ is identified by the intersection between $\mathcal{H}$ and two of the facets of $\mathcal{P}_n$. If the double poles are related to facets which are common to two segments, then it will become a single pole upon summation of all the covariant forms, while if it is a simple pole, it will become spurious. Let us briefly discuss it for some of the cases depicted in Figure \ref{fig:PolygH}.

Let us consider a square and a hyperline which intersects its facets outside just in one point. The hyperplane $\mathcal{H}$ intersects the facets $(12)$ and $(23)$ in the same point $Z_2$, while intersects the facet $(41)$ in $Z_5\,\sim\,\alpha Z_4+(1-\alpha)Z_1$ which lies between the vertices $Z_4$ and $Z_1$ ({\it i.e.} $\alpha\,\in\,]0,\,1[$), and the facet $(34)$ outside, in $Z_6\,=\,(25)\cap(34)$. Hence, we can already expect that, upon the covariant restriction on $\mathcal{H}$, the canonical form of the square gets mapped into a covariant form of degree-$1$ with a double and two single poles. 

\begin{wrapfigure}{l}{4.75cm}
 \centering
 \begin{tikzpicture}[line join = round, line cap = round, ball/.style = {circle, draw, align=center, anchor=north, inner sep=0}, 
                     axis/.style={very thick, ->, >=stealth'}, pile/.style={thick, ->, >=stealth', shorten <=2pt, shorten>=2pt}, every node/.style={color=black}]
  \begin{scope}
   \coordinate [label=above:{\footnotesize $\displaystyle 1$}] (A) at (0,0);
   \coordinate [label=left:{\footnotesize $\displaystyle 2$}] (B) at (-.5,-2.25);
   \coordinate [label=right:{\footnotesize $\displaystyle 3$}] (C) at (2,-2);
   \coordinate [label=right:{\footnotesize $\displaystyle 4$}] (D) at (1.75,-.25); 
 
   \draw[-, fill=blue!30, opacity=.4] (A) -- (B) -- (C) -- (D) -- cycle;
   \coordinate [label=above:{\footnotesize $\displaystyle 5$}] (Z5) at ($(D)!0.25!(A)$);
   \coordinate [label=right:{\footnotesize $\displaystyle 6$}] (Z6) at (intersection of D--C and B--Z5);
   \draw[-, fill=yellow!30, opacity=.7] (A) -- (B) -- (C) -- (Z6) -- cycle;
   \coordinate [label=above:{\small $\displaystyle \mathcal{H}$}] (H) at ($(Z5)!-0.625cm!(B)$);
   \coordinate [label=left:{\small $\displaystyle \mathcal{P}_4$}] (P4) at ($(A)!0.5!(B)$);
   \draw[color=red, thick] ($(Z5)!-0.625cm!(B)$) -- ($(B)!-0.625cm!(Z5)$);
   \draw[color=red, very thick] (Z5) -- (B);
   \draw ($(C)!-0.625cm!(Z6)$) -- ($(Z6)!-0.625cm!(C)$);
   \draw[fill=red, color=red] (Z5) circle (1pt);
 - \draw[fill=red, color=red] (B) circle (1pt);
  \draw[fill=red, color=blue] (Z6) circle (1pt);
 \end{scope}
 \begin{scope}[shift={(0,-3.5)}, transform shape]
   \coordinate [label=above:{\footnotesize $\displaystyle 1$}] (A) at (0,0);
   \coordinate [label=left:{\footnotesize $\displaystyle 2$}] (B) at (-.5,-2.25);
   \coordinate [label=right:{\footnotesize $\displaystyle 3$}] (C) at (2,-2);
   \coordinate [label=right:{\footnotesize $\displaystyle 4$}] (D) at (1.75,-.25); 
 
   \draw[-, fill=blue!30, opacity=.4] (A) -- (B) -- (C) -- (D) -- cycle;
   \coordinate [label=above:{\footnotesize $\displaystyle 5$}] (Z5) at ($(D)!0.25!(A)$);
   \coordinate [label=right:{\footnotesize $\displaystyle 6$}] (Z6) at (intersection of D--C and B--Z5);
   \draw[-, fill=red!30, opacity=.7] (A) -- (D) -- (Z6) -- cycle;
   \coordinate [label=above:{\small $\displaystyle \mathcal{H}$}] (H) at ($(Z5)!-0.625cm!(B)$);
   \coordinate [label=left:{\small $\displaystyle \mathcal{P}_4$}] (P4) at ($(A)!0.5!(B)$);
   \draw[color=red, thick] ($(Z5)!-0.625cm!(B)$) -- ($(B)!-0.625cm!(Z5)$);
   \draw[color=red, very thick] (Z5) -- (B);
   \draw ($(C)!-0.625cm!(Z6)$) -- ($(Z6)!-0.625cm!(C)$);
   \draw[fill=red, color=red] (Z5) circle (1pt);
 - \draw[fill=red, color=red] (B) circle (1pt);
  \draw[fill=red, color=blue] (Z6) circle (1pt);
 \end{scope}
 \end{tikzpicture}
\end{wrapfigure}

 Such a covariant form is associated to the {\it tangent union} of the segments $(\mathbb{P}^1,\,\mathcal{P}_2^{\mbox{\tiny $(26)$}})$ and $(\mathbb{P}^1,\,\mathcal{P}_2^{\mbox{\tiny $(65)$}})$. Let us take again the perspective of the parent polytope. We can see it as a triangulation $(1234)\,=\,(1236)+(641)$ through the external point $Z_6\,=\,(25)\cap(34)$, which decompose it into a square and a triangle. The hyperplane $\mathcal{H}$ intersects both terms of this triangulation just on their facets, so that the covariant restriction of the canonical form of the square $(1235)$ generated a covariant form of degree-$1$ associated to the segment $(\mathbb{P}^1,\,\mathcal{P}_2^{\mbox{\tiny $(26)$}})$ with two double poles (both boundary components of the segment arise from the intersection of two facets of the parent polytope on the same point on $\mathcal{H}$), while the covariant restriction of the canonical form of the triangle $(641)$ generates a covariant form of degree-$1$ associated to the segment $(\mathbb{P}^1,\,\mathcal{P}_2^{\mbox{\tiny $(65)$}})$ with a double and a single pole. The common boundary component between $(\mathbb{P}^1,\,\mathcal{P}_2^{\mbox{\tiny $(26)$}})$ and $(\mathbb{P}^1,\,\mathcal{P}_2^{\mbox{\tiny $(65)$}})$ is identified by a double pole in both the covariant form and, because of the orientation inherited from the triangulation of the parent polytope, it becomes a single pole upon their summation. Explicitly
\begin{equation}\label{eq:P225cf}
 \begin{split}
    \omega^{\mbox{\tiny $(1)$}}(\mathcal{Y}_{\mathcal{H}},\,\mathcal{P}_{\mathcal{H}})\:&=\:
     \frac{\langle\mathcal{Y}_{\mathcal{H}}Z_{\star}Z_{\circ}\rangle\langle Z_{\star}\mathcal{Y}_{\mathcal{H}}d\mathcal{Y}_{\mathcal{H}}\rangle}{
      \langle\mathcal{Y}_{\mathcal{H}}Z_{\star}2\rangle^2\langle\mathcal{Y}_{\mathcal{H}}Z_{\star}5\rangle\langle\mathcal{Y}_{\mathcal{H}}Z_{\star}6\rangle}\:=\\
     &=\:\frac{\langle\mathcal{Y}_{\mathcal{H}}Z_{\star}\tilde{Z}_{\circ}\rangle\langle Z_{\star}\mathcal{Y}_{\mathcal{H}}d\mathcal{Y}_{\mathcal{H}}\rangle}{
          \langle\mathcal{Y}_{\mathcal{H}}Z_{\star}2\rangle^2\langle\mathcal{Y}_{\mathcal{H}}Z_{\star}6\rangle^2}+
         \frac{\langle Z_{\star}65\rangle\langle Z_{\star}\mathcal{Y}_{\mathcal{H}}d\mathcal{Y}_{\mathcal{H}}\rangle}{\langle\mathcal{Y}_{\mathcal{H}}Z_{\star}6\rangle^2\langle\mathcal{Y}_{\mathcal{H}}Z_{\star}5\rangle}\:=\\
     &=\:\omega^{\mbox{\tiny $(1)$}}(\mathcal{Y}_{\mathcal{H}},\,\mathcal{P}_2^{\mbox{\tiny $(26)$}}) + \omega^{\mbox{\tiny $(1)$}}(\mathcal{Y}_{\mathcal{H}},\,\mathcal{P}_2^{\mbox{\tiny $(65)$}}),
 \end{split}
\end{equation}
where $\mathcal{P}_{\mathcal{H}}\:=\:\mathcal{P}_2^{\mbox{\tiny $(26)$}}\,\cup\,\mathcal{P}_2^{\mbox{\tiny $(65)$}}$, $Z_{\circ}\,=\,(Z_{13}Z_{24})\cap\mathcal{H}$, and $\tilde{Z}_{\circ}\,=\,(Z_{13}Z_{25})\cap\mathcal{H}$.
\\

%As a final illustrative example in $\mathbb{P}^2$, let us consider a pentagon intersected in his facets by $\mathcal{H}$ in three external points.

%The hyperline $\mathcal{H}$ intersects the convex hull $\mathcal{P}_5$ in its facets inside on $Z_7\,\sim\,\alpha Z_2-(1-\alpha)Z_3$ and $Z_9\,\sim\,\beta Z_4+(1-\beta)Z_5$, with $\alpha,\,\beta\,\in\,]0,\,1[$, while it intersects the facets outside of $\mathcal{P}_5$ in

\begin{wrapfigure}{l}{5.75cm}
 \centering
 \begin{tikzpicture}[line join = round, line cap = round, ball/.style = {circle, draw, align=center, anchor=north, inner sep=0}, 
                     axis/.style={very thick, ->, >=stealth'}, pile/.style={thick, ->, >=stealth', shorten <=2pt, shorten>=2pt}, every node/.style={color=black}, scale=.85]
   \begin{scope}
    \coordinate [label=above:{\footnotesize $\displaystyle 1$}] (A) at (0,0);
    \coordinate [label=left:{\footnotesize $\displaystyle 3$}] (B) at (-.5,-2.25);
    \coordinate [label=left:{\footnotesize $\displaystyle 2$}] (E) at ($(A)!0.5!(B)+(-.25,+.125)$);
    \coordinate [label=right:{\footnotesize $\displaystyle 4$}] (C) at (2,-2);
    \coordinate [label=right:{\footnotesize $\displaystyle 5$}] (D) at (1.75,-.25);

    \draw[-, fill=blue!30, opacity=.6] (A) -- (E) -- (B) -- (C) -- (D) -- cycle; 
    \coordinate [label=left:{\footnotesize $\displaystyle 7$}] (Z7) at ($(E)!.675!(B)$); 
    \coordinate [label=right:{\footnotesize $\displaystyle 9$}] (Z9) at ($(C)!.75!(D)$);
    \coordinate [label=left:{\footnotesize $\displaystyle 10$}] (Z10) at (intersection of A--E and Z9--Z7);
    \coordinate [label=below:{\footnotesize $\displaystyle 8$}] (Z8) at (intersection of B--C and Z9--Z7);
    \coordinate [label=above:{\footnotesize $\displaystyle 6$}] (Z6) at (intersection of A--D and Z9--Z7);
    \draw[-, fill=yellow!30, opacity=.7] (A) -- (Z8) -- (C) -- (Z6) -- cycle;
    \coordinate [label=above:{\small $\displaystyle \mathcal{H}$}] (H) at ($(Z6)!-0.625cm!(Z9)$);
    \coordinate [label=right:{\small $\displaystyle \mathcal{P}_5$}] (P5) at ($(C)!0.5!(D)$);
    \draw[color=red, thick] ($(Z6)!-0.625cm!(Z8)$) -- ($(Z8)!-0.625cm!(Z6)$);
    \draw[color=red, very thick] (Z9) -- (Z7);
    \draw ($(A)!-0.625cm!(Z6)$) -- ($(Z6)!-0.625cm!(A)$);
    \draw ($(C)!-0.625cm!(Z8)$) -- ($(Z8)!-0.625cm!(C)$);
    \draw ($(A)!-0.625cm!(Z6)$) -- ($(Z6)!-0.625cm!(A)$);
    \draw ($(A)!-0.625cm!(Z10)$) -- ($(Z10)!-0.75cm!(A)$);
    \draw[fill=red, color=red] (Z9) circle (1pt);
-   \draw[fill=red, color=red] (Z7) circle (1pt);
    \draw[fill=red, color=blue] (Z6) circle (1pt);
    \draw[fill=red, color=blue] (Z8) circle (1pt);
    \draw[fill=red, color=blue] (Z10) circle (1pt);
  \end{scope}
  \begin{scope}[shift={(0,-4.5)}, transform shape]
    \coordinate [label=above:{\footnotesize $\displaystyle 1$}] (A) at (0,0);
    \coordinate [label=left:{\footnotesize $\displaystyle 3$}] (B) at (-.5,-2.25);
    \coordinate [label=left:{\footnotesize $\displaystyle 2$}] (E) at ($(A)!0.5!(B)+(-.25,+.125)$);
    \coordinate [label=right:{\footnotesize $\displaystyle 4$}] (C) at (2,-2);
    \coordinate [label=right:{\footnotesize $\displaystyle 5$}] (D) at (1.75,-.25);

    \draw[-, fill=blue!30, opacity=.6] (A) -- (E) -- (B) -- (C) -- (D) -- cycle; 
    \coordinate [label=left:{\footnotesize $\displaystyle 7$}] (Z7) at ($(E)!.675!(B)$); 
    \coordinate [label=right:{\footnotesize $\displaystyle 9$}] (Z9) at ($(C)!.75!(D)$);
    \coordinate [label=left:{\footnotesize $\displaystyle 10$}] (Z10) at (intersection of A--E and Z9--Z7);
    \coordinate [label=below:{\footnotesize $\displaystyle 8$}] (Z8) at (intersection of B--C and Z9--Z7);
    \coordinate [label=above:{\footnotesize $\displaystyle 6$}] (Z6) at (intersection of A--D and Z9--Z7);
    \draw[-, fill=red!30, opacity=.7] (D) -- (C) -- (Z6) -- cycle;
    \draw[-, fill=green!30, opacity=.7] (E) -- (B) -- (Z8) -- cycle;
    \draw[-, fill=orange!30, opacity=.7] (A) -- (E) -- (Z8) -- cycle;
    \coordinate [label=above:{\small $\displaystyle \mathcal{H}$}] (H) at ($(Z6)!-0.625cm!(Z9)$);
    \coordinate [label=right:{\small $\displaystyle \mathcal{P}_5$}] (P5) at ($(C)!0.5!(D)$);
    \draw[color=red, thick] ($(Z6)!-0.625cm!(Z8)$) -- ($(Z8)!-0.625cm!(Z6)$);
    \draw[color=red, very thick] (Z9) -- (Z7);
    \draw ($(A)!-0.625cm!(Z6)$) -- ($(Z6)!-0.625cm!(A)$);
    \draw ($(C)!-0.625cm!(Z8)$) -- ($(Z8)!-0.625cm!(C)$);
    \draw ($(A)!-0.625cm!(Z6)$) -- ($(Z6)!-0.625cm!(A)$);
    \draw ($(A)!-0.625cm!(Z10)$) -- ($(Z10)!-0.75cm!(A)$);
    \draw[fill=red, color=red] (Z9) circle (1pt);
-   \draw[fill=red, color=red] (Z7) circle (1pt);
    \draw[fill=red, color=blue] (Z6) circle (1pt);
    \draw[fill=red, color=blue] (Z8) circle (1pt);
    \draw[fill=red, color=blue] (Z10) circle (1pt);
  \end{scope}
 \end{tikzpicture}
\end{wrapfigure}

As a final illustrative example in $\mathbb{P}^2$, let us consider a pentagon intersected in his facets by $\mathcal{H}$ in three external points.

The hyperline $\mathcal{H}$ intersects the convex hull $\mathcal{P}_5$ in its facets inside on $Z_7\,\sim\,\alpha Z_2-(1-\alpha)Z_3$ and $Z_9\,\sim\,\beta Z_4+(1-\beta)Z_5$, with $\alpha,\,\beta\,\in\,]0,\,1[$, while it intersects the facets outside of $\mathcal{P}_5$ in
$Z_6\,=\,(51)\cap\mathcal{H}$, $Z_8\,=\,(34)\cap\mathcal{H}$, $Z_{10}\,=\,(12)\cap\mathcal{H}$: all the facets of $\mathcal{P}_5$ are projected on different points upon the restriction on the hyperline $\mathcal{H}$ and, consequently, they will be reflected on a single pole each in the covariant form of degree-$1$ $\omega^{\mbox{\tiny $(k)$}}(\mathcal{Y}_{\mathcal{H}},\,\mathcal{P}_{\mathcal{H}})$ obtained via \eqref{eq:CPcf} from the canonical form of $(\mathbb{P}^2,\,\mathcal{P}_5)$. Such a covariant form can be understood as a sum of the covariant forms associated to the segments $(\mathbb{P}^1,\,\mathcal{P}^{\mbox{\tiny $(86)$}})$, $(\mathbb{P}^1,\,\mathcal{P}^{\mbox{\tiny $(78)$}})$, $(\mathbb{P}^1,\,\mathcal{P}^{\mbox{\tiny $(8,10)$}})$, $(\mathbb{P}^1,\,\mathcal{P}^{\mbox{\tiny $(69)$}})$. 
\vspace{.1cm}

\noindent
From the perspective of the parent polytope, this sum comes from the (signed) triangulation of $\mathcal{P}_5$ as
\begin{equation}\label{eq:P5tr}
    \mathcal{P}_5\:=\:(1846)\,\cup\,(238)\,\cup\,(821)\,\cup\,(645).
\end{equation}
Upon the covariant restriction \eqref{eq:CPcf} of the canonical form of each of the terms in \eqref{eq:P5tr} on $\mathcal{H}$, one obtains the covariant form for the segments $(\mathbb{P}^1,\,\mathcal{P}^{\mbox{\tiny $(86)$}})$, $(\mathbb{P}^1,\,\mathcal{P}^{\mbox{\tiny $(78)$}})$, $(\mathbb{P}^1,\,\mathcal{P}^{\mbox{\tiny $(8,10)$}})$, $(\mathbb{P}^1,\,\mathcal{P}^{\mbox{\tiny $(69)$}})$ respectively, which are characterised by having a double pole in the two boundaries identified by the vertices $Z_8$ and $Z_6$: these are the only two certices on which two facets of the parent polytope are restricted. Upon the summation such double poles are lowered to simple poles.

%%%%%%%%%%%%%%%%%%%%%%%%%%%%%%%%%%%%%%%%

\subsubsection{Polyhedra with Internal Intersections}\label{subsubsec:PhII}

Let us now discuss some example in $\mathbb{P}^3$: the general relation between high order poles in the covariant forms of the child polytope and the number of facets intersecting each other in the lower dimensional hypersurface does not change, but it is instructive to see the restriction at work for examples other than $\mathbb{P}^2$.

The simplest example is given by a tetrahedron $(\mathbb{P}^3,\,\mathcal{P})$ and a hyperplane such that $\mathcal{P}\cap\mathcal{H}$ is a triangle with two vertices being vertices of $\mathcal{P}$ and the third one lying on one of its edges

\begin{wrapfigure}{l}{5.5cm}
  \centering
  \begin{tikzpicture}[line join = round, line cap = round, ball/.style = {circle, draw, align=center, anchor=north, inner sep=0}, 
                     axis/.style={very thick, ->, >=stealth'}, pile/.style={thick, ->, >=stealth', shorten <=2pt, shorten>=2pt}, every node/.style={color=black}]
  \begin{scope}
   \pgfmathsetmacro{\factor}{1/sqrt(2)};  
   \coordinate[label=right:{$\mathbf{4}$}] (B2) at (1.5,-3,-1.5*\factor);
   \coordinate[label=left:{$\mathbf{1}$}] (A1) at (-1.5,-3,-1.5*\factor);
   \coordinate[label=right:{$\mathbf{3}$}] (B1) at (1.5,-3.75,1.5*\factor);
   \coordinate[label=above:{$\mathbf{2}$}] (C1) at (0.75,-.65,.75*\factor);
   \coordinate[label=left:{$\mathbf{5}$}] (Z5) at ($(A1)!.375!(B1)$);
   \coordinate (Z6) at ($(Z5)+(0,2.15,0)$);
   \coordinate (t1) at ($(Z5)!-.625cm!(B2)$);
   \coordinate (t2) at ($(B2)!-.625cm!(Z5)$);
   \coordinate[label=right:{$\mathcal{H}$}] (t3) at ($(C1)!-1.75cm!(Z6)$);
   \coordinate (t4) at ($(Z6)!-.75cm!(C1)$);
   
   \draw[draw=none,fill=green!80,opacity=.3]  (B1) -- (B2) -- (A1) -- cycle;
   \draw[draw=none,fill=blue!60, opacity=.45] (C1) -- (B2) -- (A1) -- cycle;
   \draw[draw=none,fill=blue!70, opacity=.3] (C1) -- (A1) -- (B1) -- cycle;
   \draw[draw=none,fill=blue!70, opacity=.5] (C1) -- (B2) -- (B1) -- cycle;  

 % \draw[color=red] ($(Z5)!-0.625cm!(B2)$) -- ($(B2)!-0.625cm!(Z5)$);
 % \draw[color=red] ($(Z6)!-0.75cm!(C1)$) -- ($(C1)!-1.75cm!(Z6)$);
  
  \draw[draw=none, fill=red!40, opacity=.2] (t1) -- (t2) -- (t3) -- (t4) -- cycle;
  \draw[draw=none,fill=red!40, opacity=.5] (C1) -- (B2) -- (Z5) -- cycle;

  \end{scope}
 \end{tikzpicture}
\end{wrapfigure}

Taking the labeling of the vertices of $\mathcal{P}$ as in the picture here on the left, the hyperplane $\mathcal{H}$ is identified by 
\begin{equation}\label{eq:Htetrd}
    \mathcal{H}\:=\:\left\{\mathcal{Y}\,\in\,\mathbb{P}^3\,|\,\langle\mathcal{Y}245\rangle\,=\,0\right\},
\end{equation}
with $Z_5\,\sim\,\alpha Z_1+(1-\alpha)Z_3$ ($\alpha\,\in\,]0,1[$), and the child polytope is $(\mathbb{P}^2,\,\mathcal{H}\cap\mathcal{P})$, {\it i.e.} the triangle $\mathcal{P}_{\mathcal{H}}$ identified by the vertices $245$. The covariant restriction of the canonical form of $(\mathbb{P}^3,\,\mathcal{P})$ onto $\mathcal{H}$ is a covariant form of degree-$1$ with a double pole and two simple poles: the facets $(124)$ and $(234)$ intersect the hyperplane $\mathcal{H}$ in the same segment $(24)$ which is a codimension-$1$ boundary of the child polytope, while the other two facets of the parent polytope intersect $\mathcal{H}$ alone. Hence

{\small{
\begin{equation}\label{eq:CCFt}
    \omega(\mathcal{Y},\,\mathcal{P})\:=\:\frac{\langle1234\rangle^3\langle\mathcal{Y}d^3\mathcal{Y}\rangle}{\langle\mathcal{Y}123\rangle\langle\mathcal{Y}124\rangle\langle\mathcal{Y}234\rangle\langle\mathcal{Y}134\rangle}
    \quad\longrightarrow\quad
    \omega^{\mbox{\tiny $(1)$}}(\mathcal{Y}_{\mathcal{H}},\,\mathcal{P}_{\mathcal{H}})\:\sim\:\frac{\langle Z_{\star}234\rangle^2\langle Z_{\star}\mathcal{Y}_{\mathcal{H}}d^2\mathcal{Y}_{\mathcal{H}}\rangle}{
     \langle\mathcal{Y}_{\mathcal{H}}Z_{\star}23\rangle\langle\mathcal{Y}_{\mathcal{H}}Z_{\star}34\rangle\langle\mathcal{Y}_{\mathcal{H}}Z_{\star}42\rangle^2}
\end{equation}
}}
with $Z_{\star}$ indicating the orthogonal complement of $\mathcal{H}$.
\\

Let us now look at a slightly different example, considering $(\mathbb{P}^3,\,\mathcal{P})$ as a square bipyramid with an hyperplane $\mathcal{H}$ intersecting the convex hull $\mathcal{P}$ along the common basis of the two pyramids.

\begin{wrapfigure}{l}{5.5cm}
 \centering
 \begin{tikzpicture}[line join = round, line cap = round, ball/.style = {circle, draw, align=center, anchor=north, inner sep=0}, 
                     axis/.style={very thick, ->, >=stealth'}, pile/.style={thick, ->, >=stealth', shorten <=2pt, shorten>=2pt}, every node/.style={color=black}]
  \begin{scope}[scale=.8, transform shape]
   \pgfmathsetmacro{\factor}{1/sqrt(2)};  
   \coordinate[label=right:{$\mathbf{4}$}] (B2) at (1.5,-3,-1.5*\factor);
   \coordinate[label=left:{$\mathbf{5}$}]  (A1) at (-1.5,-3,-1.5*\factor);
   \coordinate[label=right:{$\mathbf{3}$}] (B1) at (1.5,-3.75,1.5*\factor);
   \coordinate[label=left:{$\mathbf{2}$}]  (A2) at (-1.5,-3.75,1.5*\factor);  
   \coordinate[label=above:{$\mathbf{1}$}]  (C1) at (0.75,-.65,.75*\factor);
   \coordinate[label=below:{$\mathbf{6}$}]  (C2) at (0.4,-6.05,.75*\factor);
   
   \coordinate (t1) at ($(A2)!-.75cm!(B1)$);
   \coordinate (t2) at ($(B1)!-.75cm!(A2)$);
   \coordinate (t3) at ($(B2)!-.75cm!(A1)$);
   \coordinate[label=above:{$\mathcal{H}$}] (t4) at ($(A1)!-.75cm!(B2)$);
   
   \draw[draw=none, fill=red!60, opacity=.2] (t1) -- (t2) -- (t3) -- (t4) -- cycle;
   \draw[draw=none, fill=red!60, opacity=.9] (A2) -- (B1) -- (B2) -- (A1) -- cycle;

   \draw[draw=none,fill=blue!30, opacity=.7] (A1) -- (B2) -- (C1) -- cycle;
   \draw[draw=none,thick,fill=blue!20, opacity=.7] (A1) -- (A2) -- (C1) -- cycle;
   \draw[draw=none,thick,fill=blue!20, opacity=.7] (B1) -- (B2) -- (C1) -- cycle;
   \draw[draw=none,thick,fill=blue!35, opacity=.7] (A2) -- (B1) -- (C1) -- cycle;

   \draw[draw=none,fill=blue!30, opacity=.3] (A1) -- (B2) -- (C2) -- cycle;
   \draw[draw=none, thick, fill=blue!50, opacity=.5] (B2) -- (B1) -- (C2) -- cycle;
   \draw[draw=none,fill=blue!40, opacity=.3] (A1) -- (A2) -- (C2) -- cycle;
   \draw[draw=none, thick, fill=blue!45, opacity=.5] (A2) -- (B1) -- (C2) -- cycle;
   
 %  \draw[draw=none, fill=red!40, opacity=.2] (t1) -- (t2) -- (t3) -- (t4) -- cycle;
 %  \draw[draw=none, fill=red!40, opacity=.5] (A2) -- (B1) -- (B2) -- (A1) -- cycle;
  \end{scope}
 \end{tikzpicture}
\end{wrapfigure}

%Let us now look at a slightly different example, considering $(\mathbb{P}^3,\,\mathcal{P})$ as a square bipyramid with an hyperplane $\mathcal{H}$ intersecting the convex hull $\mathcal{P}$ along the common basis of the two pyramids.

The child polytope obtained as a restriction of $\mathcal{P}$ onto $\mathcal{H}$ is the square in $\mathbb{P}^2$ identified by the vertices $2345$. In this case, the facets of the bipyramid intersect $\mathcal{H}$ in pairs in the same segment: the codimension-$1$ boundaries of the parent polytope are mapped in pairs to the same codimension-$1$ boundary of the child polytope. Consequently, the canonical form of the parent polytope is mapped to a covariant form of degree-$1$ of the child polytope with double poles only, which inherits the structure of its zeros as well

\begin{equation}\label{eq:CFdpsq}
    \omega^{\mbox{\tiny $(1)$}}(\mathcal{Y}_{\mathcal{H}},\,\mathcal{P}_{\mathcal{H}})\:\sim\:\frac{\langle\mathcal{Y}Z_{\star}Z_{24}Z_{35}\rangle\langle Z_{\star}\mathcal{Y}_{\mathcal{H}}d^2\mathcal{Y}_{\mathcal{H}}\rangle}{
        \langle\mathcal{Y}_{\mathcal{H}}Z_{\star}23\rangle^2\langle\mathcal{Y}_{\mathcal{H}}Z_{\star}34\rangle^2\langle\mathcal{Y}_{\mathcal{H}}Z_{\star}45\rangle^2\langle\mathcal{Y}_{\mathcal{H}}Z_{\star}52\rangle^2},
\end{equation}
with $Z_{\star}$ being the orthogonal complement of $\mathcal{H}$.

%%%%%%%%%%%%%%%%%%%%%%%%%%%%%%%%%%%%%%%%

\section{Jeffrey-Kirwan Residue and Covariant Forms}\label{subsec:jkex}
In this section we explain how covariant restrictions are relvant for the Jeffrey-Kirwan computation introduced in section \ref{subsec:PrPol}.

Let us consider the map $\tilde{Z}$ from $\mathbb{P}^{\nu-1}$ to $\mathbb{P}^N$: 
\begin{equation}\label{eq:mappolytopesimplex}
  \tilde{Z}: C \mapsto C \cdot Z=:\mathcal{Y},  
\end{equation}
where $C=(c_1,\ldots,c_{\nu})$ are homogeneous coordinates in $\mathbb{P}^{\nu}$ and $Z$ is $\nu \times (N+1)$ matrix.
Then the projective polytope defined in \eqref{eq:Polyt}, with vertices $Z_k,k=1,\ldots,\nu$ which are rows of the matrix Z, is just the image of the simplex $\Delta$ in $\mathbb{P}^{\nu-1}$ via the map $\tilde{Z}$. Let us now fix a point $\mathcal{Y}$ inside the polytope $\mathcal{P}$, and let us consider the fiber over $\mathcal{Y}$:
\begin{equation}
  \tilde{Z}^{-1}(\mathcal{Y})= \lbrace C \in \mathbb{P}^{\nu-1}: C \cdot Z= \mathcal{Y} \rbrace.
\end{equation}
Then the differential form $\tilde{\omega}_{\mathcal{Y}}(C,\mathcal{P})$ defined in Eq. \eqref{def:formfiber} is a top (covariant) differential form on the fiber $\tilde{Z}^{-1}(\mathcal{Y})$. In particular, it is the covariant restriction of the canonical form of the simplex $\omega(C,\Delta)$ into the hyperplane $\mathcal{H} \equiv \tilde{Z}^{-1}(\mathcal{Y})$, {\it i.e.}
\begin{equation} \label{eq:covformfiber}
 \tilde{\omega}_{\mathcal{Y}}(C,\mathcal{P}) \equiv \omega^{\mbox{\tiny $(N+1)$}}(C_{\mathcal{H}},\Delta_{\mathcal{H}}),
\end{equation}
where $\Delta_{\mathcal{H}}=\Delta \cap \mathcal{H}$.
Therefore, $\tilde{\omega}_{\mathcal{Y}}(C,\mathcal{P})$ has poles on the $(N-1)$-dimensional hyperplanes $\mathcal{H}_1,\ldots,\mathcal{H}_{\nu}$ which are the intersections between the $\nu$ facets of $\Delta_{\nu-1}$ and $\mathcal{H}$. In general\footnote{The set of $\mathcal{Y}$ in the interior of the polytope for which $\mathcal{H}$ doesn't intersect the simplex in lower dimensional faces is dense. Therefore these covariant restrictions in general do not produce higher order poles.} $\tilde{\omega}_{\mathcal{Y}}(C,\mathcal{P})$ has only simple poles, however it is not the canonical form of $\Delta_{\mathcal{H}}$. Indeed, some of the poles are on the intersection between the hyperplanes corresponding facets of $\Delta$ and $\mathcal{H}$ which lie outside $\Delta_{\mathcal{H}}$. Nevertheless, thanks to Theorem \ref{th:covpair}, the covariant form $\tilde{\omega}_{\mathcal{Y}}(C,\mathcal{P})$ is in covariant pairing with the child polytope $\Delta_{\mathcal{H}}$.

In full generality, by Theorem \ref{th:jk} one can compute the canonical function $\Omega(\mathcal{Y},\mathcal{P})$ of a polytope $\mathcal{P}$ in $\mathbb{P}^{N}$ with $\nu$ vertices (or the volume of the dual polytope $\tilde{P}$, see \eqref{eq:DPolyt}) by applying the Jeffrey-Kirwan residue to a covariant differential form $\omega^{(N)}(C_{\mathcal{H}},\Delta_{\mathcal{H}})$ in covariant pairing with the restriction of the standard simplex $\Delta$ in $\mathbb{P}^{\nu-1}$ onto hyperplanes $\mathcal{H} \equiv \tilde{Z}^{-1}(\mathcal{Y})$ of dimension $N$. 

Let us consider an easy visualisable example. Let $\mathcal{P}$ be the pentagon with vertices $Z_1,\ldots,Z_5 \in \mathbb{P}^2$.
The pentagon can be obtained as the image in $\mathbb{P}^2$ of the simplex $\Delta$ in $\mathbb{P}^4$ under the map in Eq. \eqref{eq:mappolytopesimplex}. We would like to compute the restriction of the canonical form of the simplex
\begin{equation}
    \omega(C,\Delta)=\bigwedge_{k=1}^5 \frac{d c_k}{c_k}
\end{equation}
onto the $2$-dimensional hyperplane
\begin{equation}
  \tilde{Z}^{-1}(\mathcal{Y})= \lbrace C \in \mathbb{P}^{4}: \sum_{k=1}^5 c_k Z_k= \mathcal{Y} \rbrace \equiv \mathcal{H}
\end{equation}
We can choose a parametrisation of $\tilde{Z}^{-1}(\mathcal{Y})$ using local inhomogeneous coordinates $(x_1,x_2,1) \in \mathbb{P}^2$ as:
\begin{equation}
    c_k= x \cdot Z^\perp_k+\tilde{c}(\mathcal{Y})_k, \quad k=1,\ldots,5.
\end{equation}
We denoted as $Z^\perp_k$ the columns of a $2 \times 5$ matrix orthogonal $Z$, i.e. $Z^\perp \cdot Z=0$ and $\tilde{C}(\mathcal{Y})$ is a particular solution of $\mathcal{Y}=\tilde{C}(\mathcal{Y}) \cdot Z$. For example:
\begin{equation}
\arraycolsep=1.4pt\def\arraystretch{1.5}
 Z^\perp=\left(\begin{array}{c@{}c}
& \mathbb{I}_{2} \begin{matrix}  -\frac{\langle 145 \rangle}{\langle 345 \rangle} & \frac{\langle 135 \rangle}{\langle 345 \rangle}& -\frac{\langle 134 \rangle}{\langle 345 \rangle}\\ 
  -\frac{\langle 245 \rangle}{\langle 345 \rangle} & \frac{\langle 235 \rangle}{\langle 345 \rangle}& -\frac{\langle 234 \rangle}{\langle 345 \rangle} \end{matrix} 
\end{array}\right), \quad \tilde{C}(\mathcal{Y})=\left(0,0,\frac{\langle \mathcal{Y}45 \rangle}{\langle 345 \rangle},-\frac{\langle \mathcal{Y}35 \rangle}{\langle 345 \rangle},\frac{\langle \mathcal{Y}34 \rangle}{\langle 345 \rangle} \right)
\end{equation}
Then we have: 
\begin{equation}
    C \cdot Z= (x \cdot Z^\perp+\tilde{C}(\mathcal{Y})) \cdot Z=\tilde{c}(\mathcal{Y}) \cdot Z=\mathcal{Y},
\end{equation}
and
\begin{equation}\label{es:restriction}
    \tilde{\omega}_{\mathcal{Y}}(C,\mathcal{P}) \sim \frac{d^2 x}{\prod_{k=1}^{5} (x \cdot Z^\perp_k+\tilde{c}_k(\mathcal{Y}))} 
\end{equation}
We notice that the intersections between the facets of the simplex $\lbrace c_k=0\rbrace$ and the hyperplane $\tilde{Z}^{-1}(\mathcal{Y})$ appear in the factors in the denominator of \eqref{es:restriction}. This phenomenon is exactly the one described in section \ref{subsubsec:POI}, where we considered cases in which the hyperplane can intersect the facets of the parent polytope outside. The child polytope $\Delta_{\mathcal{H}} \equiv \tilde{Z}^{-1}(\mathcal{Y}) \cap \Delta$ can be a triangle, a quadrilater or a pentagon, according to where $\mathcal{Y}$ is located in the pentagon $\mathcal{P}$. Nevertheless, by Theorem \ref{th:covpair} in all cases the child polytope is in covariant pairing with the differential form \eqref{es:restriction}, i.e. using the notation in section \ref{subsec:PrPol} we can write: 
\begin{equation}
\tilde{\omega}_{\mathcal{Y}}(x,\mathcal{P}) \equiv \omega^{\mbox{\tiny $(3)$}}(x,\Delta_{\mathcal{H}}).
\end{equation}

For completeness, we briefly show how to apply Jeffrey-Kirwan to the covariant form $\omega^{(3)}(x,\Delta_{\mathcal{H}})
$ in order to obtain triangulations for the pentagon $\mathcal{P}$. We will refer to section \ref{subsec:PrPol} for the notations used in the following.
With our choice of our inhomogeneous coordinates $y$, the cones $\mathfrak{C}_{k_1 k_2}$ are spanned by positive linear combinations of $\lbrace Z^\perp_{k_1},Z^\perp_{k_2}\rbrace$. We depict them in $\mathbb{R}^2$ in Fig. \ref{fig:coneschambers}. Let us now fix a vector $\xi \in \mathbb{P}^2$ as in Fig. \ref{fig:coneschambers} such that $\xi$ is in the chamber $\mathfrak{c}_1$. Then by definition in \eqref{def:jk} the Jeffrey-Kirwan residue is computed as:
\begin{equation}
  \mathrm{JK}_{\xi} \, \omega^{(3)}(x,\Delta_{\mathcal{H}}) = \sum_{\mathfrak{C}_I \ni \xi} \mbox{Res}_{\mathfrak{C}_I} \omega^{(3)}(x,\Delta_{\mathcal{H}}) = \left(\mbox{Res}_{\mathfrak{C}_{25}} +\mbox{Res}_{\mathfrak{C}_{45}} +\mbox{Res}_{\mathfrak{C}_{23}} \right) \omega^{(3)}(x,\Delta_{\mathcal{H}}),
\end{equation}
since $\xi$ is contained in the cones $\mathfrak{C}_{25},\mathfrak{C}_{45},\mathfrak{C}_{23}$. In this example, $\mathfrak{C}_{k_1 k_2}$ is positively oriented if $\mbox{det}(Z^\perp_{k_1} Z^\perp_{k_2})>0$. 
\begin{figure}[ht] 
\centering{
\def\svgwidth{0.6\linewidth}{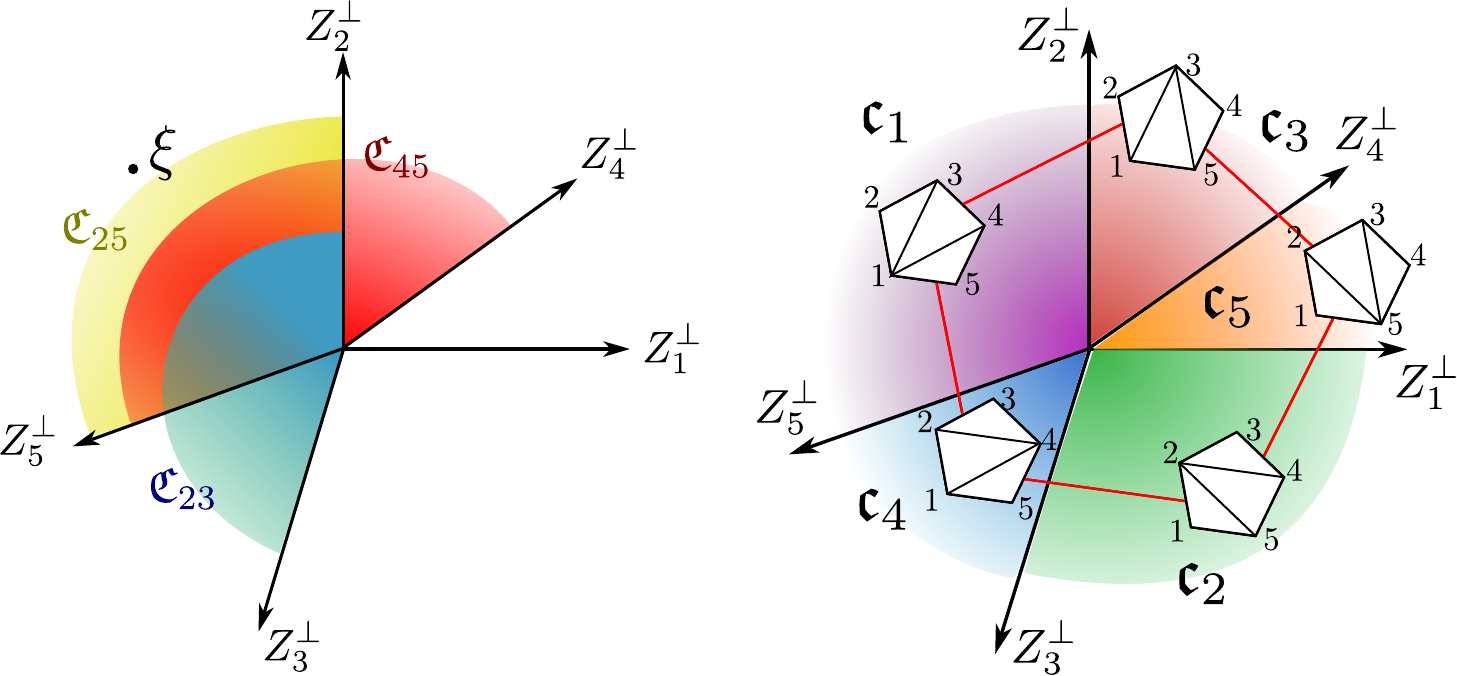}
}
\caption{Illustration of Cones and Chambers}
\label{fig:coneschambers}
\end{figure}

This produces the following representation of the canonical function of the pentagon:
\begin{equation}
  \mathrm{JK}_{\xi} \, \omega^{(3)}(x,\Delta_{\mathcal{H}}) = \Omega(\mathcal{Y},\Delta_{134})+  \Omega(\mathcal{Y},\Delta_{123})+\Omega(\mathcal{Y},\Delta_{145})=\Omega(\mathcal{Y},\mathcal{P}),
\end{equation}
where $\Delta_{k_1 k_2 k_3}$ are triangles in $\mathbb{P}^2$ with vertices $k_1,k_2,k_3$. Clearly, this corresponds to the triangulation of the pentagon into $\lbrace \Delta_{134},\Delta_{123},\Delta_{145}\rbrace$. All the other $4$ triangulations can be analogously obtained by choosing the reference vector $\xi$ in different chambers, see Fig.\ref{fig:coneschambers}.
%%%%%%%%%%%%%%%%%%%%%%%%%%%%%%%%%%%%%%%%
%%%%%%%%%%%%%%%%%%%%%%%%%%%%%%%%%%%%%%%%

\section{Cosmological Polytopes and Covariant Forms}\label{subsec:CPCF}

Let us now turn to the cosmological polytopes, which allows us to discuss higher dimensional examples. Recall that a cosmological polytope is constructed by taking a collection of triangles and segments, and intersecting them in the midpoints of their edges with the constraint that the triangles can be intersected on at most two out of its three sides. Using the notation introduced in Section \ref{subsec:CP}, we indicate with $\{\mathbf{x}_s\},\,\{\mathbf{y}_e\},\,\{\mathbf{h}_h\}$  the collection of vectors of the midpoints of the intersectable edges of the triangles and the segments, of the non-intersectable ones, and the non-intersectable vertex of the segments respectively and use it as a basis for the space where the cosmological polytope lives. Furthermore, they present natural hyperplanes on which the covariant restriction of their canonical form can be performed to produce covariant forms. 

\begin{prop}\label{prop:CPwk}
 Let $(\mathbb{P}^{n_e+n_s-1},\,\mathcal{P})$ be a cosmological polytope constructed from a collection of $n_t$ triangles and $n_h$ segments. Let $\mathcal{G}$ be the associated graph with $n_s$ sites and $n_e$ edges of which $n_h$ are tadpoles subgraphs. Let $k\,\in\,[1,\,n_h]$ be an integer, then if 
 \begin{equation}\label{eq:HK}
  \resizebox{0.9\hsize}{!}{$\displaystyle%
   \mathcal{H}^{\mbox{\tiny $(k)$}}\,=\,
     \left\{
       \mathcal{Y}\,\in\,\mathbb{P}^{n_e+n_s-1}\,\bigg|\,\mathcal{Y}\cdot\mathbf{\tilde{h}}_l,\,=0,\:
               \left\{
                \begin{array}{l}
                     \mathbf{h}_h\cdot\mathbf{\tilde{h}}_l\,=\,\delta_{hl},\hspace{.675cm}\forall\:l\,\in\,[1,k],\,h\,\in\,[1,\,n_h]  \\
                     (\mathbf{x}_s,\,\mathbf{y}_e)\cdot\mathbf{\tilde{h}}_l\,=\,0,\:\forall\:s\,\in\,[1,\,n_s],\,e\,\in\,[1,\,n_e]
                \end{array}
               \right.
       \right\}
   $
  }
 \end{equation}
 the restriction $(\mathbb{P}^{n_s+n_e-k-1},\,\mathcal{P}_{\mathcal{H}^{\mbox{\tiny $(k)$}}})$, with $\mathcal{P}_{\mathcal{H}^{\mbox{\tiny $(k)$}}}\, :=\,\mathcal{P}\,\cap\,\mathcal{H}^{\mbox{\tiny $(k)$}}$ of the cosmological polytope $(\mathbb{P}^{n_e+n_s-1},\,\mathcal{P})$ onto $\mathcal{H}^{\mbox{\tiny $(k)$}}$ is still a cosmological polytope whose associated $\mathcal{G}_{\mathcal{H}^{\mbox{\tiny $(k)$}}}$ is obtained from $\mathcal{G}$ suppressing $k$ tadpoles, and the covariant restriction of the canonical form of $(\mathbb{P}^{n_e+n_s-1},\,\mathcal{P})$ is a covariant form of degree-$k$ associated to $(\mathbb{P}^{n_s+n_e-k-1},\,\mathcal{P}_{\mathcal{H}^{\mbox{\tiny $(k)$}}})$.
\end{prop}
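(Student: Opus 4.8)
The strategy is to separate the claim into its geometric content --- that $\mathcal{P}_{\mathcal{H}^{(k)}}$ is again a cosmological polytope, with associated graph $\mathcal{G}$ minus $k$ tadpoles --- and its form content --- that the covariant restriction is a degree-$k$ covariant form of $\mathcal{P}_{\mathcal{H}^{(k)}}$ --- the latter being essentially inherited from the general results of Section~\ref{subsec:PCP}, the former being where the combinatorics of cosmological polytopes does the work.

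\emph{Geometric part.} I would work throughout in the midpoint basis $\{\mathbf{x}_s\}\cup\{\mathbf{y}_e\}\cup\{\mathbf{h}_h\}$, in which $\mathcal{P}$ is the convex hull of the vectors listed in \eqref{eq:CPxyx}, and read \eqref{eq:HK} as saying that $\mathcal{H}^{(k)}$ is the codimension-$k$ subspace on which the $\mathbf{h}_l$-components vanish for $l=1,\ldots,k$. For a segment $\mathcal{S}^{(g)}$ only its two vertices $2\mathbf{x}''_g-\mathbf{h}_g$ and $\mathbf{h}_g$ carry an $\mathbf{h}_g$-component; hence if $\mathcal{Y}=\sum_v c_v Z_v\in\mathcal{P}$ lies on $\mathcal{H}^{(k)}$ then $c_{2\mathbf{x}''_g-\mathbf{h}_g}=c_{\mathbf{h}_g}$ for each $g\le k$, and the corresponding part of $\mathcal{Y}$ collapses to the nonnegative multiple $2c_{\mathbf{h}_g}\mathbf{x}''_g$ of the segment midpoint. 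Conversely, every nonnegative combination of the triangle vertices, of the vertices of the $\mathcal{S}^{(g)}$ with $g>k$, and of the points $\mathbf{x}''_g$ with $g\le k$ already lies on $\mathcal{H}^{(k)}\cap\mathcal{P}$ (note $\mathbf{x}''_g=\tfrac{1}{2}\big((2\mathbf{x}''_g-\mathbf{h}_g)+\mathbf{h}_g\big)\in\mathcal{P}$), so $\mathcal{P}_{\mathcal{H}^{(k)}}$ is exactly that convex hull. I would then identify it with the cosmological polytope of $\mathcal{G}_{\mathcal{H}^{(k)}}:=\mathcal{G}$ with the $k$ chosen tadpoles deleted: its building blocks are the $n_t$ triangles of $\mathcal{G}$ and the $n_h-k$ segments $\mathcal{S}^{(g)}$, $g>k$, merged in the same way, so by construction its vertex set is $\{\text{triangle vertices}\}\cup\{\text{vertices of }\mathcal{S}^{(g)},\,g>k\}$ together with one vertex $\mathbf{x}_s$ over each site $s$ that deleting tadpoles has left isolated --- and each remaining point $\mathbf{x}''_g$, $g\le k$, equals such an $\mathbf{x}_s$ when the site becomes isolated, and otherwise coincides with the site of a surviving triangle or segment, hence equals a midpoint of two vertices already present and is redundant. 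Finally the dimension count closes: $\mathcal{G}_{\mathcal{H}^{(k)}}$ has $n_e-k$ edges and $n_s$ sites, so it lives in $\mathbb{P}^{(n_e-k)+n_s-1}=\mathbb{P}^{n_s+n_e-k-1}$, consistent with $\mathcal{H}^{(k)}$ having codimension $k$.

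\emph{Form part.} Since no facet hyperplane $\mathcal{W}_{\mathfrak{g}}$ of $\mathcal{P}$ can equal any $\tilde{\mathbf{h}}_l$ (each $\mathcal{W}_{\mathfrak{g}}$ carries at least one $\tilde{\mathbf{x}}_s$ term) and $\mathcal{H}^{(k)}$ meets $\mathcal{P}$, the covariant restriction \eqref{eq:CPcf} of $\omega(\mathcal{Y},\mathcal{P})$ onto $\mathcal{H}^{(k)}$ is well defined; being a codimension-$k$ restriction it has covariant degree $k$, and by the general local analysis around \eqref{eq:CPloc} it satisfies the defining property \eqref{eq:LorOp} of covariant forms. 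By Theorem~\ref{th:covpair} it is in covariant pairing with $\mathcal{P}_{\mathcal{H}^{(k)}}$, which already gives the statement. Moreover one checks that here there are no spurious poles: using the subgraph description of facets (Section~\ref{subsec:CP}), the restriction to $\mathcal{H}^{(k)}$ of a facet hyperplane $\mathcal{W}_{\mathfrak{g}}$ of $\mathcal{P}$ equals the facet hyperplane $\mathcal{W}_{\mathfrak{g}\cap\mathcal{G}_{\mathcal{H}^{(k)}}}$ of $\mathcal{P}_{\mathcal{H}^{(k)}}$ --- the sites and the external non-tadpole edges of $\mathfrak{g}$ are unaffected, while the external tadpoles with label $\le k$ drop out because their coordinates vanish on $\mathcal{H}^{(k)}$ --- so every pole of the restricted form sits on a boundary component of $\mathcal{P}_{\mathcal{H}^{(k)}}$, with multiplicity equal to the number of facets of $\mathcal{P}$ restricting to it (bounded by $k+1$, cf. Section~\ref{subsec:CovFrm}). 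As always for $k>0$ the resulting form is one representative of its equivalence class, not a canonical choice (cf. Proposition~\ref{thm:wk0un}).

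\emph{Main obstacle.} The form part is routine given \eqref{eq:CPcf}, \eqref{eq:CPloc} and Theorem~\ref{th:covpair}; the work is the combinatorial identification in the geometric part --- verifying that collapsing the two vertices of each suppressed segment to its midpoint reproduces exactly the vertex set of the tadpole-deleted graph, carefully handling the degenerate cases where deleting tadpoles isolates sites, and matching the two facet lattices through the subgraph dictionary so as to rule out spurious poles. Pinning down the precise merging conventions and the meaning of ``suppressing a tadpole'' for arbitrary graphs is the delicate point.
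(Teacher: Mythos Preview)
Your argument is correct and follows essentially the same route as the paper: identify the child polytope via the vertex structure in the midpoint basis, then check through the subgraph--facet dictionary that every pole of the restricted form sits on a facet of $\mathcal{P}_{\mathcal{H}^{(k)}}$. One caveat: your assertion that the multiplicity of a pole equals ``the number of facets of $\mathcal{P}$ restricting to it'' is precisely the oversimplification the paper warns against --- when more facets coalesce on $\mathcal{H}^{(k)}$ than their common vertices can support (i.e.\ those vertices fail to span the expected $\mathbb{P}^{n_s+n_e-l-1}$), the numerator of the canonical form develops a compensating zero, and the actual multiplicity is the codimension of the parent face with that vertex configuration; this does not affect your conclusion (the bound $m_j\le k+1$ is automatic from Section~\ref{subsec:PCP}), but it is exactly the subtlety the paper's proof spends most of its effort on and that its worked example in Section~\ref{subsec:CPCF} is designed to illustrate.
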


\begin{proof}
 Let us consider the cosmological polytope $(\mathbb{P}^{n_e+n_s-1},\,\mathcal{P})$ and let $\mathcal{G}$ be its associated graph. By definition, $\mathcal{P}$ is the covenx hull of the collection of vertices of $n_t$ triangles and $n_h$ segments with suitable identifications $\{\mathbf{x}_a\,=\,\mathbf{x}_b\}$ of the midpoints of triangles and segments, with the prescription that each triangle can be intersected on the midpoints at most two of its three sides. Hence, the vertices of $\mathcal{P}$ have the form (modulo midpoint identifications)
 \begin{equation*}
     \{\mathbf{x}_s-\mathbf{y}_e+\mathbf{x}_{s'},\,\mathbf{x}_s+\mathbf{y}_e-\mathbf{x}_{s'},\,-\mathbf{x}_s+\mathbf{y}_e+\mathbf{x}_{s'}\},\qquad
     \{2\mathbf{x}_{s''}-\mathbf{h}_{s''},\,\mathbf{h}_{s''}\}
 \end{equation*}
 with the two collections being the vertices of the triangles and segments respectively. Because of the definition \eqref{eq:HK} of $\mathcal{H}^{\mbox{\tiny $(k)$}}$, the vertices of the polytope which are on $\mathcal{H}^{\mbox{\tiny $(k)$}}$ are all vertices of the generating triangles, and all those vertices of the segments such that $s''\,\neq\,l,\;\forall\:l\,\in\,[1,\,k]$. In other words, the child polytope 
 $(\mathbb{P}^{n_e+n_s-k-1},\,\mathcal{P}_{\mathcal{H}^{\mbox{\tiny $(k)$}}})$ obtained as a restriction of $(\mathbb{P}^{n_e+n_s-1},\,\mathcal{P})$ onto $\mathcal{H}^{\mbox{\tiny $(k)$}}$ is such that $\mathcal{P}_{\mathcal{H}^{\mbox{\tiny $(k)$}}}$ is the convex hull of the vertices of all the generating triangles of $(\mathbb{P}^{n_e+n_s-1},\,\mathcal{P})$ and a subset of the vertices of its generating segments, with the very same intersections among triangles and segments as $(\mathbb{P}^{n_e+n_s-1},\,\mathcal{P})$. Thus, $(\mathbb{P}^{n_e+n_s-k-1},\,\mathcal{P}_{\mathcal{H}^{\mbox{\tiny $(k)$}}})$ is a cosmological polytope and its associated graph $\mathcal{G}_{\mathcal{H}^{\mbox{\tiny $(k)$}}}$ can be obtained from the graph $\mathcal{G}$ by suppressing the $k$ tadpoles related to the segments which are not on $\mathcal{H}^{\mbox{\tiny $(k)$}}$. Hence, given that the facets of a polytope are given by the subgraphs of the associated graphs, the subgraphs $\mathfrak{g}\,\subseteq\,\mathcal{G}$ are mapped into subgraphs $\mathfrak{g}_{\mathcal{H}^{\mbox{\tiny $(k)$}}}\,\subseteq\,\mathcal{G}_{\mathcal{H}^{\mbox{\tiny $(h)$}}}$ by excluding in $\mathfrak{g}$ the vertices corresponding to the tadpoles that one has to eliminate to map $\mathcal{G}$ into $\mathcal{G}_{\mathcal{G}_{\mathcal{H}^{\mbox{\tiny $(k)$}}}}$: all the facets of the parent polytope are mapped into facets of the child polytope. However, counting how many subgraphs of $\mathcal{G}$ (and therefore how many facets of the parent polytope) return the same subgraph of the child polytope {\it does not} provide the correct counting of the multiplicity of the poles of the covariant form on the child polytope: the configuration of vertices obtained from $\mathfrak{g}\,\subseteq\,\mathcal{G}$ by excluding the vertices of the tadpoles which are eliminated upon restriction, can be equivalently obtained by considering the common vertices among subgraphs of $\mathcal{G}$, {\it i.e. } the intersections of the facets of the parent polytope that give a higher codimension face. In order for $l$ facets to intersect, the number of $n_v$ common vertices must be such that they can span $\mathbb{P}^{n_s+n_e-l-1}$, {\it i.e. } $n_v\,\ge\,n_s+n_e-l$.  When $n_v\,<\,n_s+n_e-l$, the vertices cannot span $\mathbb{P}^{n_s+n_e-l-1}$ and the facets do not intersect. Thus, given $l$ facets of the parent polytope which intersecting provide the same vertex configuration of a facet of the child polytope and such that their common vertices $n_v$ span $\mathbb{P}^{n_s+n_e-l-1}$, these $l$ facets intersect each other on $\mathcal{H}^{\mbox{\tiny $(k)$}}$ and $l$ provides the multiplicity of the pole in the covariant form of the child polytope. Finally, notice that $l$ is also the codimension of the face of the parent polytope identified by the intersection of its $l$ facets and, consequently, it is possible to state that the multiplicity of a pole in the child polytope along a certain facet is given by the codimension of the face of the parent polytope with the same vertex configuration, and the Laurent coefficient of the covariant form along this facet of the child polytope is the residue of the canonical form of the parent polytope along such a face.
\end{proof}

We will show a realisation of the Preposition \ref{prop:CPwk} in an explicit example afterwords. For the time being, it is important to remark that, given a cosmological polytope, it is possible to systematically construct a full class of covariant forms associated to it.

\begin{prop}\label{prop_GenWk}
 Let $(\mathbb{P}^{n'_e+n_s-1},\,\mathcal{P}')$ a cosmological polytope constructed from a collection of $n_t$ triangles and $n_h$ segments. Let $\mathcal{G}'$ be the associated graph with $n_s$ sites and $n'_e$ edges of which $n'_h$ are tadpoles subgraphs. Let $\{2\mathbf{x}_a-\mathbf{h}_a,\,\mathbf{h}_a\}_{a=1}^k$ be a collection of segments, and $\mathcal{T}_a$ the corresponding tadpole graph. Then, it possible to generate a class of covariant forms of degree-$k$ associated to $(\mathbb{P}^{n'_e+n_s-1},\,\mathcal{P}')$ from the covariant restriction of the canonical form of the cosmological polytope $\{(\mathbb{P}^{n_e+n_s-1},\,\mathcal{P})\}$ ($n_e\,=\,n'_e+k$) that can be constucted by intersecting in all possible ways the $k$ segments with $\mathcal{P}'$. The graphs associated with such polytopes are obtained from the graph $\mathcal{G}'$ by attaching $k$ tadpoles according to the intersections of the segments with $\mathcal{P}'$ and the restriction is on the hyperplane which suppresses the additional tadpoles. The covariant forms of degree $k$ generated in this way all have poles along the boundaries of $(\mathbb{P}^{n'_e+n_s-1},\,\mathcal{P}')$ (all of them are associated to this polytope) but with different multiplicities.
\end{prop}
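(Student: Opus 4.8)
The plan is to read Proposition~\ref{prop_GenWk} as the statement ``inverse'' to Proposition~\ref{prop:CPwk}: instead of restricting a given cosmological polytope and reading off the induced covariant form, we first \emph{build} a parent polytope by gluing the $k$ new tadpoles onto $\mathcal{G}'$, and then restrict it back down, recovering $\mathcal{P}'$ together with an induced covariant form of degree $k$. First I would make precise what ``intersecting in all possible ways the $k$ segments with $\mathcal{P}'$'' means. By the dictionary of Section~\ref{subsec:CP}, intersecting the segment $\{2\mathbf{x}_a-\mathbf{h}_a,\,\mathbf{h}_a\}$ with $\mathcal{P}'$ amounts to identifying its midpoint $\mathbf{x}_a$ with the midpoint attached to some site of $\mathcal{G}'$, i.e.\ to gluing the tadpole $\mathcal{T}_a$ onto a chosen site of $\mathcal{G}'$. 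Making such a choice for every $a=1,\dots,k$ produces a graph $\mathcal{G}$ with $n_s$ sites and $n_e=n'_e+k$ edges, hence a cosmological polytope $(\mathbb{P}^{n_e+n_s-1},\mathcal{P})$; letting the attachments range over all possibilities gives the announced family $\{\mathcal{P}\}$, and by construction removing the $k$ new tadpoles from $\mathcal{G}$ returns exactly $\mathcal{G}'$.

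\textbf{Restricting back.} For each such $\mathcal{P}$, I would take $\mathcal{H}^{\mbox{\tiny $(k)$}}$ as in \eqref{eq:HK}, namely the codimension-$k$ hyperplane that suppresses precisely the $k$ new tadpoles ($\mathcal{Y}\cdot\tilde{\mathbf{h}}_a=0$, $a=1,\dots,k$). Proposition~\ref{prop:CPwk} then applies directly: the restriction $\mathcal{P}_{\mathcal{H}^{\mbox{\tiny $(k)$}}}$ is again a cosmological polytope whose graph is $\mathcal{G}$ with those $k$ tadpoles removed, namely $\mathcal{G}'$, and since the surviving midpoint basis $\{\mathbf{x}_s\},\{\mathbf{y}_e\},\{\mathbf{h}_h\}$ is exactly the one of $\mathcal{P}'$, one has $\mathcal{P}_{\mathcal{H}^{\mbox{\tiny $(k)$}}}=\mathcal{P}'$; furthermore the covariant restriction \eqref{eq:CPcf} of $\omega(\mathcal{Y},\mathcal{P})$ onto $\mathcal{H}^{\mbox{\tiny $(k)$}}$ is a covariant form of degree $k$ associated to $\mathcal{P}'$. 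Since it is a covariant form of $\mathcal{P}'$ in the sense of Section~\ref{subsec:CovFrm}, its poles all lie along the facets of $\mathcal{P}'$; this proves all of the statement except the last sentence.

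\textbf{The multiplicities.} Here I would invoke the multiplicity rule established within the proof of Proposition~\ref{prop:CPwk}: the order of the pole of the resulting $\omega^{\mbox{\tiny $(k)$}}$ along the facet of $\mathcal{P}'$ labelled by a subgraph $\mathfrak{g}'\subseteq\mathcal{G}'$ equals the codimension, in $\mathcal{P}$, of the face whose vertex configuration restricts to that of $\mathfrak{g}'$. A tadpole $\mathcal{T}_a$ glued at a site $s$ affects this face only when $s\in\mathfrak{g}'$: in that case the subgraph $\mathfrak{g}'$, viewed in $\mathcal{G}$, gives rise to two facets of $\mathcal{P}$ according to whether the loop $\mathcal{T}_a$ is counted internal or external, their covectors differing precisely by $\tilde{\mathbf{h}}_a$. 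More generally, if $j$ of the $k$ new tadpoles are attached to sites of $\mathfrak{g}'$, the $2^{j}$ corresponding facets of $\mathcal{P}$ all contain the codimension-$(j+1)$ face cut out by the covector of $\mathfrak{g}'$ together with the $j$ hyperplanes $\tilde{\mathbf{h}}_a\cdot\mathcal{Y}=0$, and this face restricts on $\mathcal{H}^{\mbox{\tiny $(k)$}}$ to the facet $\mathfrak{g}'$ of $\mathcal{P}'$. Hence the pole of $\omega^{\mbox{\tiny $(k)$}}$ along $\mathfrak{g}'$ has multiplicity
\begin{equation*}
 m_{\mathfrak{g}'}\:=\:1+\#\{a\in[k]\,:\,\mathcal{T}_a\ \text{is attached to a site of}\ \mathfrak{g}'\},
\end{equation*}
which lies between $1$ and $k+1$, in agreement with the bound of Section~\ref{subsec:CovFrm}, and plainly varies with the chosen attachments; this yields the last assertion.

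The step I expect to be the real obstacle is precisely this final combinatorial bookkeeping: one must check carefully that the $2^{j}$ facets in question meet in a face of codimension exactly $j+1$ (and not less), that it is this face whose vertex configuration matches $\mathfrak{g}'$ under the restriction, and that no \emph{other} facet of $\mathcal{P}$ collapses onto the facet $\mathfrak{g}'$ of $\mathcal{P}'$. Everything else is a direct application of Proposition~\ref{prop:CPwk} together with the definitions of covariant form and covariant restriction.
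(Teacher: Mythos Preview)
Your proposal is correct and follows essentially the same route as the paper: build the parent polytopes by attaching the $k$ tadpoles to the sites of $\mathcal{G}'$ in all possible ways, then invoke Proposition~\ref{prop:CPwk} to see that restricting onto the hyperplane \eqref{eq:HK} returns $\mathcal{P}'$ together with a degree-$k$ covariant form whose pole multiplicities are the codimensions of the relevant faces of the parent, which vary with the attachment pattern. The paper's argument is in fact terser than yours on the last point---it simply asserts that the codimension depends on $\sigma$---whereas you supply the explicit formula $m_{\mathfrak{g}'}=1+\#\{a:\mathcal{T}_a\text{ attached to a site of }\mathfrak{g}'\}$; this formula is not stated in the proof but is exactly what the paper's worked example (the two-site line graph with two tadpoles) exhibits, and your honest flag about the $2^{j}$-facets bookkeeping is well placed, since the paper does not spell that out either.
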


\begin{proof}
 Let $(\mathbb{P}^{n'_e+n_s-1},\,\mathcal{P}')$ a cosmological polytope, whose graph $\mathcal{G}'$ has $n_s$ sites and $n'_e$ edges, with $n'_e$ including both the edges connecting different sites and the edges in the tadpole subgraphs. Let $\{2\mathbf{x}_a-\mathbf{h}_a,\,\mathbf{h}_a\}_{a=1}^k$ be a collection of segments, and $\mathcal{T}_a$ the corresponding tadpole graph. Following the definition of the cosmological polytope, we can construct a new polytope by merging the site of each tadpole $\{\mathcal{T}_a\}_{a=1}^{k}$ with the sites of $\mathcal{G}'$ generating a graph $\mathcal{G}$ with the same number $n_s$ of sites and $n_e\,=\,n'_e+k$ edges which describes a cosmological polytope $(\mathbb{P}^{n_e+n_s-1},\,\mathcal{P})$. However, there are 
 $\begin{pmatrix} k+n_s-1 \\ n_s-1\end{pmatrix}$ ways of attaching $k$ tadpoles to a graph $\mathcal{G}'$ with $n_s$ sites. Thus, given  $\mathcal{G}'$ and the collection $\{\mathcal{T}_a\}$ of $k$ tadpoles, it is possible to construct $\begin{pmatrix} k+n_s-1 \\ n_s-1\end{pmatrix}$ inequivalent graphs and, hence, cosmological polytopes in $\mathbb{P}^{n_e+n_s-1}$. Let us label the convex hull of these polytopes as $\mathcal{P}_{\sigma}$, with $\sigma$ labeling the inequivalent configuration of the $k$ tadpoles. Each polytope generated in this way can be now restricted on a hyperplane \eqref{eq:HK} such that the resulting polytope has again $\mathcal{G}'$ as an associated graph. As from Proposition \ref{prop:CPwk}, the facets of $(\mathbb{P}^{n'_e+n_s-1},\,\mathcal{P}')$ are encoded in higher codimension faces of $\{(\mathbb{P}^{n_e+n_s-1},\,\mathcal{P}_{\sigma})\}$ which are given by intersection of their facets. However, for each $\sigma$ such intersections change and hence the codimension of the face corresponding to a given facet of $(\mathbb{P}^{n'_e+n_s-1},\,\mathcal{P}')$. Then, for each $\sigma$, the covariant restriction of the canonical form returns a covariant form of degree-$k$ whose poles along each facet has multiplicity given by the codimension of the face of $\{(\mathbb{P}^{n_e+n_s-1},\,\mathcal{P}_{\sigma})\}$ with the same vertex configuration, and such a codimension depends on $\sigma$. Hence we obtain $\begin{pmatrix} k+n_s-1 \\ n_s-1\end{pmatrix}$ covariant forms of degree $k$ associated to $(\mathbb{P}^{n'_e+n_s-1},\,\mathcal{P}')$ with different multiplicity for their poles.
\end{proof}

Let us illustrate both Propositions \ref{prop:CPwk} and \ref{prop_GenWk}, starting with the latter. As the simplest example let us consider a two-site line graph $\mathcal{G}$ (whose associated polytope is a triangle) and two tadpoles. There are three inequivalent ways of generating a new graph by attaching the tadpoles to the sites of $\mathcal{G}$:
\begin{equation*}
 \begin{tikzpicture}[ball/.style = {circle, draw, align=center, anchor=north, inner sep=0}, cross/.style={cross out, draw, minimum size=2*(#1-\pgflinewidth), inner sep=0pt, outer sep=0pt}]
  \begin{scope}
   \coordinate[label=left:{\footnotesize $x_1$}] (s1) at (0,0);
   \coordinate[label=right:{\footnotesize $x_2$}] (s2) at (2,0);
   \coordinate[label=above:{\footnotesize $y$}] (s12) at ($(s1)!0.5!(s2)$);
   \coordinate[label=below:{\footnotesize $x'_1$}] (s3) at (0,-2);
   \coordinate (c3) at ($(s3)+(0,.325cm)$);
   \coordinate[label=left:{\footnotesize $h_1$}] (l3) at ($(c3)+(-.325cm,0)$);
   \coordinate[label=below:{\footnotesize $x'_2$}] (s4) at (2,-2);
   \coordinate (c4) at ($(s4)+(0,.325cm)$);
   \coordinate[label=left:{\footnotesize $h_2$}] (l4) at ($(c4)+(-.25cm,0)$);
   
   \draw[-, very thick] (s1) -- (s2);
   \draw[fill] (s1) circle (2pt);
   \draw[fill] (s2) circle (2pt);
   
   \draw[very thick] (c3) circle (.325cm);
   \draw[fill] (s3) circle (2pt);
   \draw[very thick] (c4) circle (.325cm);
   \draw[fill] (s4) circle (2pt);
  \end{scope}
  \begin{scope}[shift={(6,0)}, transform shape]
   \coordinate[label=left:{\footnotesize $x_1$}] (s1) at (0,0);
   \coordinate[label=right:{\footnotesize $x_2$}] (s2) at (2,0);
   \coordinate[label=above:{\footnotesize $y$}] (s12) at ($(s1)!0.5!(s2)$);
   \coordinate (c1) at ($(s1)+(0,+.25cm)$);
   \coordinate[label=left:{\footnotesize $h_1$}] (l1) at ($(c1)+(0,+.325cm)$);
   \coordinate (c2) at ($(s1)+(0,-.25cm)$);
   \coordinate[label=right:{\footnotesize $h_2$}] (l2) at ($(c2)+(+.25cm,0)$);
   
   \draw[-, very thick] (s1) -- (s2);
   \draw[very thick] (c1) circle (.25cm);
   \draw[very thick] (c2) circle (.25cm);
   \draw[fill] (s1) circle (2pt);
   \draw[fill] (s2) circle (2pt);
  \end{scope}
  \begin{scope}[shift={(6,-1)}, transform shape]
   \coordinate[label=below:{\footnotesize $\hspace{.1cm}x_1$}] (s1) at (0,0);
   \coordinate[label=below:{\footnotesize $x_2\hspace{.1cm}$}] (s2) at (2,0);
   \coordinate[label=above:{\footnotesize $y$}] (s12) at ($(s1)!0.5!(s2)$);
   \coordinate (c1) at ($(s1)+(-.25cm,0)$);
   \coordinate[label=left:{\footnotesize $h_1$}] (l1) at ($(c1)+(-.25cm,0)$);
   \coordinate (c2) at ($(s2)+(+.25cm,0)$);
   \coordinate[label=right:{\footnotesize $h_1$}] (l2) at ($(c2)+(+.25cm,0)$);
   
   \draw[-, very thick] (s1) -- (s2);
   \draw[very thick] (c1) circle (.25cm);
   \draw[very thick] (c2) circle (.25cm);
   \draw[fill] (s1) circle (2pt);
   \draw[fill] (s2) circle (2pt);
  \end{scope}
  \begin{scope}[shift={(6,-2)}, transform shape]
   \coordinate[label=left:{\footnotesize $x_1$}] (s1) at (0,0);
   \coordinate[label=right:{\footnotesize $x_2$}] (s2) at (2,0);
   \coordinate[label=above:{\footnotesize $y$}] (s12) at ($(s1)!0.5!(s2)$);
   \coordinate (c1) at ($(s2)+(0,+.25cm)$);
   \coordinate[label=right:{\footnotesize $h_1$}] (l1) at ($(c1)+(+.25cm,0)$);
   \coordinate (c2) at ($(s2)+(0,-.25cm)$);
   \coordinate[label=right:{\footnotesize $h_2$}] (l2) at ($(c2)+(+.25cm,0)$);
   
   \draw[-, very thick] (s1) -- (s2);
   \draw[very thick] (c1) circle (.25cm);
   \draw[very thick] (c2) circle (.25cm);
   \draw[fill] (s1) circle (2pt);
   \draw[fill] (s2) circle (2pt);
  \end{scope}
 \end{tikzpicture}
\end{equation*}
Let us label $\mathcal{G}_{\mbox{\tiny $11$}},\:\mathcal{G}_{\mbox{\tiny $12$}},\,\mathcal{G}_{22}$ the three graphs appearing on the right, with the indices $ij$ indicating the site where each tadpole has been merged. All the polytopes associated to these graphs live in $\mathbb{P}^{4}$: the number of sites and edges is the same in all three cases, what changes is the way that the triangle associated to the two-site line graph has been intersected with the two segments associated to the two tadpoles. The polytopes associated to $\mathcal{G}_{\mbox{\tiny $11$}},\:\mathcal{G}_{\mbox{\tiny $12$}},\,\mathcal{G}_{22}$ are the convex hulls of, respectively, the following list of vertices
\begin{equation*}
  \begin{split}
    &\{\mathbf{x}_1-\mathbf{y}+\mathbf{x}_2,\:\mathbf{x}_1+\mathbf{y}-\mathbf{x}_2,\:-\mathbf{x}_1+\mathbf{y}+\mathbf{x}_2,\:
       2\mathbf{x}_1-\mathbf{h}_1,\;\mathbf{h}_1,\:2\mathbf{x}_1-\mathbf{h}_2,\;\mathbf{h}_2\},\\
    &\{\mathbf{x}_1-\mathbf{y}+\mathbf{x}_2,\:\mathbf{x}_1+\mathbf{y}-\mathbf{x}_2,\:-\mathbf{x}_1+\mathbf{y}+\mathbf{x}_2,\:
       2\mathbf{x}_1-\mathbf{h}_1,\;\mathbf{h}_1,\:2\mathbf{x}_2-\mathbf{h}_2,\;\mathbf{h}_2\},\\
    &\{\mathbf{x}_1-\mathbf{y}+\mathbf{x}_2,\:\mathbf{x}_1+\mathbf{y}-\mathbf{x}_2,\:-\mathbf{x}_1+\mathbf{y}+\mathbf{x}_2,\:
       2\mathbf{x}_2-\mathbf{h}_1,\;\mathbf{h}_1,\:2\mathbf{x}_2-\mathbf{h}_2,\;\mathbf{h}_2\},\\
  \end{split}
\end{equation*}
The canonical function can be readily written for all three polytopes
\begin{align*}
        \Omega_{\mbox{\tiny $(11)$}}\:=\:&\frac{1}{(x_1+x_2)(x_1+2h_1+2h_2)(y+x_2)}
         \left[
          \frac{2(x_1+y+x_2+h_1+h_2)}{(x_1+y)(x_1+y+2h_1)(x_1+y+2h_2)}+
         \right.\\
         &+
          \frac{2x_1+y+4h_1+2h_2}{(x_1+x_2+2h_1)(x_1+y+2h_1)(x_1+x_2+2h_1+2h_2)}+\\
         &\left.+
          \frac{2x_1+y+2h_1+4h_2}{(x_1+x_2+2h_2)(x_1+y+2h_2)(x_1+x_2+2h_1+2h_2)}
         \right],\\
        \Omega_{\mbox{\tiny $(12)$}}\:=\:&\frac{1}{(x_1+x_2)(x_1+y+2h_1)(y+x_2+2h_2)}
         \left[
          \frac{x_1+y+2x_2+2h_1+2h_2}{(x_1+x_2+2h_1)(x_1+x_2+2h_1+2h_2)(y+x_2)}
         \right.\\
         &\left.+
          \frac{2x_2+y+x_2+2h_1+2h_2}{(x_1+x_2+2h_2)(x_1+y)(x_1+x_2+2h_1+2h_2)}
         \right],
\end{align*}
with the canonical function $\Omega_{\mbox{\tiny $(22)$}}$ that can be obtained from $\Omega_{\mbox{\tiny $(11)$}}$ via the exchange $x_1\,\longleftrightarrow\,x_2$. In order to obtain covariant forms on the triangle (whose associated graph is the two-site line graph), the canonical forms of the polytopes associated to $\mathcal{G}_{11}$, $\mathcal{G}_{12}$ and $\mathcal{G}_{22}$  has to be restricted on the following hyperplane
\begin{equation*}
  \mathcal{H}\:=\:
   \left\{
    \mathcal{Y}\,\in\,\mathbb{P}^4\,\Bigg|\,
     \begin{array}{l}
          \mathcal{Y}\cdot\mathbf{\tilde{h}}_1\,=\,h_1\,=\,0  \\
          \mathcal{Y}\cdot\mathbf{\tilde{h}}_2\,=\,h_2\,=\,0 
     \end{array}
   \right\}.
\end{equation*}
It is easy so see that the only vertices which are on $\mathcal{H}$ are $\{\mathbf{x}_1-\mathbf{y}+\mathbf{x}_2,\:\mathbf{x}_1+\mathbf{y}-\mathbf{x}_2,\:-\mathbf{x}_1+\mathbf{y}+\mathbf{x}_2,\}$ in all three cases. The covariant restriction of the canonical forms produces covariant forms of degree-$2$ on the triangle, whose canonical functions are
\begin{equation}\label{eq:CFT}
  \begin{split}
    &\Omega_{\mbox{\tiny $(11)$}}^{\mbox{\tiny $(2)$}}\:=\:
      \frac{3x_1^2+3x_1y+3x_1x_2+y^2+yx_2+x_2^2}{(x_1+x_2)^3(x_1+y)^3(y+x_2)},\\
    &\Omega_{\mbox{\tiny $(12)$}}^{\mbox{\tiny $(2)$}}\:=\:
      \frac{x_1^2+2x_1y+3x_1x_2+y^2+2yx_2+x_2^2}{(x_1+x_2)^3(x_1+y)^2(y+x_2)^2}
  \end{split}
\end{equation}
and, again, $\Omega_{\mbox{\tiny $(22)$}}^{\mbox{\tiny $(2)$}}$ can be obtained from $\Omega_{\mbox{\tiny $(11)$}}^{\mbox{\tiny $(2)$}}$ via the exchange $x_1\,\longleftrightarrow\,x_2$. As for the Proposition \ref{prop_GenWk}, the poles of the three covariant forms are all along the facets of the triangle, with just different multiplicities, which is a reflection of the face structure of the different parent polytopes. The multiplicity of each pole in the covariant forms, whose canonical functions are given by \eqref{eq:CFT}, is the codimension $l$ of the face of the parent polytope which matches the relevant facet of the child polytope, recalling that $l$ facets intersect each other in a codimension-$l$ face if their common vertices span $\mathbb{P}^{4-l}$.

Let us now consider the cosmological polytope associated to a two-site graph with $4$ edges, two of which are tadpoles on the two different sites

\begin{wrapfigure}{l}{5.5cm}
 \centering
 \begin{tikzpicture}[line join = round, line cap = round, ball/.style = {circle, draw, align=center, anchor=north, inner sep=0}, 
                     axis/.style={very thick, ->, >=stealth'}, pile/.style={thick, ->, >=stealth', shorten <=2pt, shorten>=2pt}, every node/.style={color=black}]
 \begin{scope}
  \coordinate (LC) at (0,0);
  \coordinate[label=right:{$x_1$}] (x1) at (-1.25cm,0);
  \coordinate[label=left:{$x_2$}] (x2) at (+1.25cm,0);
  \coordinate[label=above:{$y_{12}$}] (y12) at ($(LC)+(0,1.25cm)$);
  \coordinate[label=below:{$y_{21}$}] (y21) at ($(LC)-(0,1.25cm)$);
  \coordinate (t1) at ($(x1)-(.325cm,0)$);
  \coordinate (t2) at ($(x2)+(.325cm,0)$);
  \coordinate[label=left:{$h_1$}] (h1) at ($(t1)-(.325cm,0)$);
  \coordinate[label=right:{$h_2$}] (h2) at ($(t2)+(.325cm,0)$);
  
  \draw[very thick] (LC) circle (1.25cm);
  \draw[very thick] (t1) circle (.325cm);
  \draw[very thick] (t2) circle (.325cm);
  \draw[fill] (x1) circle (2pt);
  \draw[fill] (x2) circle (2pt);
 \end{scope}
 \end{tikzpicture}
\end{wrapfigure}

This cosmological polytope lives in $\mathbb{P}^{5}$ and it is the convex hull of the following $10$ vertices:
\begin{equation}\label{eq:CPexV}
 \begin{split}
  &\left\{
    \mathbf{x}_1-\mathbf{y}_{12}+\mathbf{x}_2,\:\mathbf{x}_1+\mathbf{y}_{12}-\mathbf{x}_2,\:-\mathbf{x}_1+\mathbf{y}_{12}+\mathbf{x}_2,
   \right.\\
  &\hspace{.25cm}\mathbf{x}_1-\mathbf{y}_{21}+\mathbf{x}_2,\:\mathbf{x}_1+\mathbf{y}_{21}-\mathbf{x}_2,\:-\mathbf{x}_1+\mathbf{y}_{21}+\mathbf{x}_2,\\
  &\left.
    \hspace{.25cm}2\mathbf{x}_1-\mathbf{h}_1,\;\mathbf{h}_1,\qquad 2\mathbf{x}_2-\mathbf{h}_2,\;\mathbf{h}_2
   \right\}
 \end{split}
\end{equation}
where the first two lines are the vertices of the triangles and the last one the ones of the two segments which are intersected to generate it. We label them as $Z_{a}$ $(a\,=\,1,\dots,10)$ in the same order as they appear in \eqref{eq:CPexV}. The weights on the graph are the local coordinates $\mathcal{Y}\,:=\,(x_1,\,y_{12},\,y_{21},\,x_2,\,h_1,\,h_2)\,\in\,\mathbb{P}^5$ corresponding to the collection of vectors of midpoints for both the generating triangles and segments and the non-intersectable vertex for the segments, as a basis for $\mathbb{R}^6$. This cosmological polytope has $16$ facets, whose hyperplanes $\mathcal{W}$ are given as $\mathcal{Y}^{I}\mathcal{W}_I$ by taking all the possible subgraphs and associating to them the sum of the weights of the vertex plus the sum of the weights of those edges which depart from the vertices of the subgraph but are not contained in the subgraph:
\begin{align*}\label{eq:CPfct}
    &
    \begin{tikzpicture}[ball/.style = {circle, draw, align=center, anchor=north, inner sep=0}, cross/.style={cross out, draw, minimum size=2*(#1-\pgflinewidth), inner sep=0pt, outer sep=0pt}]
     \begin{scope}[scale={.675}, transform shape]
      \coordinate (LC) at (0,0);
      \coordinate[label=right:{$x_1$}] (x1) at (-1.25cm,0);
      \coordinate[label=left:{$x_2$}] (x2) at (+1.25cm,0);
      \coordinate[label=above:{$y_{12}$}] (y12) at ($(LC)+(0,1.25cm)$);
      \coordinate[label=below:{$y_{21}$}] (y21) at ($(LC)-(0,1.25cm)$);
      \coordinate (t1) at ($(x1)-(.325cm,0)$);
      \coordinate (t2) at ($(x2)+(.325cm,0)$);
      \coordinate[label=left:{$h_1$}] (h1) at ($(t1)-(.325cm,0)$);
      \coordinate[label=right:{$h_2$}] (h2) at ($(t2)+(.325cm,0)$);
      \draw[very thick] (LC) circle (1.25cm);
      \draw[very thick] (t1) circle (.325cm);
      \draw[very thick] (t2) circle (.325cm);
      \draw[fill] (x1) circle (2pt);
      \draw[fill] (x2) circle (2pt); 
      \coordinate (c1) at ($(LC)+(0,1.25cm)$);
      \coordinate (c2) at ($(LC)-(0,1.25cm)$);
      \coordinate (c3) at ($(t1)-(.325cm,0)$);
      \coordinate (c4) at ($(t2)+(.325cm,0)$);
      \node[ultra thick, cross=6pt, rotate=0, color=blue] at (c1) {};
      \node[ultra thick, cross=6pt, rotate=0, color=blue] at (c2) {};
      \node[ultra thick, cross=6pt, rotate=0, color=blue] at (c3) {};
      \node[ultra thick, cross=6pt, rotate=0, color=blue] at (c4) {};
      \coordinate (HP) at ($(LC)-(0,2.25cm)$);
      \node[align=center] (eqH) at (HP) {$\displaystyle\langle\mathcal{Y}2358(10)\rangle\,=\,0$ \\
                                         $\displaystyle(x_1+x_2\,=\,0)$};
     \end{scope}
     \begin{scope}[shift={(4,0)}, scale={.675}, transform shape]
      \coordinate (LC) at (0,0);
      \coordinate[label=right:{$x_1$}] (x1) at (-1.25cm,0);
      \coordinate[label=left:{$x_2$}] (x2) at (+1.25cm,0);
      \coordinate[label=above:{$y_{12}$}] (y12) at ($(LC)+(0,1.25cm)$);
      \coordinate[label=below:{$y_{21}$}] (y21) at ($(LC)-(0,1.25cm)$);
      \coordinate (t1) at ($(x1)-(.325cm,0)$);
      \coordinate (t2) at ($(x2)+(.325cm,0)$);
      \coordinate[label=left:{$h_1$}] (h1) at ($(t1)-(.325cm,0)$);
      \coordinate[label=right:{$h_2$}] (h2) at ($(t2)+(.325cm,0)$);
      \draw[very thick] (LC) circle (1.25cm);
      \draw[very thick] (t1) circle (.325cm);
      \draw[very thick] (t2) circle (.325cm);
      \draw[fill] (x1) circle (2pt);
      \draw[fill] (x2) circle (2pt); 
      \coordinate (c1) at ($(LC)+(0,1.25cm)$);
      \coordinate (c2) at ($(LC)-(0,1.25cm)$);
      \coordinate (c3) at ($(x1)+(-.125,+.175)$);
      \coordinate (c4) at ($(x1)+(-.125,-.175)$);
      \coordinate (c5) at ($(x2)+(+.125,+.175)$);
      \coordinate (c6) at ($(x2)+(+.125,-.175)$);
      \node[ultra thick, cross=6pt, rotate=0, color=blue] at (c1) {};
      \node[ultra thick, cross=6pt, rotate=0, color=blue] at (c2) {};
      \node[ultra thick, cross=6pt, rotate=0, color=blue] at (c3) {};
      \node[ultra thick, cross=6pt, rotate=0, color=blue] at (c4) {};
      \node[ultra thick, cross=6pt, rotate=0, color=blue] at (c5) {};
      \node[ultra thick, cross=6pt, rotate=0, color=blue] at (c6) {};
      \coordinate (HP) at ($(LC)-(0,2.25cm)$);
      \node[align=center] (eqH) at (HP) {$\displaystyle\langle\mathcal{Y}23579\rangle\,=\,0$ \\
                                         $\displaystyle(x_1+x_2+2h_1+2h_2\,=\,0)$};
     \end{scope}
     \begin{scope}[shift={(8,0)}, scale={.675}, transform shape]
      \coordinate (LC) at (0,0);
      \coordinate[label=right:{$x_1$}] (x1) at (-1.25cm,0);
      \coordinate[label=left:{$x_2$}] (x2) at (+1.25cm,0);
      \coordinate[label=above:{$y_{12}$}] (y12) at ($(LC)+(0,1.25cm)$);
      \coordinate[label=below:{$y_{21}$}] (y21) at ($(LC)-(0,1.25cm)$);
      \coordinate (t1) at ($(x1)-(.325cm,0)$);
      \coordinate (t2) at ($(x2)+(.325cm,0)$);
      \coordinate[label=left:{$h_1$}] (h1) at ($(t1)-(.325cm,0)$);
      \coordinate[label=right:{$h_2$}] (h2) at ($(t2)+(.325cm,0)$);
      \draw[very thick] (LC) circle (1.25cm);
      \draw[very thick] (t1) circle (.325cm);
      \draw[very thick] (t2) circle (.325cm);
      \draw[fill] (x1) circle (2pt);
      \draw[fill] (x2) circle (2pt); 
      \coordinate (c1) at ($(LC)+(0,1.25cm)$);
      \coordinate (c2) at ($(LC)-(0,1.25cm)$);
      \coordinate (c3) at ($(x1)+(-.125,+.175)$);
      \coordinate (c4) at ($(x1)+(-.125,-.175)$);
      \coordinate (c5) at ($(t2)+(.325cm,0)$);
      \node[ultra thick, cross=6pt, rotate=0, color=blue] at (c1) {};
      \node[ultra thick, cross=6pt, rotate=0, color=blue] at (c2) {};
      \node[ultra thick, cross=6pt, rotate=0, color=blue] at (c3) {};
      \node[ultra thick, cross=6pt, rotate=0, color=blue] at (c4) {};
      \node[ultra thick, cross=6pt, rotate=0, color=blue] at (c5) {};
      \coordinate (HP) at ($(LC)-(0,2.25cm)$);
      \node[align=center] (eqH) at (HP) {$\displaystyle\langle\mathcal{Y}2357(10)\rangle\,=\,0$ \\
                                         $\displaystyle(x_1+x_2+2h_1\,=\,0)$};
     \end{scope}
     \begin{scope}[shift={(12,0)}, scale={.675}, transform shape]
      \coordinate (LC) at (0,0);
      \coordinate[label=right:{$x_1$}] (x1) at (-1.25cm,0);
      \coordinate[label=left:{$x_2$}] (x2) at (+1.25cm,0);
      \coordinate[label=above:{$y_{12}$}] (y12) at ($(LC)+(0,1.25cm)$);
      \coordinate[label=below:{$y_{21}$}] (y21) at ($(LC)-(0,1.25cm)$);
      \coordinate (t1) at ($(x1)-(.325cm,0)$);
      \coordinate (t2) at ($(x2)+(.325cm,0)$);
      \coordinate[label=left:{$h_1$}] (h1) at ($(t1)-(.325cm,0)$);
      \coordinate[label=right:{$h_2$}] (h2) at ($(t2)+(.325cm,0)$);
      \draw[very thick] (LC) circle (1.25cm);
      \draw[very thick] (t1) circle (.325cm);
      \draw[very thick] (t2) circle (.325cm);
      \draw[fill] (x1) circle (2pt);
      \draw[fill] (x2) circle (2pt); 
      \coordinate (c1) at ($(LC)+(0,1.25cm)$);
      \coordinate (c2) at ($(LC)-(0,1.25cm)$);
      \coordinate (c3) at ($(t1)-(.325cm,0)$);
      \coordinate (c4) at ($(x2)+(+.125,+.175)$);
      \coordinate (c5) at ($(x2)+(+.125,-.175)$);
      \node[ultra thick, cross=6pt, rotate=0, color=blue] at (c1) {};
      \node[ultra thick, cross=6pt, rotate=0, color=blue] at (c2) {};
      \node[ultra thick, cross=6pt, rotate=0, color=blue] at (c3) {};
      \node[ultra thick, cross=6pt, rotate=0, color=blue] at (c4) {};
      \node[ultra thick, cross=6pt, rotate=0, color=blue] at (c5) {};
      \coordinate (HP) at ($(LC)-(0,2.25cm)$);
      \node[align=center] (eqH) at (HP) {$\displaystyle\langle\mathcal{Y}23589\rangle\,=\,0$ \\
                                         $\displaystyle(x_1+x_2+2h_2\,=\,0)$};
     \end{scope}
    \end{tikzpicture}
    \\
    &
    \begin{tikzpicture}[ball/.style = {circle, draw, align=center, anchor=north, inner sep=0}, cross/.style={cross out, draw, minimum size=2*(#1-\pgflinewidth), inner sep=0pt, outer sep=0pt}]
     \begin{scope}[scale={.675}, transform shape]
      \coordinate (LC) at (0,0);
      \coordinate[label=right:{$x_1$}] (x1) at (-1.25cm,0);
      \coordinate[label=left:{$x_2$}] (x2) at (+1.25cm,0);
      \coordinate[label=above:{$y_{12}$}] (y12) at ($(LC)+(0,1.25cm)$);
      \coordinate[label=below:{$y_{21}$}] (y21) at ($(LC)-(0,1.25cm)$);
      \coordinate (t1) at ($(x1)-(.325cm,0)$);
      \coordinate (t2) at ($(x2)+(.325cm,0)$);
      \coordinate[label=left:{$h_1$}] (h1) at ($(t1)-(.325cm,0)$);
      \coordinate[label=right:{$h_2$}] (h2) at ($(t2)+(.325cm,0)$);
      \draw[very thick] (LC) circle (1.25cm);
      \draw[very thick] (t1) circle (.325cm);
      \draw[very thick] (t2) circle (.325cm);
      \draw[fill] (x1) circle (2pt);
      \draw[fill] (x2) circle (2pt); 
      \coordinate (c1a) at ($(x1)+(+.1125,+.4)$);
      \coordinate (c1b) at ($(x2)+(-.1125,+.4)$);
      \coordinate (c2) at ($(LC)-(0,1.25cm)$);
      \coordinate (c3) at ($(t1)-(.325cm,0)$);
      \coordinate (c4) at ($(t2)+(.325cm,0)$);
      \node[ultra thick, cross=6pt, rotate=0, color=blue] at (c1a) {};
      \node[ultra thick, cross=6pt, rotate=0, color=blue] at (c1b) {};
      \node[ultra thick, cross=6pt, rotate=0, color=blue] at (c2) {};
      \node[ultra thick, cross=6pt, rotate=0, color=blue] at (c3) {};
      \node[ultra thick, cross=6pt, rotate=0, color=blue] at (c4) {};
      \coordinate (HP) at ($(LC)-(0,2.25cm)$);
      \node[align=center] (eqH) at (HP) {$\displaystyle\langle\mathcal{Y}1568(10)\rangle\,=\,0$ \\
                                         $\displaystyle(x_1+x_2+2y_{12}\,=\,0)$};
     \end{scope}
     \begin{scope}[shift={(4,0)}, scale={.675}, transform shape]
      \coordinate (LC) at (0,0);
      \coordinate[label=right:{$x_1$}] (x1) at (-1.25cm,0);
      \coordinate[label=left:{$x_2$}] (x2) at (+1.25cm,0);
      \coordinate[label=above:{$y_{12}$}] (y12) at ($(LC)+(0,1.25cm)$);
      \coordinate[label=below:{$y_{21}$}] (y21) at ($(LC)-(0,1.25cm)$);
      \coordinate (t1) at ($(x1)-(.325cm,0)$);
      \coordinate (t2) at ($(x2)+(.325cm,0)$);
      \coordinate[label=left:{$h_1$}] (h1) at ($(t1)-(.325cm,0)$);
      \coordinate[label=right:{$h_2$}] (h2) at ($(t2)+(.325cm,0)$);
      \draw[very thick] (LC) circle (1.25cm);
      \draw[very thick] (t1) circle (.325cm);
      \draw[very thick] (t2) circle (.325cm);
      \draw[fill] (x1) circle (2pt);
      \draw[fill] (x2) circle (2pt); 
      \coordinate (c1a) at ($(x1)+(+.1125,+.4)$);
      \coordinate (c1b) at ($(x2)+(-.1125,+.4)$);
      \coordinate (c2) at ($(LC)-(0,1.25cm)$);
      \coordinate (c3) at ($(x1)+(-.125,+.175)$);
      \coordinate (c4) at ($(x1)+(-.125,-.175)$);
      \coordinate (c5) at ($(x2)+(+.125,+.175)$);
      \coordinate (c6) at ($(x2)+(+.125,-.175)$);
      \node[ultra thick, cross=6pt, rotate=0, color=blue] at (c1a) {};
      \node[ultra thick, cross=6pt, rotate=0, color=blue] at (c1b) {};
      \node[ultra thick, cross=6pt, rotate=0, color=blue] at (c2) {};
      \node[ultra thick, cross=6pt, rotate=0, color=blue] at (c3) {};
      \node[ultra thick, cross=6pt, rotate=0, color=blue] at (c4) {};
      \node[ultra thick, cross=6pt, rotate=0, color=blue] at (c5) {};
      \node[ultra thick, cross=6pt, rotate=0, color=blue] at (c6) {};
      \coordinate (HP) at ($(LC)-(0,2.25cm)$);
      \node[align=center] (eqH) at (HP) {$\displaystyle\langle\mathcal{Y}15679\rangle\,=\,0$ \\
                                         $\displaystyle(x_1++2y_{12}+x_2+2h_1+2h_2\,=\,0)$};
     \end{scope}
     \begin{scope}[shift={(8,0)}, scale={.675}, transform shape]
      \coordinate (LC) at (0,0);
      \coordinate[label=right:{$x_1$}] (x1) at (-1.25cm,0);
      \coordinate[label=left:{$x_2$}] (x2) at (+1.25cm,0);
      \coordinate[label=above:{$y_{12}$}] (y12) at ($(LC)+(0,1.25cm)$);
      \coordinate[label=below:{$y_{21}$}] (y21) at ($(LC)-(0,1.25cm)$);
      \coordinate (t1) at ($(x1)-(.325cm,0)$);
      \coordinate (t2) at ($(x2)+(.325cm,0)$);
      \coordinate[label=left:{$h_1$}] (h1) at ($(t1)-(.325cm,0)$);
      \coordinate[label=right:{$h_2$}] (h2) at ($(t2)+(.325cm,0)$);
      \draw[very thick] (LC) circle (1.25cm);
      \draw[very thick] (t1) circle (.325cm);
      \draw[very thick] (t2) circle (.325cm);
      \draw[fill] (x1) circle (2pt);
      \draw[fill] (x2) circle (2pt); 
      \coordinate (c1a) at ($(x1)+(+.1125,+.4)$);
      \coordinate (c1b) at ($(x2)+(-.1125,+.4)$);
      \coordinate (c2) at ($(LC)-(0,1.25cm)$);
      \coordinate (c3) at ($(x1)+(-.125,+.175)$);
      \coordinate (c4) at ($(x1)+(-.125,-.175)$);
      \coordinate (c5) at ($(t2)+(.325cm,0)$);
      \node[ultra thick, cross=6pt, rotate=0, color=blue] at (c1a) {};
      \node[ultra thick, cross=6pt, rotate=0, color=blue] at (c1b) {};
      \node[ultra thick, cross=6pt, rotate=0, color=blue] at (c2) {};
      \node[ultra thick, cross=6pt, rotate=0, color=blue] at (c3) {};
      \node[ultra thick, cross=6pt, rotate=0, color=blue] at (c4) {};
      \node[ultra thick, cross=6pt, rotate=0, color=blue] at (c5) {};
      \coordinate (HP) at ($(LC)-(0,2.25cm)$);
      \node[align=center] (eqH) at (HP) {$\displaystyle\langle\mathcal{Y}1567(10)\rangle\,=\,0$ \\
                                         $\displaystyle(x_1+2y_{12}+x_2+2h_1\,=\,0)$};
     \end{scope}
     \begin{scope}[shift={(12,0)}, scale={.675}, transform shape]
      \coordinate (LC) at (0,0);
      \coordinate[label=right:{$x_1$}] (x1) at (-1.25cm,0);
      \coordinate[label=left:{$x_2$}] (x2) at (+1.25cm,0);
      \coordinate[label=above:{$y_{12}$}] (y12) at ($(LC)+(0,1.25cm)$);
      \coordinate[label=below:{$y_{21}$}] (y21) at ($(LC)-(0,1.25cm)$);
      \coordinate (t1) at ($(x1)-(.325cm,0)$);
      \coordinate (t2) at ($(x2)+(.325cm,0)$);
      \coordinate[label=left:{$h_1$}] (h1) at ($(t1)-(.325cm,0)$);
      \coordinate[label=right:{$h_2$}] (h2) at ($(t2)+(.325cm,0)$);
      \draw[very thick] (LC) circle (1.25cm);
      \draw[very thick] (t1) circle (.325cm);
      \draw[very thick] (t2) circle (.325cm);
      \draw[fill] (x1) circle (2pt);
      \draw[fill] (x2) circle (2pt); 
      \coordinate (c1a) at ($(x1)+(+.1125,+.4)$);
      \coordinate (c1b) at ($(x2)+(-.1125,+.4)$);
      \coordinate (c2) at ($(LC)-(0,1.25cm)$);
      \coordinate (c3) at ($(t1)-(.325cm,0)$);
      \coordinate (c4) at ($(x2)+(+.125,+.175)$);
      \coordinate (c5) at ($(x2)+(+.125,-.175)$);
      \node[ultra thick, cross=6pt, rotate=0, color=blue] at (c1a) {};
      \node[ultra thick, cross=6pt, rotate=0, color=blue] at (c1b) {};
      \node[ultra thick, cross=6pt, rotate=0, color=blue] at (c2) {};
      \node[ultra thick, cross=6pt, rotate=0, color=blue] at (c3) {};
      \node[ultra thick, cross=6pt, rotate=0, color=blue] at (c4) {};
      \node[ultra thick, cross=6pt, rotate=0, color=blue] at (c5) {};
      \coordinate (HP) at ($(LC)-(0,2.25cm)$);
      \node[align=center] (eqH) at (HP) {$\displaystyle\langle\mathcal{Y}15689\rangle\,=\,0$ \\
                                         $\displaystyle(x_1+2y_{12}+x_2+2h_2\,=\,0)$};
     \end{scope}
    \end{tikzpicture}
    \\
    &
    \begin{tikzpicture}[ball/.style = {circle, draw, align=center, anchor=north, inner sep=0}, cross/.style={cross out, draw, minimum size=2*(#1-\pgflinewidth), inner sep=0pt, outer sep=0pt}]
     \begin{scope}[scale={.675}, transform shape]
      \coordinate (LC) at (0,0);
      \coordinate[label=right:{$x_1$}] (x1) at (-1.25cm,0);
      \coordinate[label=left:{$x_2$}] (x2) at (+1.25cm,0);
      \coordinate[label=above:{$y_{12}$}] (y12) at ($(LC)+(0,1.25cm)$);
      \coordinate[label=below:{$y_{21}$}] (y21) at ($(LC)-(0,1.25cm)$);
      \coordinate (t1) at ($(x1)-(.325cm,0)$);
      \coordinate (t2) at ($(x2)+(.325cm,0)$);
      \coordinate[label=left:{$h_1$}] (h1) at ($(t1)-(.325cm,0)$);
      \coordinate[label=right:{$h_2$}] (h2) at ($(t2)+(.325cm,0)$);
      \draw[very thick] (LC) circle (1.25cm);
      \draw[very thick] (t1) circle (.325cm);
      \draw[very thick] (t2) circle (.325cm);
      \draw[fill] (x1) circle (2pt);
      \draw[fill] (x2) circle (2pt); 
      \coordinate (c1) at ($(LC)+(0,1.25cm)$);
      \coordinate (c2a) at ($(x1)+(+.1125,-.4)$);
      \coordinate (c2b) at ($(x2)+(-.1125,-.4)$);
      \coordinate (c3) at ($(t1)-(.325cm,0)$);
      \coordinate (c4) at ($(t2)+(.325cm,0)$);
      \node[ultra thick, cross=6pt, rotate=0, color=blue] at (c1) {};
      \node[ultra thick, cross=6pt, rotate=0, color=blue] at (c2a) {};
      \node[ultra thick, cross=6pt, rotate=0, color=blue] at (c2b) {};
      \node[ultra thick, cross=6pt, rotate=0, color=blue] at (c3) {};
      \node[ultra thick, cross=6pt, rotate=0, color=blue] at (c4) {};
      \coordinate (HP) at ($(LC)-(0,2.25cm)$);
      \node[align=center] (eqH) at (HP) {$\displaystyle\langle\mathcal{Y}2348(10)\rangle\,=\,0$ \\
                                         $\displaystyle(x_1+2y_{21}+x_2\,=\,0)$};
     \end{scope}
     \begin{scope}[shift={(4,0)}, scale={.675}, transform shape]
      \coordinate (LC) at (0,0);
      \coordinate[label=right:{$x_1$}] (x1) at (-1.25cm,0);
      \coordinate[label=left:{$x_2$}] (x2) at (+1.25cm,0);
      \coordinate[label=above:{$y_{12}$}] (y12) at ($(LC)+(0,1.25cm)$);
      \coordinate[label=below:{$y_{21}$}] (y21) at ($(LC)-(0,1.25cm)$);
      \coordinate (t1) at ($(x1)-(.325cm,0)$);
      \coordinate (t2) at ($(x2)+(.325cm,0)$);
      \coordinate[label=left:{$h_1$}] (h1) at ($(t1)-(.325cm,0)$);
      \coordinate[label=right:{$h_2$}] (h2) at ($(t2)+(.325cm,0)$);
      \draw[very thick] (LC) circle (1.25cm);
      \draw[very thick] (t1) circle (.325cm);
      \draw[very thick] (t2) circle (.325cm);
      \draw[fill] (x1) circle (2pt);
      \draw[fill] (x2) circle (2pt); 
      \coordinate (c1) at ($(LC)+(0,1.25cm)$);
      \coordinate (c2a) at ($(x1)+(+.1125,-.4)$);
      \coordinate (c2b) at ($(x2)+(-.1125,-.4)$);
      \coordinate (c3) at ($(x1)+(-.125,+.175)$);
      \coordinate (c4) at ($(x1)+(-.125,-.175)$);
      \coordinate (c5) at ($(x2)+(+.125,+.175)$);
      \coordinate (c6) at ($(x2)+(+.125,-.175)$);
      \node[ultra thick, cross=6pt, rotate=0, color=blue] at (c1) {};
      \node[ultra thick, cross=6pt, rotate=0, color=blue] at (c2a) {};
      \node[ultra thick, cross=6pt, rotate=0, color=blue] at (c2b) {};
      \node[ultra thick, cross=6pt, rotate=0, color=blue] at (c3) {};
      \node[ultra thick, cross=6pt, rotate=0, color=blue] at (c4) {};
      \node[ultra thick, cross=6pt, rotate=0, color=blue] at (c5) {};
      \node[ultra thick, cross=6pt, rotate=0, color=blue] at (c6) {};
      \coordinate (HP) at ($(LC)-(0,2.25cm)$);
      \node[align=center] (eqH) at (HP) {$\displaystyle\langle\mathcal{Y}23479\rangle\,=\,0$ \\
                                         $\displaystyle(x_1+2y_{21}+x_2+2h_1+2h_2\,=\,0)$};
     \end{scope}
     \begin{scope}[shift={(8,0)}, scale={.675}, transform shape]
      \coordinate (LC) at (0,0);
      \coordinate[label=right:{$x_1$}] (x1) at (-1.25cm,0);
      \coordinate[label=left:{$x_2$}] (x2) at (+1.25cm,0);
      \coordinate[label=above:{$y_{12}$}] (y12) at ($(LC)+(0,1.25cm)$);
      \coordinate[label=below:{$y_{21}$}] (y21) at ($(LC)-(0,1.25cm)$);
      \coordinate (t1) at ($(x1)-(.325cm,0)$);
      \coordinate (t2) at ($(x2)+(.325cm,0)$);
      \coordinate[label=left:{$h_1$}] (h1) at ($(t1)-(.325cm,0)$);
      \coordinate[label=right:{$h_2$}] (h2) at ($(t2)+(.325cm,0)$);
      \draw[very thick] (LC) circle (1.25cm);
      \draw[very thick] (t1) circle (.325cm);
      \draw[very thick] (t2) circle (.325cm);
      \draw[fill] (x1) circle (2pt);
      \draw[fill] (x2) circle (2pt); 
      \coordinate (c1) at ($(LC)+(0,1.25cm)$);
      \coordinate (c2a) at ($(x1)+(+.1125,-.4)$);
      \coordinate (c2b) at ($(x2)+(-.1125,-.4)$);
      \coordinate (c3) at ($(x1)+(-.125,+.175)$);
      \coordinate (c4) at ($(x1)+(-.125,-.175)$);
      \coordinate (c5) at ($(t2)+(.325cm,0)$);
      \node[ultra thick, cross=6pt, rotate=0, color=blue] at (c1) {};
      \node[ultra thick, cross=6pt, rotate=0, color=blue] at (c2a) {};
      \node[ultra thick, cross=6pt, rotate=0, color=blue] at (c2b) {};
      \node[ultra thick, cross=6pt, rotate=0, color=blue] at (c3) {};
      \node[ultra thick, cross=6pt, rotate=0, color=blue] at (c4) {};
      \node[ultra thick, cross=6pt, rotate=0, color=blue] at (c5) {};
      \coordinate (HP) at ($(LC)-(0,2.25cm)$);
      \node[align=center] (eqH) at (HP) {$\displaystyle\langle\mathcal{Y}2347(10)\rangle\,=\,0$ \\
                                         $\displaystyle(x_1+y_{21}+x_2+2h_1\,=\,0)$};
     \end{scope}
     \begin{scope}[shift={(12,0)}, scale={.675}, transform shape]
      \coordinate (LC) at (0,0);
      \coordinate[label=right:{$x_1$}] (x1) at (-1.25cm,0);
      \coordinate[label=left:{$x_2$}] (x2) at (+1.25cm,0);
      \coordinate[label=above:{$y_{12}$}] (y12) at ($(LC)+(0,1.25cm)$);
      \coordinate[label=below:{$y_{21}$}] (y21) at ($(LC)-(0,1.25cm)$);
      \coordinate (t1) at ($(x1)-(.325cm,0)$);
      \coordinate (t2) at ($(x2)+(.325cm,0)$);
      \coordinate[label=left:{$h_1$}] (h1) at ($(t1)-(.325cm,0)$);
      \coordinate[label=right:{$h_2$}] (h2) at ($(t2)+(.325cm,0)$);
      \draw[very thick] (LC) circle (1.25cm);
      \draw[very thick] (t1) circle (.325cm);
      \draw[very thick] (t2) circle (.325cm);
      \draw[fill] (x1) circle (2pt);
      \draw[fill] (x2) circle (2pt); 
      \coordinate (c1) at ($(LC)+(0,1.25cm)$);
      \coordinate (c2a) at ($(x1)+(+.1125,-.4)$);
      \coordinate (c2b) at ($(x2)+(-.1125,-.4)$);
      \coordinate (c3) at ($(t1)-(.325cm,0)$);
      \coordinate (c4) at ($(x2)+(+.125,+.175)$);
      \coordinate (c5) at ($(x2)+(+.125,-.175)$);
      \node[ultra thick, cross=6pt, rotate=0, color=blue] at (c1) {};
      \node[ultra thick, cross=6pt, rotate=0, color=blue] at (c2a) {};
      \node[ultra thick, cross=6pt, rotate=0, color=blue] at (c2b) {};
      \node[ultra thick, cross=6pt, rotate=0, color=blue] at (c3) {};
      \node[ultra thick, cross=6pt, rotate=0, color=blue] at (c4) {};
      \node[ultra thick, cross=6pt, rotate=0, color=blue] at (c5) {};
      \coordinate (HP) at ($(LC)-(0,2.25cm)$);
      \node[align=center] (eqH) at (HP) {$\displaystyle\langle\mathcal{Y}23489\rangle\,=\,0$ \\
                                         $\displaystyle(x_1+x_2+2y_{21}+2h_2\,=\,0)$};
     \end{scope}
    \end{tikzpicture}
    \\
    &
    \begin{tikzpicture}[ball/.style = {circle, draw, align=center, anchor=north, inner sep=0}, cross/.style={cross out, draw, minimum size=2*(#1-\pgflinewidth), inner sep=0pt, outer sep=0pt}]
     \begin{scope}[scale={.675}, transform shape]
      \coordinate (LC) at (0,0);
      \coordinate[label=right:{$x_1$}] (x1) at (-1.25cm,0);
      \coordinate[label=left:{$x_2$}] (x2) at (+1.25cm,0);
      \coordinate[label=above:{$y_{12}$}] (y12) at ($(LC)+(0,1.25cm)$);
      \coordinate[label=below:{$y_{21}$}] (y21) at ($(LC)-(0,1.25cm)$);
      \coordinate (t1) at ($(x1)-(.325cm,0)$);
      \coordinate (t2) at ($(x2)+(.325cm,0)$);
      \coordinate[label=left:{$h_1$}] (h1) at ($(t1)-(.325cm,0)$);
      \coordinate[label=right:{$h_2$}] (h2) at ($(t2)+(.325cm,0)$);
      \draw[very thick] (LC) circle (1.25cm);
      \draw[very thick] (t1) circle (.325cm);
      \draw[very thick] (t2) circle (.325cm);
      \draw[fill] (x1) circle (2pt);
      \draw[fill] (x2) circle (2pt); 
      \coordinate (c1) at ($(x1)+(+.1125,+.4)$);
      \coordinate (c2) at ($(x1)+(+.1125,-.4)$);
      \coordinate (c3) at ($(t1)-(.325cm,0)$);
      \node[ultra thick, cross=6pt, rotate=0, color=blue] at (c1) {};
      \node[ultra thick, cross=6pt, rotate=0, color=blue] at (c2) {};
      \node[ultra thick, cross=6pt, rotate=0, color=blue] at (c3) {};
      \coordinate (HP) at ($(LC)-(0,2.25cm)$);
      \node[align=center] (eqH) at (HP) {$\displaystyle\langle\mathcal{Y}13489\rangle\,=\,0$ \\
                                         $\displaystyle(x_1+y_{12}+y_{21}\,=\,0)$};
     \end{scope}
     \begin{scope}[shift={(4,0)}, scale={.675}, transform shape]
      \coordinate (LC) at (0,0);
      \coordinate[label=right:{$x_1$}] (x1) at (-1.25cm,0);
      \coordinate[label=left:{$x_2$}] (x2) at (+1.25cm,0);
      \coordinate[label=above:{$y_{12}$}] (y12) at ($(LC)+(0,1.25cm)$);
      \coordinate[label=below:{$y_{21}$}] (y21) at ($(LC)-(0,1.25cm)$);
      \coordinate (t1) at ($(x1)-(.325cm,0)$);
      \coordinate (t2) at ($(x2)+(.325cm,0)$);
      \coordinate[label=left:{$h_1$}] (h1) at ($(t1)-(.325cm,0)$);
      \coordinate[label=right:{$h_2$}] (h2) at ($(t2)+(.325cm,0)$);
      \draw[very thick] (LC) circle (1.25cm);
      \draw[very thick] (t1) circle (.325cm);
      \draw[very thick] (t2) circle (.325cm);
      \draw[fill] (x1) circle (2pt);
      \draw[fill] (x2) circle (2pt); 
      \coordinate (c1) at ($(x1)+(+.1125,+.4)$);
      \coordinate (c2) at ($(x1)+(+.1125,-.4)$);
      \coordinate (c3) at ($(x1)+(-.125,+.175)$);
      \coordinate (c4) at ($(x1)+(-.125,-.175)$);
      \node[ultra thick, cross=6pt, rotate=0, color=blue] at (c1) {};
      \node[ultra thick, cross=6pt, rotate=0, color=blue] at (c2) {};
      \node[ultra thick, cross=6pt, rotate=0, color=blue] at (c3) {};
      \node[ultra thick, cross=6pt, rotate=0, color=blue] at (c4) {};
      \coordinate (HP) at ($(LC)-(0,2.25cm)$);
      \node[align=center] (eqH) at (HP) {$\displaystyle\langle\mathcal{Y}13479\rangle\,=\,0$ \\
                                         $\displaystyle(x_1+y_{12}+y_{21}+2h_1\,=\,0)$};
     \end{scope}
     \begin{scope}[shift={(8,0)}, scale={.675}, transform shape]
      \coordinate (LC) at (0,0);
      \coordinate[label=right:{$x_1$}] (x1) at (-1.25cm,0);
      \coordinate[label=left:{$x_2$}] (x2) at (+1.25cm,0);
      \coordinate[label=above:{$y_{12}$}] (y12) at ($(LC)+(0,1.25cm)$);
      \coordinate[label=below:{$y_{21}$}] (y21) at ($(LC)-(0,1.25cm)$);
      \coordinate (t1) at ($(x1)-(.325cm,0)$);
      \coordinate (t2) at ($(x2)+(.325cm,0)$);
      \coordinate[label=left:{$h_1$}] (h1) at ($(t1)-(.325cm,0)$);
      \coordinate[label=right:{$h_2$}] (h2) at ($(t2)+(.325cm,0)$);
      \draw[very thick] (LC) circle (1.25cm);
      \draw[very thick] (t1) circle (.325cm);
      \draw[very thick] (t2) circle (.325cm);
      \draw[fill] (x1) circle (2pt);
      \draw[fill] (x2) circle (2pt); 
      \coordinate (c1) at ($(x2)+(-.1125,+.4)$);
      \coordinate (c2) at ($(x2)+(-.1125,-.4)$);
      \coordinate (c3) at ($(t2)+(.325cm,0)$);
      \node[ultra thick, cross=6pt, rotate=0, color=blue] at (c1) {};
      \node[ultra thick, cross=6pt, rotate=0, color=blue] at (c2) {};
      \node[ultra thick, cross=6pt, rotate=0, color=blue] at (c3) {};
      \coordinate (HP) at ($(LC)-(0,2.25cm)$);
      \node[align=center] (eqH) at (HP) {$\displaystyle\langle\mathcal{Y}1247(10)\rangle\,=\,0$ \\
                                         $\displaystyle(x_2+y_{12}+y_{21}\,=\,0)$};
     \end{scope}
     \begin{scope}[shift={(12,0)}, scale={.675}, transform shape]
      \coordinate (LC) at (0,0);
      \coordinate[label=right:{$x_1$}] (x1) at (-1.25cm,0);
      \coordinate[label=left:{$x_2$}] (x2) at (+1.25cm,0);
      \coordinate[label=above:{$y_{12}$}] (y12) at ($(LC)+(0,1.25cm)$);
      \coordinate[label=below:{$y_{21}$}] (y21) at ($(LC)-(0,1.25cm)$);
      \coordinate (t1) at ($(x1)-(.325cm,0)$);
      \coordinate (t2) at ($(x2)+(.325cm,0)$);
      \coordinate[label=left:{$h_1$}] (h1) at ($(t1)-(.325cm,0)$);
      \coordinate[label=right:{$h_2$}] (h2) at ($(t2)+(.325cm,0)$);
      \draw[very thick] (LC) circle (1.25cm);
      \draw[very thick] (t1) circle (.325cm);
      \draw[very thick] (t2) circle (.325cm);
      \draw[fill] (x1) circle (2pt);
      \draw[fill] (x2) circle (2pt); 
      \coordinate (c1) at ($(x2)+(-.1125,+.4)$);
      \coordinate (c2) at ($(x2)+(-.1125,-.4)$);
      \coordinate (c3) at ($(x2)+(+.125,+.175)$);
      \coordinate (c4) at ($(x2)+(+.125,-.175)$);
      \node[ultra thick, cross=6pt, rotate=0, color=blue] at (c1) {};
      \node[ultra thick, cross=6pt, rotate=0, color=blue] at (c2) {};
      \node[ultra thick, cross=6pt, rotate=0, color=blue] at (c3) {};
      \node[ultra thick, cross=6pt, rotate=0, color=blue] at (c4) {};
      \coordinate (HP) at ($(LC)-(0,2.25cm)$);
      \node[align=center] (eqH) at (HP) {$\displaystyle\langle\mathcal{Y}12479\rangle\,=\,0$ \\
                                         $\displaystyle(x_1+y_{12}+y_{21}+2h_2\,=\,0)$};
     \end{scope}
    \end{tikzpicture}
\end{align*}
There are three {\it natural} hyperplanes where to restrict this cosmological polytope, two being of codimension-$1$ and one of codimension-$2$:
\begin{equation}\label{eq:CPhr}
 \begin{split}
   &\mathcal{H}_1\: :=\:\left\{\mathcal{Y}\,\in\,\mathbb{P}^5\,\big|\,\mathfrak{h}_1(\mathcal{Y})\, :=\,\mathcal{Y}\cdot\tilde{\mathbf{h}}_1\,=\,h_1\,=\,0,\;\mathbf{h}_1\cdot\tilde{\mathbf{h}}_1\,=\,1,\:
       (\mathbf{x}_s,\,\mathbf{y}_e,\,\mathbf{h}_2)\cdot\tilde{\mathbf{h}}_1\,=\,0\right\},\\
   &\mathcal{H}_2\: :=\:\left\{\mathcal{Y}\,\in\,\mathbb{P}^5\,\big|\,\mathfrak{h}_2(\mathcal{Y})\, :=\,\mathcal{Y}\cdot\tilde{\mathbf{h}}_2\,=\,h_2\,=\,0,\;\mathbf{h}_2\cdot\tilde{\mathbf{h}}_2\,=\,1,\:
       (\mathbf{x}_s,\,\mathbf{y}_e,\,\mathbf{h}_1)\cdot\tilde{\mathbf{h}}_2\,=\,0\right\},\\
   &\mathcal{H}_{12}\: :=\:\left\{\mathcal{Y}\,\in\,\mathbb{P}^5\,\bigg|\,
    \begin{array}{l}
     \mathfrak{h}_1(\mathcal{Y})\, :=\,\mathcal{Y}\cdot\tilde{\mathbf{h}}_1\,=\,h_1\,=\,0,\:\mathbf{h}_1\cdot\tilde{\mathbf{h}}_1\,=\,1,\:
      (\mathbf{x}_s,\,\mathbf{y}_e,\,\mathbf{h}_2)\cdot\tilde{\mathbf{h}}_1\,=\,0\\
     \mathfrak{h}_2(\mathcal{Y})\, :=\,\mathcal{Y}\cdot\tilde{\mathbf{h}}_2\,=\,h_2\,=\,0,\:\mathbf{h}_2\cdot\tilde{\mathbf{h}}_2\,=\,1,\:
      (\mathbf{x}_s,\,\mathbf{y}_e,\,\mathbf{h}_1)\cdot\tilde{\mathbf{h}}_2\,=\,0
    \end{array}
    \right\}
 \end{split}
\end{equation}
where the equation $\mathcal{Y}\,\cdot\,\mathcal{W}\,=\,0$ identifying each facet is indicated both projectively and in our preferred local coordinate system below each graph -- as explained in Section \ref{subsec:CP}, the markings on the graphs indicate those vertices that do not belong to the facet, and the double marking in the tadpole subgraphs close to its side indicates (the absence of) the very same vertex.

There is a number of information about the resulting covariant forms that can be deduced from the graphs without knowing the explicit expression for the canonical form of the cosmological polytope we are restricting. Let us discuss in detail the covariant restriction of the canonical form of $(\mathbb{P}^5,\,\mathcal{P})$ onto $\mathcal{H}_1$ and $\mathcal{H}_{12}$ -- indeed, the analysis of the covariant restriction onto $\mathcal{H}_2$ follows from the former.

Let us begin with the restriction onto $\mathcal{H}_1$. Being a codimension-one hyperplane, the covariant form obtained has degree-$1$, with at most double poles. The first information we can predict is the child polytope itself $(\mathbb{P}^{4},\,\mathcal{P}_{\mathcal{H}_1})$, with $\mathcal{P}_{\mathcal{H}_1}\,=\,\mathcal{P}\cap\mathcal{H}_1$  as well as exactly which poles of the canonical form of the parent polytope collapses to generate double poles in the covariant form of the child polytope, {\it i.e.} which facets intersect $\mathcal{H}_1$ in the same subspace. 

\begin{wrapfigure}{l}{4cm}
 \centering
 \begin{tikzpicture}[line join = round, line cap = round, ball/.style = {circle, draw, align=center, anchor=north, inner sep=0}, 
                     axis/.style={very thick, ->, >=stealth'}, pile/.style={thick, ->, >=stealth', shorten <=2pt, shorten>=2pt}, every node/.style={color=black}]
 \begin{scope}[scale={.75}, transform shape]
  \coordinate (LC) at (0,0);
  \coordinate[label=right:{$x_1$}] (x1) at (-1.25cm,0);
  \coordinate[label=left:{$x_2$}] (x2) at (+1.25cm,0);
  \coordinate[label=above:{$y_{12}$}] (y12) at ($(LC)+(0,1.25cm)$);
  \coordinate[label=below:{$y_{21}$}] (y21) at ($(LC)-(0,1.25cm)$);
  \coordinate (t1) at ($(x1)-(.325cm,0)$);
  \coordinate (t2) at ($(x2)+(.325cm,0)$);
  \coordinate[label=right:{$h_2$}] (h2) at ($(t2)+(.325cm,0)$);
  
  \draw[very thick] (LC) circle (1.25cm);
  \draw[very thick] (t2) circle (.325cm);
  \draw[fill] (x1) circle (2pt);
  \draw[fill] (x2) circle (2pt);
 \end{scope}
 \end{tikzpicture}
\end{wrapfigure}

The crucial observation is that on the restriction onto $\mathcal{H}_1$, the vertices $Z_7\,:=\,2\mathbf{x}_1-\mathbf{h}_1$ and $Z_8\,:=\:\mathbf{h}_1$ of $\mathcal{P}$
are not on $\mathcal{H}_1$. Hence, the child polytope $(\mathbb{P}^4,\,\mathcal{P}_{\mathcal{H}_1})$ is related to graph which is the one associated to the parent polytope but without the tadpole whose edge has weight $h_1$. Consequently, the facets intersecting $\mathcal{H}_1$ in the same subspace have the structure $\langle\mathcal{Y}7ijkl\rangle$ and $\langle\mathcal{Y}8ijkl\rangle$ for fixed $Z_i,\,Z_j,\,Z_k,\,Z_l$. From the facet structure listed above for each of the $16$ facets, it is easy to see the facets have the same intersection in pairs, so that the covariant form of degree $1$ associated to the child polytope $(\mathbb{P}^4,\,\mathcal{P}_{\mathcal{H}_1})$ have $8$ double poles, each corresponding to a facet of $(\mathbb{P}^4,\,\mathcal{P}_{\mathcal{H}_1})$. Notice further that the parent polytope contains {\it all} the facets of the child polytope as codimension-$2$ faces, relating in this way the residue of the canonical form of the parent polytope along the codimension-$2$ faces to the leading Laurent coefficient along the boundaries of the child polytope. This is readly seen by comparing the vertex structure of, for example, the codimension-$2$ face of the parent polytope defined by the conditions $\langle\mathcal{Y}23579\rangle\,=\,0\,=\,\langle\mathcal{Y}23589\rangle$, and the vertex structure of the facet of the child polytope identified by $\langle\mathcal{Y}_{\mathcal{H}_1}2359\rangle\,=\,0$
\begin{equation*}
     \begin{tikzpicture}[ball/.style = {circle, draw, align=center, anchor=north, inner sep=0}, cross/.style={cross out, draw, minimum size=2*(#1-\pgflinewidth), inner sep=0pt, outer sep=0pt}]
      \begin{scope}[scale={.75}, transform shape]
      \coordinate (LC) at (0,0);
      \coordinate[label=right:{$x_1$}] (x1) at (-1.25cm,0);
      \coordinate[label=left:{$x_2$}] (x2) at (+1.25cm,0);
      \coordinate[label=above:{$y_{12}$}] (y12) at ($(LC)+(0,1.25cm)$);
      \coordinate[label=below:{$y_{21}$}] (y21) at ($(LC)-(0,1.25cm)$);
      \coordinate (t1) at ($(x1)-(.325cm,0)$);
      \coordinate (t2) at ($(x2)+(.325cm,0)$);
      \coordinate[label=left:{$h_1$}] (h1) at ($(t1)-(.325cm,0)$);
      \coordinate[label=right:{$h_2$}] (h2) at ($(t2)+(.325cm,0)$);
      \draw[very thick] (LC) circle (1.25cm);
      \draw[very thick] (t1) circle (.325cm);
      \draw[very thick] (t2) circle (.325cm);
      \draw[fill] (x1) circle (2pt);
      \draw[fill] (x2) circle (2pt); 
      \coordinate (c1) at ($(LC)+(0,1.25cm)$);
      \coordinate (c2) at ($(LC)-(0,1.25cm)$);
      \coordinate (c3) at ($(x1)+(-.125,+.175)$);
      \coordinate (c4) at ($(x1)+(-.125,-.175)$);
      \coordinate (c5) at ($(x2)+(+.125,+.175)$);
      \coordinate (c6) at ($(x2)+(+.125,-.175)$);
      \coordinate (c7) at ($(t1)-(.325cm,0)$);
      \node[ultra thick, cross=6pt, rotate=0, color=blue] at (c1) {};
      \node[ultra thick, cross=6pt, rotate=0, color=blue] at (c2) {};
      \node[ultra thick, cross=6pt, rotate=0, color=blue] at (c3) {};
      \node[ultra thick, cross=6pt, rotate=0, color=blue] at (c4) {};
      \node[ultra thick, cross=6pt, rotate=0, color=blue] at (c5) {};
      \node[ultra thick, cross=6pt, rotate=0, color=blue] at (c6) {};
      \node[ultra thick, cross=6pt, rotate=0, color=blue] at (c7) {};
      \coordinate (HP) at ($(LC)-(0,2.25cm)$);
      \node[align=center] (eqH) at (HP) {$\displaystyle\langle\mathcal{Y}23579\rangle\,=\,0$ \\
                                         $\displaystyle\langle\mathcal{Y}23589\rangle\,=\,0$};
     \end{scope}
     \begin{scope}[shift={(4,0)}, scale={.75}, transform shape]
      \coordinate (LC) at (0,0);
      \coordinate[label=right:{$x_1$}] (x1) at (-1.25cm,0);
      \coordinate[label=left:{$x_2$}] (x2) at (+1.25cm,0);
      \coordinate[label=above:{$y_{12}$}] (y12) at ($(LC)+(0,1.25cm)$);
      \coordinate[label=below:{$y_{21}$}] (y21) at ($(LC)-(0,1.25cm)$);
      \coordinate (t1) at ($(x1)-(.325cm,0)$);
      \coordinate (t2) at ($(x2)+(.325cm,0)$);
      \coordinate[label=right:{$h_2$}] (h2) at ($(t2)+(.325cm,0)$);
  
      \draw[very thick] (LC) circle (1.25cm);
      \draw[very thick] (t2) circle (.325cm);
      \draw[fill] (x1) circle (2pt);
      \draw[fill] (x2) circle (2pt);
      
      \coordinate (c1) at ($(LC)+(0,1.25cm)$);
      \coordinate (c2) at ($(LC)-(0,1.25cm)$);
      \coordinate (c3) at ($(x2)+(+.125,+.175)$);
      \coordinate (c4) at ($(x2)+(+.125,-.175)$);
      \node[ultra thick, cross=6pt, rotate=0, color=blue] at (c1) {};
      \node[ultra thick, cross=6pt, rotate=0, color=blue] at (c2) {};
      \node[ultra thick, cross=6pt, rotate=0, color=blue] at (c3) {};
      \node[ultra thick, cross=6pt, rotate=0, color=blue] at (c4) {};
      \coordinate (HP) at ($(LC)-(0,2.25cm)$);
      \node[align=center] (eqH) at (HP) {$\displaystyle\langle\mathcal{Y}_{\mathcal{H}_1}2359\rangle\,=\,0$};
     \end{scope}
    \end{tikzpicture}
\end{equation*}

Recall that the marking singles out the vertices which are {\it not} on the face and, consequently, the codimension-$2$ face of the parent polytope and the facet of the child polytope are the same. Notice also that in local coordinates the two conditions defining the codimension-$2$ face of the parent polytope write $x_1+x_2+2h_1+2h_2\,=\,0\,=\,x_1+x_2+2h_2$, which also imply $h_1\,=\,0$ the defining condition for the hyperplane $\mathcal{H}_1$.

\begin{wrapfigure}{l}{4cm}
 \centering
 \begin{tikzpicture}[line join = round, line cap = round, ball/.style = {circle, draw, align=center, anchor=north, inner sep=0}, 
                     cross/.style={cross out, draw, minimum size=2*(#1-\pgflinewidth), inner sep=0pt, outer sep=0pt}, every node/.style={color=black}]
 \begin{scope}[scale={.75}, transform shape]
  \coordinate (LC) at (0,0);
  \coordinate[label=left:{$x_1$}] (x1) at (-1.25cm,0);
  \coordinate[label=right:{$x_2$}] (x2) at (+1.25cm,0);
  \coordinate[label=above:{$y_{12}$}] (y12) at ($(LC)+(0,1.25cm)$);
  \coordinate[label=below:{$y_{21}$}] (y21) at ($(LC)-(0,1.25cm)$);
  
  \draw[very thick] (LC) circle (1.25cm);
  \draw[fill] (x1) circle (2pt);
  \draw[fill] (x2) circle (2pt);
 \end{scope}
 \end{tikzpicture}
\end{wrapfigure}

Let us now turn to the restriction onto $\mathcal{H}_{12}$. Being a codimension-two hyperplane, the covariant form obtained from the restriction has degree-$2$, with at most poles of multiplicity $3$. Again, it is straightforward to predict the child polytope $(\mathbb{P}^3,\,\mathcal{H}_{12})$, with $\mathcal{P}_{\mathcal{H}_{12}}\,:=\,\mathcal{P}\cap\mathcal{H}_{12}$: the vertices $Z_7,\,Z_8,\,Z_9,\,Z_{19}$ are not on $\mathcal{H}_{12}$. Hence, $\mathcal{P}_{\mathcal{H}_{12}}$ is the convex hull of the vertices of two triangles intersecting each other in both their midpoints of their two intersectable facets, {\it i.e.} it is a truncated tetrahedron in $\mathbb{P}^3$ (see Figure \ref{fig:CP}), and its associated graph is one-loop two site graph. Furthermore, notice that the four facets in the first three lines in the list above intersect the hyperplane $\mathcal{H}_{12}$ in the same codimension-$3$ hyperplane, which is a facet of the child polytope, while the facets in the last line intersect it in the same hyperplane in pairs. So, one would expect the covariant form of degree-$2$ associated to the child polytope to have three poles of multipliticity $4$ and two double poles. However, a covariant form of degree-$2$ can have at most poles with multiplicity $3$! Recall that the multiplicity of the pole of the covariant form of the child polytope is also given by the codimension of the face matching a facet of the child polytope. For the sake of concreteness, let us consider the following facet for the child polytope

\begin{equation*}
 \begin{tikzpicture}[line join = round, line cap = round, ball/.style = {circle, draw, align=center, anchor=north, inner sep=0}, 
                     cross/.style={cross out, draw, minimum size=2*(#1-\pgflinewidth), inner sep=0pt, outer sep=0pt}, every node/.style={color=black}]
  \begin{scope}[scale={.75}, transform shape]
   \coordinate (LC) at (0,0);
   \coordinate[label=left:{$x_1$}] (x1) at (-1.25cm,0);
   \coordinate[label=right:{$x_2$}] (x2) at (+1.25cm,0);
   \coordinate[label=above:{$y_{12}$}] (y12) at ($(LC)+(0,1.25cm)$);
   \coordinate[label=below:{$y_{21}$}] (y21) at ($(LC)-(0,1.25cm)$);
  
   \draw[very thick] (LC) circle (1.25cm);
   \draw[fill] (x1) circle (2pt);
   \draw[fill] (x2) circle (2pt);
  
   \coordinate (c1) at ($(LC)+(0,1.25cm)$);
   \coordinate (c2) at ($(LC)-(0,1.25cm)$);
   \node[ultra thick, cross=6pt, rotate=0, color=blue] at (c1) {};
   \node[ultra thick, cross=6pt, rotate=0, color=blue] at (c2) {};
  \end{scope}
 \end{tikzpicture}
\end{equation*}
which corresponds, in local coordinate, to the facet $x_1+x_2\,=\,0$. Now we should ask the question which higher codimension face of the parent polytope has only the vertices of such a facet. Looking at all the facets of the parent polytope listed above, it is easy to see that the higher codimension face we are looking for is contained in the following four facets

\begin{equation*}\label{eq:CPfct}
    \begin{tikzpicture}[ball/.style = {circle, draw, align=center, anchor=north, inner sep=0}, cross/.style={cross out, draw, minimum size=2*(#1-\pgflinewidth), inner sep=0pt, outer sep=0pt}]
     \begin{scope}[scale={.675}, transform shape]
      \coordinate (LC) at (0,0);
      \coordinate[label=right:{$x_1$}] (x1) at (-1.25cm,0);
      \coordinate[label=left:{$x_2$}] (x2) at (+1.25cm,0);
      \coordinate[label=above:{$y_{12}$}] (y12) at ($(LC)+(0,1.25cm)$);
      \coordinate[label=below:{$y_{21}$}] (y21) at ($(LC)-(0,1.25cm)$);
      \coordinate (t1) at ($(x1)-(.325cm,0)$);
      \coordinate (t2) at ($(x2)+(.325cm,0)$);
      \coordinate[label=left:{$h_1$}] (h1) at ($(t1)-(.325cm,0)$);
      \coordinate[label=right:{$h_2$}] (h2) at ($(t2)+(.325cm,0)$);
      \draw[very thick] (LC) circle (1.25cm);
      \draw[very thick] (t1) circle (.325cm);
      \draw[very thick] (t2) circle (.325cm);
      \draw[fill] (x1) circle (2pt);
      \draw[fill] (x2) circle (2pt); 
      \coordinate (c1) at ($(LC)+(0,1.25cm)$);
      \coordinate (c2) at ($(LC)-(0,1.25cm)$);
      \coordinate (c3) at ($(t1)-(.325cm,0)$);
      \coordinate (c4) at ($(t2)+(.325cm,0)$);
      \node[ultra thick, cross=6pt, rotate=0, color=blue] at (c1) {};
      \node[ultra thick, cross=6pt, rotate=0, color=blue] at (c2) {};
      \node[ultra thick, cross=6pt, rotate=0, color=blue] at (c3) {};
      \node[ultra thick, cross=6pt, rotate=0, color=blue] at (c4) {};
      \coordinate (HP) at ($(LC)-(0,2.25cm)$);
      \node[align=center] (eqH) at (HP) {$\displaystyle\langle\mathcal{Y}2358(10)\rangle\,=\,0$ \\
                                         $\displaystyle(x_1+x_2\,=\,0)$};
     \end{scope}
     \begin{scope}[shift={(4,0)}, scale={.675}, transform shape]
      \coordinate (LC) at (0,0);
      \coordinate[label=right:{$x_1$}] (x1) at (-1.25cm,0);
      \coordinate[label=left:{$x_2$}] (x2) at (+1.25cm,0);
      \coordinate[label=above:{$y_{12}$}] (y12) at ($(LC)+(0,1.25cm)$);
      \coordinate[label=below:{$y_{21}$}] (y21) at ($(LC)-(0,1.25cm)$);
      \coordinate (t1) at ($(x1)-(.325cm,0)$);
      \coordinate (t2) at ($(x2)+(.325cm,0)$);
      \coordinate[label=left:{$h_1$}] (h1) at ($(t1)-(.325cm,0)$);
      \coordinate[label=right:{$h_2$}] (h2) at ($(t2)+(.325cm,0)$);
      \draw[very thick] (LC) circle (1.25cm);
      \draw[very thick] (t1) circle (.325cm);
      \draw[very thick] (t2) circle (.325cm);
      \draw[fill] (x1) circle (2pt);
      \draw[fill] (x2) circle (2pt); 
      \coordinate (c1) at ($(LC)+(0,1.25cm)$);
      \coordinate (c2) at ($(LC)-(0,1.25cm)$);
      \coordinate (c3) at ($(x1)+(-.125,+.175)$);
      \coordinate (c4) at ($(x1)+(-.125,-.175)$);
      \coordinate (c5) at ($(x2)+(+.125,+.175)$);
      \coordinate (c6) at ($(x2)+(+.125,-.175)$);
      \node[ultra thick, cross=6pt, rotate=0, color=blue] at (c1) {};
      \node[ultra thick, cross=6pt, rotate=0, color=blue] at (c2) {};
      \node[ultra thick, cross=6pt, rotate=0, color=blue] at (c3) {};
      \node[ultra thick, cross=6pt, rotate=0, color=blue] at (c4) {};
      \node[ultra thick, cross=6pt, rotate=0, color=blue] at (c5) {};
      \node[ultra thick, cross=6pt, rotate=0, color=blue] at (c6) {};
      \coordinate (HP) at ($(LC)-(0,2.25cm)$);
      \node[align=center] (eqH) at (HP) {$\displaystyle\langle\mathcal{Y}23579\rangle\,=\,0$ \\
                                         $\displaystyle(x_1+x_2+2h_1+2h_2\,=\,0)$};
     \end{scope}
     \begin{scope}[shift={(8,0)}, scale={.675}, transform shape]
      \coordinate (LC) at (0,0);
      \coordinate[label=right:{$x_1$}] (x1) at (-1.25cm,0);
      \coordinate[label=left:{$x_2$}] (x2) at (+1.25cm,0);
      \coordinate[label=above:{$y_{12}$}] (y12) at ($(LC)+(0,1.25cm)$);
      \coordinate[label=below:{$y_{21}$}] (y21) at ($(LC)-(0,1.25cm)$);
      \coordinate (t1) at ($(x1)-(.325cm,0)$);
      \coordinate (t2) at ($(x2)+(.325cm,0)$);
      \coordinate[label=left:{$h_1$}] (h1) at ($(t1)-(.325cm,0)$);
      \coordinate[label=right:{$h_2$}] (h2) at ($(t2)+(.325cm,0)$);
      \draw[very thick] (LC) circle (1.25cm);
      \draw[very thick] (t1) circle (.325cm);
      \draw[very thick] (t2) circle (.325cm);
      \draw[fill] (x1) circle (2pt);
      \draw[fill] (x2) circle (2pt); 
      \coordinate (c1) at ($(LC)+(0,1.25cm)$);
      \coordinate (c2) at ($(LC)-(0,1.25cm)$);
      \coordinate (c3) at ($(x1)+(-.125,+.175)$);
      \coordinate (c4) at ($(x1)+(-.125,-.175)$);
      \coordinate (c5) at ($(t2)+(.325cm,0)$);
      \node[ultra thick, cross=6pt, rotate=0, color=blue] at (c1) {};
      \node[ultra thick, cross=6pt, rotate=0, color=blue] at (c2) {};
      \node[ultra thick, cross=6pt, rotate=0, color=blue] at (c3) {};
      \node[ultra thick, cross=6pt, rotate=0, color=blue] at (c4) {};
      \node[ultra thick, cross=6pt, rotate=0, color=blue] at (c5) {};
      \coordinate (HP) at ($(LC)-(0,2.25cm)$);
      \node[align=center] (eqH) at (HP) {$\displaystyle\langle\mathcal{Y}2357(10)\rangle\,=\,0$ \\
                                         $\displaystyle(x_1+x_2+2h_1\,=\,0)$};
     \end{scope}
     \begin{scope}[shift={(12,0)}, scale={.675}, transform shape]
      \coordinate (LC) at (0,0);
      \coordinate[label=right:{$x_1$}] (x1) at (-1.25cm,0);
      \coordinate[label=left:{$x_2$}] (x2) at (+1.25cm,0);
      \coordinate[label=above:{$y_{12}$}] (y12) at ($(LC)+(0,1.25cm)$);
      \coordinate[label=below:{$y_{21}$}] (y21) at ($(LC)-(0,1.25cm)$);
      \coordinate (t1) at ($(x1)-(.325cm,0)$);
      \coordinate (t2) at ($(x2)+(.325cm,0)$);
      \coordinate[label=left:{$h_1$}] (h1) at ($(t1)-(.325cm,0)$);
      \coordinate[label=right:{$h_2$}] (h2) at ($(t2)+(.325cm,0)$);
      \draw[very thick] (LC) circle (1.25cm);
      \draw[very thick] (t1) circle (.325cm);
      \draw[very thick] (t2) circle (.325cm);
      \draw[fill] (x1) circle (2pt);
      \draw[fill] (x2) circle (2pt); 
      \coordinate (c1) at ($(LC)+(0,1.25cm)$);
      \coordinate (c2) at ($(LC)-(0,1.25cm)$);
      \coordinate (c3) at ($(t1)-(.325cm,0)$);
      \coordinate (c4) at ($(x2)+(+.125,+.175)$);
      \coordinate (c5) at ($(x2)+(+.125,-.175)$);
      \node[ultra thick, cross=6pt, rotate=0, color=blue] at (c1) {};
      \node[ultra thick, cross=6pt, rotate=0, color=blue] at (c2) {};
      \node[ultra thick, cross=6pt, rotate=0, color=blue] at (c3) {};
      \node[ultra thick, cross=6pt, rotate=0, color=blue] at (c4) {};
      \node[ultra thick, cross=6pt, rotate=0, color=blue] at (c5) {};
      \coordinate (HP) at ($(LC)-(0,2.25cm)$);
      \node[align=center] (eqH) at (HP) {$\displaystyle\langle\mathcal{Y}23589\rangle\,=\,0$ \\
                                         $\displaystyle(x_1+x_2+2h_2\,=\,0)$};
     \end{scope}
    \end{tikzpicture}
\end{equation*}
Such for facets of the parent polytope are exactly the ones which intersect $\mathcal{H}_{12}$ in the same subspace. Now, in order to extract a codimension-$l$ face, we need to check which $l$ facets have enough vertices in common to span $\mathbb{P}^{5-l}$ and these vertices are precisely the ones whose convex hull is precisely the facet of the child polytope we are interested in. In principle, we find the desired vertex configuration taking three possible intersections among the four facets listed above: we can take the first two facets; the first, the third and the fourth; or the second, the third and the fourth. In the first case, the face would be of codimension-$2$ and in the other two cases it would be of codimension-$3$. Are all these intersection actually possible? Let us check whether the common vertices are enough to span $\mathbb{P}^3$ in the first case, and $\mathbb{P}^2$ in the other two. Given that we are looking at a specific vertex configuration, the vertices are the same in all three cases and are given by
\begin{equation*}
    \{\mathbf{x}_1+\mathbf{y}_{12}-\mathbf{x}_2,\:-\mathbf{x}_1+\mathbf{y}_{12}+\mathbf{x}_2,\:
      \mathbf{x}_1+\mathbf{y}_{21}-\mathbf{x}_2,\:-\mathbf{x}_1+\mathbf{y}_{21}+\mathbf{x}_2\}.
\end{equation*}
Importantly, they are not linearly independent and they lie on a $2$-plane. Hence, one has $3$ linearly independent vertices, which indeed can span $\mathbb{P}^2$ but they {\it cannot} span $\mathbb{P}^3$. Hence, the first two facets of the four of the parent polytope listed above {\it do not} intersect with each other, which means that when we take the residue of the canonical form with respect a pole related to any of these two facets, the other pole become spurious ({\it i.e.} the numerator develops a zero which cancel it). Thus, the facet of the child polytope of interest corresponds to a codimension-$3$ face of the parent polytope: the pole of the covariant form associated to the child polytope has a pole of multiplicity three along this facet, matching the expectations. Hence, the canonical form of the parent polytope develops a simple zero at the location of the pole on the covariant restriction onto $\mathcal{H}_{12}$ which lowers the multiplicity of the pole to $3$.

\section{Conclusions and Outlook}\label{sec:Concl}

In this paper we started to scratch the surface of a combinatorial and geometrical characterisation of differential forms with non-logarithmic singularities, whose understanding is crucial in physics as they describe scattering amplitudes in flat space-time and the wavefunction of the universe in cosmology. 

Specifically, we characterised meromorphic differential forms with multiple pole by relating them to projective polytopes via the notion of covariant forms and covariant pairings. Covariant forms are meromorphic differential forms with multiple poles and a certain $GL(1)$-scaling. Their distinctive feature is to have multiple poles only along the boundaries of the associated projective polytope such that its leading Laurent coefficient along any of the boundaries is a differential form associated to the relevant boundary of the projective polytope enjoying this same feature. The covariant pairing instead associates a meromorphic differential form with multiple poles to a polytope, with the differential form having poles along the boundaries of a certain signed triangulation of the polytope.  This includes those subsets of boundaries which sign-triangulate the empty set, with the special feature that the multiplicity of the poles related to such subsets is lowered upon summation. The form is expressed as the sum of covariant forms associated to the elements of the signed triangulation.

Contrarily to what happens for canonical forms, which are in a $1-1$ correspondence with a positive geometry, given a polytope there is a  full class of covariant forms and differential forms which can be in covariant pairing with it. Hence the geometry and combinatorics of the polytope does not determine completely these meromorphic forms with multiple poles. However, a complete geometrical and combinatorial characterisation of both covariant forms and forms in covariant pairing with a given polytope is possible if we think of this polytope as obtained as a restriction of a higher dimensional polytope onto a certain hyperplane. In the paper we named the higher dimensional polytope as {\it parent polytope}, and the polytope obtained as its restriction onto a hyperplane as {\it child polytope} with respect to that hyperplane. Then, a meromorphic differential form associated to the child polytope can be obtained as {\it covariant restriction} of the canonical form of the parent polytope, {\it i.e.} it is the leading Laurent coefficient (which is of order zero) of the canonical form of the parent polytope along the chosen hyperplane. If the hyperplane intersects the parent polytope only inside, then the differential form obtained as covariant restriction is a covariant form of the child polytope, while if the facets of the parent polytope intersect the hyperplane also outside, then it is in covariant pairing with the child polytope having poles along boundaries outside of the child polytope. Interestingly, this picture also provides a geometrical interpretation for the multiplicity of each pole, which is given by the number of facets of the parent polytope intersecting the hyperplane in the same subspace minus the multiplicity of the zero where this subspace were to be on the hypersurface determining the zeroes of the canonical form of the parent polytope. For covariant forms, this latter situation cannot occur given that the intersection between parent polytope and hyperplane is inside the parent polytope.

We have seen how differential forms obtained as restrictions from the canonical form of a parent polytope (a simplex) can be used to triangulate a given polytope. 
For general projective polytopes, we know that their canonical function can be computed applying the operation of Jeffrey-Kirwan residue to a differential form. This differential form turns out to be the restriction of the canonical form of a simplex onto the hyperplanes identified by the fibers of the original polytope (seen as a projection from the simplex). The form is not the canonical form of the fiber, but it is in covariant pairing with it. In \cite{moh:2020}, these type of forms will also be defined in the context of objects which are more general than polytopes, such as the \emph{amplituhedra}.

For cosmological polytopes generated as convex hull of the vertices of triangles and segments intersected in their midpoints, there are natural hyperplanes onto which perform the covariant restriction of their canonical form. these are such that the differential form obtained is a covariant form encoding the wavefunction of the universe for certain massive scalar states as well as massless ones in FRW cosmologies in arbitrary dimensions. Curiously, these special hyperplanes relate the covariant form associated to the child polytope obtained as covariant restriction on them, to the canonical form of the child polytope itself. The covariant form of the child polytope can be obtained from the action of a differential operator onto its canonical form, with the order of the derivative operator given by the codimension of the hyperplane where the parent polytope is restricted to give the child polytope. In other words, the canonical coefficient of the parent polytope is the Newton's difference quotient of the canonical form of the child polytope. This point raises the more general question of which differential operators can be thought of in this polytope picture. Beyond having mathematical interest, this question is also physically motivated. The parent-child polytope relation that we observed in cosmological polytopes is the geometrical-combinatorial realisation of a relation between wavefunctions of the universe with different propagating states via a very simple differential operator \cite{Benincasa:2019vqr}. However, this relation can be generalised to enlarge the type of propagating states in the wavefunction, but involves a more complicated differential operator \cite{Benincasa:2019vqr}. Classifying which covariant restrictions can be interpreted as derivative operators and which derivative operators have a geometrical-combinatorial picture in terms of polytopes is then crucial to have a geometrical-combinatorial picture for more general wavefunctions.

As mentioned, we explored a very little corner of the relation between positive geometries and differential forms with non-logarithmic singularities. From a physics perspective, non-logarithmic singularities naturally appear in the context of less supersymmetric Yang-Mills theories \cite{Benincasa:2015zna, Benincasa:2016awv} and gravity \cite{Herrmann:2016qea, Heslop:2016plj} for which in both cases a Grassmannian picture is present but it is neither fully understood nor it has been characterised in terms of any positive geometry. A generalisation of our ideas to the Grassmannian has the potential to fill this gap. 
In this direction, as projective polytopes can be obtained as the image of simplices under a map induced by a fixed matrix, see Section \ref{subsec:jkex}, one can consider the image of the positive part of Grassmannians (or more in general, of their cells, or of partial flags etc.) under similar maps. In this sense, \emph{amplituhedra} \cite{Arkani-Hamed:2013jha}, \emph{Grassmann Polytopes} \cite{Lam:grasspol}, \emph{Momentum Amplituhedra} \cite{Damgaard:2019ztj, Lukowski:2020dpn}, etc. are all natural (but highly non-trivial, and non-linear) generalisations of projective polytopes.
The next direction would then be to generalise our framework for these type of geometries as well. In particular, it would be interesting to explore the constructions of parent and child positive geometries, the corresponding covariant restrictions on hypersurfaces, the geometric-combinatorial description of the resulting poles structure and their Laurent leading coefficients.

In summary, the need to tame non-logarithmic singularities comes not only from the mathematical quest of providing a natural generalisation of the framework of positive geometries, but it also stems on the evidence that non-logarithmic singularities enters any attempt to geometrise physical observables in full generality.

\section*{Acknowledgements}

P.B. would also like to thank the developers of SageMath \cite{sagemath}, Maxima \cite{maxima}, and Tikz \cite{tantau:2013a}. Some of the polytope analysis has been performed with the aid of {\tt polymake} \cite{polymake:2000} and {\tt TOPCOM} \cite{Rambau:TOPCOM_ICMS_2002}. P.B. is supported in part by a grant from the Villum Fonden, an ERC-StG grant (N. 757978) and the Danish National Research Foundation (DNRF91). M.P. would like to acknowledge the support of the ERC grant number 724638. M.P would like to thank the Niels Bohr Institute where this work initiated for its hospitality. M.P. would also like to thank F. Mohammadi and L. Monin for stimulating discussions. 

\bibliographystyle{nb}

\bibliography{nonlogsing}

\end{document}